\documentclass[manyauthors]{fundam}

 \pdfoutput=1

%%%%%%%  parameters to be filled in by copy-editor  %%%%%%%%%%

\setcounter{page}{261}
\publyear{24}
\papernumber{2195}
\volume{192}
\issue{3-4}

\finalVersionForARXIV
%\finalVersionForIOS

%%%%%%%%%%%%%%%%%%%%%%%%%%%%%%%%%%%%%%

\usepackage[right]{lineno}

\usepackage{packages}
\usepackage{macros}
\usepackage{cleveref}

\begin{document}

\title{A Rewriting-logic-with-SMT-based  Formal Analysis and Parameter
        Synthesis Framework for Parametric Time Petri Nets}

\author{Jaime Arias\\
    LIPN, CNRS UMR 7030 \\
    Universit\'e Sorbonne Paris Nord, France \\
    jaime.arias@lipn.univ-paris13.fr
  \and
  Kyungmin Bae \\
    Pohang University of Science and Technology\\
    South Korea \\
    kmbae@postech.ac.kr
  \and
  Carlos Olarte\thanks{Address for correspondence:  LIPN, CNRS UMR 7030, Universit\'e Sorbonne Paris Nord, France.}
    \\
    LIPN, CNRS UMR 7030 \\
    Universit\'e Sorbonne Paris Nord, France \\
    carlos.olarte@lipn.univ-paris13.fr
  \and
  Peter Csaba {\"O}lveczky \\
    University of Oslo, Norway \\
    peterol@ifi.uio.no
  \and
  Laure Petrucci \\
    LIPN, CNRS UMR 7030 \\
    Universit\'e Sorbonne Paris Nord, France \\
    laure.petrucci@lipn.univ-paris13.fr
}

\maketitle

 \runninghead{J. Arias et al.}{Formal Analysis and Parameter Synthesis for Time Petri Nets}

\vspace*{-6mm}
\begin{abstract}
 This paper presents
a concrete and a symbolic rewriting
logic semantics for
\emph{parametric time Petri nets with inhibitor arcs} (PITPNs), a
   flexible model   of timed systems  where
 parameters are allowed in firing bounds.
We prove that our semantics is bisimilar to the ``standard'' semantics
of PITPNs.
This allows us to use
% Our symbolic semantics  makes it possible  to use the many formal
%  analysis methods provided by
 the rewriting logic tool Maude,
combined with SMT solving,  to
provide sound and complete formal analyses for PITPNs.
We develop and implement a new general folding
 approach for symbolic reachability, so  that Maude-with-SMT
 reachability analysis
terminates whenever the parametric state-class graph of the PITPN is
 finite.
 Our work opens up
 the possibility of using the many formal analysis capabilities of
 Maude---including full LTL model checking,
 analysis with user-defined execution strategies,
  and even statistical model
checking---for such nets.
We illustrate this by explaining how almost all formal analysis and
parameter synthesis methods
supported by the state-of-the-art PITPN tool
\romeo{} can be performed using  Maude with SMT.  In addition,  we also support
 analysis and parameter
synthesis from \emph{parametric} initial markings, as well as  full
LTL model
checking and analysis  with
user-defined execution strategies.
Experiments  show that our methods outperform
\romeo{} in many~cases.
\end{abstract}
\begin{keywords}
parametric timed Petri nets,  %semantics,
rewriting logic, Maude,
 SMT, parameter synthesis, symbolic reachability analysis
\end{keywords}

%!TEX root=./main.tex

\section{Introduction}\label{sec:intro}

Time Petri nets \cite{Merlin74,DBLP:reference/crc/VernadatB07} extend Petri nets to real-time systems
  by adding time intervals to transitions. % However, ...''  (Petri nets
% are sufficiently well established so that we do not need to justify
% why we study Petri nets.)}
% % They extend Petri nets with action
%                                 % delays, thus enhancing
% % compact models of concurrent systems with time constraints.
% In time Petri nets, firing conditions are given as time
% intervals within which an enabled transition must fire.
However, in system design we often do not  know in advance the concrete
values of key system parameters, and want to find
those parameter values  that make the system behave as desired. \emph{Parametric
  time Petri nets with inhibitor arcs}
(PITPNs)~\cite{paris-paper,DBLP:conf/formats/GrabiecTJLR10,EAGPLP13,DBLP:journals/fuin/LimeRS21}
extend time Petri
nets to the setting  where  bounds on
when transitions can fire are unknown or only partially
known.

\medskip
The modeling and formal analysis of PITPNs---including synthesizing the
values of the parameters which make the system satisfy desired
properties---are supported by the state-of-the-art tool
\romeo~\cite{romeo}, which has been applied to a number of
applications,  including oscillatory biological
systems~\cite{romeo-biology}, aerial video tracking
systems~\cite{romeo-aerial}, and distributed software
commissioning~\cite{romeo-software}.
% \po{Any other applications we
%   should mention?  These are listed on their web site.}
% \lp{for parametric version, I don't know anything else. For non-parametric and also
% Timed colored PNs, I can find things done with Tina or with CPN-tools, but we don't
% use/talk much about these.}
\romeo{}
supports the analysis and parameter synthesis for reachability (is a
certain marking reachable?), liveness (will a certain marking be
reached in all behaviors?), time-bounded ``until,'' and bounded
response (will each $P$-marking
be followed by a $Q$-marking within time $\Delta$?), all from \emph{concrete}
initial markings.
\romeo{} does  not support a number of desired features,
including:
\begin{itemize}
  \item Broader set of system properties,  \eg{}  full (\ie{}
    nested) temporal logic properties.
\item  Starting  with \emph{parametric} initial
  markings and synthesizing also the initial markings that make the
  system satisfy desired properties.
 \item Analysis with user-defined execution strategies. For
      example, what  happens if I always choose to fire
      transition $t$ instead of $t'$ when they are both enabled at the
      same time?
       It is often possible to  \emph{manually
      change the model} to analyze the system under such scenarios, but
      this is arduous and error-prone.
  \item Providing a ``testbed''
  for PITPNs in which different analysis methods and algorithms can quickly be
     developed, tested, and evaluated.
     This is not well
    supported by \romeo, which is a high-performance tool with
    dedicated algorithms implemented in \texttt{C++}.
    % \item \po{Anything else to add the list of desiderata that we
   %     deliver on?  Clocks in state props are too insignificant to
   %     mention, especially since clocks (unlike in the PTA case) are
   %     NOT part of the model itself.}
   \end{itemize}

   Rewriting   logic~\cite{Mes92,20-years}---supported by the Maude language and
    tool~\cite{maude-book}, and by Real-Time
   Maude~\cite{tacas08,wrla14} for real-time systems---is an
   expressive logic for
   distributed and real-time systems. In rewriting logic, any
   computable data type can be specified as an (algebraic)  equational
specification, and  the dynamic behaviors of a system are
specified by rewriting rules over terms (representing states). Because
of its expressiveness, Real-Time Maude has been successfully applied
to a number of large and sophisticated real-time systems---including 50-page
active networks and IETF protocols~\cite{aer-journ,sefm09},
state-of-the-art wireless sensor network algorithms involving areas,
angles, etc.~\cite{wsn-tcs}, scheduling algorithms with unbounded
queues~\cite{fase06}, airplane turning
algorithms~\cite{airplane-journ}, large cloud-based transaction
systems~\cite{kokichi,sefm14}, mobile ad hoc
networks~\cite{manets-journ}, human
multitasking~\cite{giovanna-journ}, and so on---beyond the scope of
most popular formalisms for real-time systems such as timed automata
and PITPNs.  Its
expressiveness has also made Real-Time Maude a useful semantic framework
and formal analysis backend for (subsets of) industrial
modeling
languages (e.g.,~\cite{fmoods10,musab-l,ptolemy-journ,fase12,carolyn}).

This expressiveness comes at a price: most analysis problems
are undecidable in general. Real-Time Maude uses
explicit-state analysis  where  only
\emph{some} points in time are visited.  All possible system
behaviors are therefore \emph{not} analyzed (for dense time domains),
and hence the  analysis is  unsound
in many cases~\cite{wrla06}.

This paper exploits the integration of SMT solving into Maude
  to address the
first problem above (more features for PITPNs)  and to take the second step
towards addressing the second problem (developing sound and complete
analysis methods for rewriting-logic-based real-time systems).

Maude combined with SMT solving, \eg{} as implemented in the Maude-SE
tool~\cite{maude-se},
allows us to perform \emph{symbolic rewriting}
of ``states'' $\phi\; ||\; t$, where the term $t$ is a state pattern
that
contains variables, and $\phi$ is an SMT constraint restricting
the possible values of those variables.

After  giving some necessary background to PITPNs,  rewriting
logic, and Maude-with-SMT  in Section~\ref{sec:prelim}, we provide in
Section~\ref{sec:concrete}   a ``concrete'' rewriting
logic semantics for (instantiated) PITPNs
% \po{Even though one contribution is ``thereby clarifying the semantics
%   of such nets'', we should probably not write that here?}
% \lp{I agree with not writing that, maybe use "precise" for the RL model?}
in ``Real-Time Maude style''~\cite{rtm-journ}. In a dense-time setting,
such as for PITPNs, this model is not executable.
Section~\ref{sec:concrete-ex} shows how we can do
(in general unsound) time-sampling-based analysis where time
increases in discrete steps, of concrete nets, to quickly experiment
with different  values for the parameters. Additionally,
we show how to perform full LTL model checking on these models.

Section~\ref{sec:sym} gives a ``symbolic'' rewriting logic semantics for
\emph{parametric} PITPNs,  and shows how
to perform (sound) symbolic analysis of such nets using Maude-with-SMT.
However, existing symbolic reachability analysis methods,
including  ``folding'' of symbolic states, may fail to terminate even
when the state class graph of the PITPN is finite (and
hence \romeo{} analyses terminate). We therefore develop and
implement a new  method for ``folding'' symbolic states for
reachability analysis in Maude-with-SMT, and show that this new
reachability analysis method terminates whenever the state class graph
of the PITPN is finite.

In Sections~\ref{sec:sym} and \ref{sec:analysis} we show how a range of
 formal analyses and parameter synthesis can be performed with
Maude-with-SMT, including unbounded and time-bounded reachability
analysis. We  show in Section~\ref{sec:analysis} how all analysis
methods supported by \romeo---with one  exception:
the time bounds in some temporal formulas cannot be parameters---also
can be performed in Maude-with-SMT. In addition, we support state
properties on both markings and ``transition clocks,'' analysis and
parameter synthesis for \emph{parametric} initial markings, model
checking full (\ie{} nested) temporal logic formulas, and
analysis w.r.t.\ user-defined execution strategies, as illustrated in
Section~\ref{sec:analysis}.
Our methods are formalized/implemented in Maude itself, using Maude's
meta-programming features. This makes it very easy to
develop and prototype new analysis methods for PITPNs.

This work also constitutes the second step in our quest to develop
sound and complete formal analysis methods for dense-time real-time
systems in Real-Time Maude. One  reason
for presenting both a ``standard'' Real-Time Maude-style concrete
semantics in Section~\ref{sec:concrete} \emph{and} the symbolic
semantics in Section~\ref{sec:sym} is to explore how we can transform
Real-Time Maude models into Maude-with-SMT models for symbolic
analysis.
In our first step in this quest, we studied symbolic rewrite methods
for the much simpler parametric timed automata (PTA)~\cite{ftscs22,ftscs-journal}.
%\lp{Shouldn't we cite the journal version, now that it is published?}
PTAs have a much
 simpler structure than PITPNs, which can have an unbounded number of
 tokens in reachable markings. %Furthermore,

In Section~\ref{sec:benchmarks} we benchmark both \romeo{} and
our Maude-with-SMT methods on some PITPNs. Somewhat surprisingly, in
many cases our high-level prototype ``interpreter'' outperforms \romeo{}.
We also discovered that \romeo{} answered ``maybe'' in some cases
where Maude found solutions,  where those solutions were
proved valid by running \romeo{}
with the additional constrains on the parameters found by Maude. Additionally,
 \romeo{} sometimes failed to
synthesize parameters even when solutions existed.
%\po{ more here?}
%\lp{Our solutions are confirmed by \romeo{} as indeed correct.}
We also compare the performance of our previous
PITPN analyzer presented in~\cite{DBLP:conf/apn/AriasBOOPR23} with our
new version, which incorporates the elimination of existentially
quantified variables and optimizations in the
folding procedure.
%\po{please check and improve the prose!}

Finally, Section~\ref{sec:related} discusses related work, and
Section~\ref{sec:concl} gives some concluding remarks and suggests topics
for future work.

\medskip
All  executable Maude files with
analysis commands, tools for translating \romeo{} files into Maude,  and data
from the benchmarking are available at~\cite{pitpn2maude}.

\medskip
This paper is an extended version of the conference
paper~\cite{DBLP:conf/apn/AriasBOOPR23}.
Additional
contributions include:

\begin{itemize}
 \item
     We present full proofs and explain in more depth the
     different rewrite theories and the folding procedure proposed.
     %more refine several technical details, including a detailed description
     %of the proposed rewrite theories and the folding procedure
     %\po{can we give examples of such ``technical details'' that are refined?}
     %and present full proofs.
     We also provide more examples on how to perform
     explicit-state analysis (\Cref{sec:concrete-ex})  and symbolic analysis
     within our framework (\Cref{sec:analysis}).

    \item The folding procedure in \cite{DBLP:conf/apn/AriasBOOPR23}
     was  implemented by relying on dedicated procedures in
     the Z3
     solver for  eliminating existentially quantified variables. By
     adapting and integrating  the Fourier-Motzkin elimination procedure described in
     \cite{ftscs-journal}, our Maude-with-SMT analysis can now be done with any
     SMT solver connected to Maude (\Cref{sec:imp:folding}). As demonstrated in
     \Cref{sec:benchmarks}, this new   procedure significantly
     outperforms its predecessor. This also allows us to
     provide  a more comprehensive
     performance comparison of Maude executed with the   SMT solvers
      Yices2, CVC4, and Z3  in
     \Cref{sec:benchmarks}.

    \item The implementation of the folding procedure in
        \cite{DBLP:conf/apn/AriasBOOPR23} used the standard search command in
        Maude, and it could  detect previously visited states only
        within the same branch of the search tree. In contrast, we now
        use the meta-level facilities of Maude to also implement a breadth-first search
        procedure with folding which  maintains a global set of already visited
        states across all branches of the search tree (\Cref{sec:imp:folding}).
        Despite the additional overhead incurred by the meta-level
        implementation, the reduction in state space is substantial in certain
        scenarios, leading to  improved performance compared to the standard
        search  command.

    \item The new \Cref{sec:analysis:methods}, and the corresponding
      Maude implementation,
        provide a user-friendly syntax for
        executing the different analysis  methods (some of them beyond
        what is supported by \romeo{}) and different implementations
        of the same analysis method, including
        reachability with/without folding, solving parameter (both
        firing bounds and markings)
        synthesis problems,
        time-bounded reachability analysis,
        synthesis of parameters when the net is executed with a user-defined strategy,
        and model checking full LTL and non-nested (T)CTL formulas. We
        have also provided
        a user-friendly
        syntax for the properties to be verified, including state properties
        on markings and ``transition clocks.''
\end{itemize}

\section{Preliminaries} \label{sec:prelim}
This section gives some necessary background to
transition systems and bisimulations \cite{model-checking},
parametric time Petri nets with inhibitor arcs \cite{paris-paper},
rewriting logic \cite{Mes92}, rewriting modulo SMT
\cite{rocha-rewsmtjlamp-2017},
and Maude~\cite{maude-book} and Maude-SE \cite{maude-se}.

\subsection{Transition systems and bisimulations}

 A \emph{transition system} $\mathcal{A}$ is a triple $(A,
a_{0}, \rightarrow_{\mathcal{A}})$,  where $A$ is a set of
\emph{states}, $a_{0}\in A$ is the \emph{initial state}, and
$\rightarrow_{\mathcal{A}}\,\subseteq A \times A$ is a \emph{transition
  relation}.
We say that $\mathcal{A}$ is \emph{finite} if the set of states
reachable by $\rightarrow_{\mathcal{A}}$ from $a_0$ is finite.
A  relation
$\sim \,\subseteq  A \times B$ is a \emph{bisimulation}~\cite{model-checking}
from $\mathcal{A}$ to $\mathcal{B} = (B, b_0, \rightarrow_{\mathcal{B}})$
iff: (i) $a_0 \sim b_0$; and (ii) for all $a,b$ s.t. $a\sim b$:
if
$a \rightarrow_{\mathcal{A}} a'$ then there is  a $b'$ s.t. $b
\rightarrow_{\mathcal{B}} b'$ and $a' \sim b'$, and,  vice versa,  if
$b \rightarrow_{\mathcal{B}} b''$,
then there is a $a''$ s.t. $a \rightarrow_{\mathcal{A}} a''$ and
$a'' \sim b''$.

\subsection{Parametric time Petri nets with inhibitor arcs}\label{sec:PTPN}

We recall  the definitions from~\cite{paris-paper}.  We denote by
$\grandn$, $\grandqplus$, and $\grandrplus$ the
natural numbers, the non-negative rational numbers,  and the non-negative real
numbers, respectively.
% \po{I added the following sentence:}
For sets
$A$ and $B$, we sometimes write
$B^A$ for the set $[A \rightarrow B]$ of
functions from $A$ to $B$.
Throughout this paper we assume a finite set
$\Param=\{\param_1, \dots,\param_{\ParamCard} \}$
of \emph{time parameters}.
%  i.e.,  unknown constants.   %%% Peter: not needed in prelim
A \emph{parameter valuation} $\pi$ is a function
$\pi : \Param \rightarrow \grandqplus$.
A (linear) \emph{inequality} over  $\Param$ is an expression
$\sum_{1 \leq i \leq \ParamCard} a_i \param_i \prec b$,
where $\prec \in \{<, \leq,=,\geq,>\}$
and $a_{i},b \in \grandr$.
%
% We define  inequalities over~$\Ftime$ (resp.~$\Param$) in the same way.
A \emph{constraint} is a conjunction of such inequalities.
$\setP$ denotes the set of all constraints over $\Param$.
A parameter valuation~$\pi$ \emph{satisfies} a constraint $K\in
\setP$, written $\pi \models K$,
if the expression obtained by replacing each parameter~$\param$ in~$K$
with~$\pi(\param)$ evaluates to true.
% Given two constraints~$K_1$ and~$K_2$, $K_1$ is \emph{included in}
% $K_2$, denoted by $K_1 \subseteq K_2$, if
% $\forall \pi: \pi \models K_1 \Rightarrow \pi \models K_2$.
% We consider~$\true$ as a constraint over~$\Param$.
%%% Peter: 'included in' never used in paper.
% corresponding to
%the set of all possible values for~$\Param$.
%
A closed\footnote{Since we work with SMT constraints, we could also easily
  accommodate  open intervals.} interval $\interval$ of $\grandrplus$
is a $\grandqplus$-interval
if its left endpoint $\leftEP{\interval}$ belongs to $\grandqplus$ and
its right endpoint $\rightEP{\interval}$ belongs to $\grandqplus \cup
\{ \infty \}$, where the infinity value $\infty$ satisfies the usual
properties. %, and $\leftEP{\interval} \leq \rightEP{\interval}$.
% \footnote{Note that for parametric time Petri nets it is not required that
  % $\leftEP{\interval} \leq \rightEP{\interval}$.}
%\po{We need to make explicit whether or not intervals are closed or
%  also can be open intervals. Please help me out.  Are open intervals allowed????}
%\lp{I would say closed only. If it were open, that would lead to constraints such
%as $t<a$, which seem not to be consistent with the enabling etc. definitions.}
We denote by $\Interval(\grandqplus)$ the set of
$\grandqplus$-intervals. %%%  of~$\grandrplus$.
A parametric time interval is a function $\parInterval : {\grandqplus}^\Param
\rightarrow \Interval(\grandqplus)$ that associates with each parameter
valuation a $\grandqplus$-interval.
The set of parametric time intervals
over $\Param$ is denoted $\ParInterval (\Param)$.

\begin{definition}[Parametric time Petri net with inhibitor arcs]
    A \emph{parametric time Petri net with inhibitor arcs}
    (PITPN)~\cite{paris-paper}  is
    a tuple
    \[\PN = \tuple{\Place, \Transition, \Param, \relPre{(.)},
      \relPost{(.)}, \relInhib{(.)}, \markingInit, \parIntervalStatic,
      \Kinit}\] where
    \begin{itemize}
        \item $\Place = \{ \place_1, \dots, \place_\PlaceCard \}$ is a
          non-empty finite set (of \emph{places}),
        \item $\Transition = \{ \transition_1, \dots,
          \transition_\TransitionCard \}$ is a non-empty finite set (of
          \emph{transitions}), with $P \cap T = \emptyset$,
        \item $\Param = \{ \param_1, \dots, \param_\ParamCard \}$ is a
          finite set (of \emph{parameters}),
          %\po{Do we need some disjointedness restrictions on the set
          %  of parameters?}
          %  \lp{No}
        \item $\relPre{(.)} \in
          [\Transition \rightarrow \grandn^\Place]$ is the
          \emph{backward} \emph{incidence function},
       \item $\relPost{(.)} \in
          [\Transition \rightarrow \grandn^\Place]$ is the
          \emph{forward} \emph{incidence function},
       \item $\relInhib{(.)} \in [\Transition\rightarrow\grandn^\Place]$
        is the
          \emph{inhibition function},
        \item $\markingInit \in \grandn^\Place$ is the \emph{initial marking},
        \item $\parIntervalStatic \in
          [\Transition\rightarrow\ParInterval(\Param)]$ %is the function
          % that
          assigns a \emph{parametric time interval} to each
          transition, and
        \item $\Kinit \in \setP$ is a satisfiable \emph{initial
            constraint} over $\Param$, and must
          ensure that \\ $\leftEP{(\parIntervalStatic
            (\transition)(\pi))} \leq \rightEP{(\parIntervalStatic
            (\transition)(\pi))}$ for all $\transition \in \Transition$
          and all parameter valuations $\pi$ satisfying $\pi \models
          \Kinit$.
        \end{itemize}
If  $\Param = \emptyset$ then $\PN$ is a (non-parametric)
\emph{time Petri net with inhibitor arcs} (ITPN).
\end{definition}

%------------------------------------------------------------
\begin{figure}[ht]
\begin{tabular}{cc}
 \includegraphics[width=0.48\textwidth]{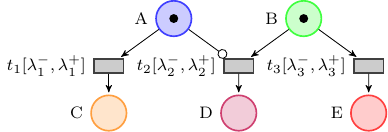}&
 \includegraphics[width=0.40\textwidth]{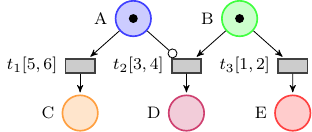}\\
(a) A PITPN~$\PN$. \label{fig:example:PITPN:param} &
(b) The ITPN~$\valuate{\PN}{\py}$.
\end{tabular}
         \caption{A PITPN and its valuation.}\label{fig:example:PITPN:valuated}
   \label{fig:example}
\end{figure}
%------------------------------------------------------------

A transition firing interval endpoint should typically either be a
non-negative rational number, the infinity value $\infty$, or a single
parameter~\cite{paris-paper}.  However, for convenience we also allow more complex
endpoints as e.g., $[2a, 2a]$ where $a$ is a parameter  (see
Fig.~\ref{fig:ex-scheduling}), and assume that
all transition firing interval endpoints can be defined as a linear
expression over the parameters.  % \po{HELP!  What I want to say is that

A \emph{marking} of  $\PN$ is an element
$\marking\in\grandn^P$, where $\marking(\place)$
is the number of tokens in place $\place$.  For a parameter valuation
$\pi$,
$\valuate{\PN}{\py}$ denotes the  ITPN where each
occurrence of $\lambda_i$ in the PITPN $\PN$  has been replaced by
$\pi(\lambda_i)$.

\begin{example}
% For example,
The ITPN in Fig.~1b  corresponds
to the PITPN in  Fig.~1a where the parameters
are instantiated with
$\py = \{ \param_1^- \rightarrow 5 , \param_1^+ \rightarrow 6 ,
\param_2^- \rightarrow 3 , \param_2^+ \rightarrow 4 , \param_3^-
\rightarrow 1 , \param_3^+ \rightarrow 2 \}$.
\end{example}

%-%-%-%-%-%-%-%-%-%-%-%-%-%-%-%-%-%-%-%-%-%-%-%-%-%-%-%-%-%-
The \emph{concrete semantics} of a PITPN $\PN$ is defined
in terms of concrete ITPNs $\valuate{\PN}{\py}$ where
$\pi\models\Kinit$.
We say that a transition~$\transition$ is \emph{enabled} in~$\marking$ if $\marking
\geq \relPre{\transition}$ (the number of tokens in
$\marking$ in each input place of $\transition$ is greater than or
equal to the value on the arc between this place and $t$).
A transition~$\transition$ is \emph{inhibited} if the place connected to
one of its inhibitor arcs is marked with at least as many tokens as
the weight of the inhibitor arc.
%Inhibitor arcs are used to stop the elapsing of time for some transitions.
A transition~$\transition$ is \emph{active} if it is enabled and not inhibited.
% \lp{Later we forget about inhibited and active and use enabled only. }
The sets of enabled and inhibited transitions in marking $\marking$ are denoted
$\enabled(\marking)$ and $\inhibited(\marking)$, respectively.
A transition~$\transition$ is \emph{firable} if it has been
(continuously) enabled
for at least time~$\leftEP{\parIntervalStatic(\transition)}$, without
counting the time it has been inhibited.
A transition $\transition$ is \emph{newly enabled} by the firing of
transition $\transition_f$
in $\marking$
if it is enabled in the resulting marking
$\marking' = \marking - \relPre{\transition_f} +
\relPost{\transition_f}$,  and either $\transition$ is the fired
transition $\transition_f$ or $\transition$  was not enabled
in $\marking - \relPre{\transition_f}$:
\[\mbox{\newlyEnabled}(\transition,\marking,\transition_f) =
(\relPre{\transition} \leq
    \marking  - \relPre{\transition_f} + \relPost{\transition_f})
\land ((\transition=\transition_f)
        \lor \neg (\relPre{\transition} \leq \marking -\relPre{\transition_f})).\]
%%%% Peter: pre-t > M (i) is not defined, and (ii) might give the wrong
%%%% impression.  Changed to NOT (pre-t <= ...)
 %
$\newlyEnabled(\marking,\transition_f)$ denotes the  transitions newly enabled
by  firing  $\transition_f$ in $\marking$.

The semantics of an ITPN is defined as a transition system with states
$(\marking,\interval)$,  where
$\marking$ is a marking and
$\interval$  is a function mapping each transition enabled in
$\marking$ to
a time interval, and
two kinds of transitions:  \emph{time} transitions
where time elapses, and discrete transitions when a transition in the net is fired.

\begin{definition}[Semantics of an
  ITPN~\cite{paris-paper}] \label{def:pnet-semantics}
The dynamic behaviors of  an ITPN $\pi(\PN)$ are defined by the
transition system
$\mathcal{S_{\pi(\PN)}} = (\mathcal{A},a_0,\rightarrow)$, where:
$\mathcal{A}=\grandn^P\times [\Transition\rightarrow\Interval(\grandq)]$,
 $a_0=(\marking_0,\parIntervalStatic)$ and
    $(\marking,\interval)
    \rightarrow (\marking',\interval')$ if there exist $\delta\in
    \grandrplus$,  $\transition\in T$, and state
    $(\marking'',\interval'')$ such that
    $(\marking,\interval)\fleche{\delta} (\marking'',\interval'')$ and
    $(\marking'',\interval'') \fleche{\transition}
            (\marking',\interval')$, for the following  relations:
    \begin{itemize}
        \item the \emph{time transition relation}, defined
        $\forall\delta\in\grandrplus$ by:\\
        $(\marking,\interval)\fleche{\delta}
            (\marking,\interval')$ iff $\forall \transition \in \Transition$: \\
            $\left\{\begin{array}{l}
                \interval'(t)=\left\{\begin{array}{l}
                    \interval(\transition)\mbox{ if }
                        \transition \in \enabled(\marking) \mbox{ and }
                        \transition\in\inhibited(\marking)\\
                    \leftEP{\interval'(\transition)}=\max(0,\leftEP{\interval(\transition)}
                                       - \delta),
                    \mbox{ and } \rightEP{\interval'(\transition)} =
                                       \rightEP{\interval(\transition)}
                                       - \delta
                    \mbox{ otherwise}
                   \end{array}\right.\\
                \marking \geq\relPre(\transition) \implies
                    \rightEP{\interval'(\transition)}\geq 0
            \end{array}\right.$
        \item the \emph{discrete transition relation}, defined
        $\forall\transition_f\in\Transition$ by:
        $(\marking,\interval)\fleche{\transition_f}
            (\marking',\interval')$ iff\\
            $\left\{\begin{array}{l}
                \transition_f\in\enabled(\marking)\land
                    \transition_f\not\in\inhibited(\marking) \land
                    \marking'=\marking-\relPre{\transition_f}
                        +\relPost{\transition_f} \land
                    \leftEP{\interval(\transition_f)}=0\\
                    \forall\transition\in\Transition,
                        \interval'(\transition)=
                            \left\{\begin{array}{l}
                                \parIntervalStatic(\transition)
                                    \mbox{ if }
                                    \newlyEnabled(\transition,\marking,
                                        \transition_f) \\
                                \interval(\transition) \mbox{ otherwise}
                            \end{array}\right.
            \end{array}\right.$
    \end{itemize}
\end{definition}

\begin{example}
Consider  the ITPN in \Cref{fig:example:PITPN:valuated}.
A possible concrete firing sequence is:
\[
 \hspace*{-2mm}{   \begin{array}{lll}
(\{A,\!B\},([5,6],[3,4],[1,2]))\! \fleche{2}(\{A,\!B\},([3,4],[3,4],[0,0]))
\! \fleche{\transition_3}(\{A,\!E\},([3,4],[3,4],[0,0]))
\! \fleche{3}\\
(\{A,\!E\},([0,1],[3,4],[0,0]))
\! \fleche{\transition_1}(\{C,\!E\},([0,1],[3,4],[0,0]))
\end{array} }
\]
\end{example}

The \emph{symbolic} semantics of PITPNs is given
in~\cite{EAGPLP13} as a transition system $(\grandn^P \times \setP,
(\marking_0, K_0)$, $\Fleche{})$
on \emph{state classes}, \ie{}   pairs $\class = (\marking,\constraint)$
consisting of a marking $\marking$ and  a constraint $\constraint$
over $\Param$.
 The firing of a transition leads to
a new marking as in the concrete semantics, and also captures the new constraints
induced by the time that has passed for the transition to fire.

\begin{example} \label{example:k0}
Since the initial constraint $K_0$ of a PITPN must
ensure that each firing interval is non-empty, the initial
constraint $K_0$  of the PITPN in Fig. \ref{fig:example:PITPN:valuated}a is
$\param_1^- \leq \param_1^+
\; \land \; \param_2^- \leq \param_2^+ \; \land\; \param_3^- \leq
\param_3^+$.  Therefore,  the
 initial state class of this PITPN is $(\{A, B\}, \param_1^- \leq \param_1^+
\; \land \; \param_2^- \leq \param_2^+ \; \land\; \param_3^- \leq \param_3^+)$.
When firing transition $\transition_1$, the time spent for
$\transition_1$ to be firable is such that the other transitions
($\transition_3$ in this case) do
not miss their deadlines. So we obtain an additional inequality
$\param_1^-\leq\param_3^+$ and the new state class, obtained after firing
$\transition_1$ is $(\{\textcolor{red}{C}, B\},
\param_1^-\leq\param_1^+ \: \land \:
\param_2^-\leq\param_2^+ \: \land \:  \param_3^-\leq\param_3^+
\: \land \: \textcolor{red}{\param_1^-\leq\param_3^+})$. See~\cite{EAGPLP13}
for details.
\end{example}

A new semantics for PITPNs, where
a firing time (in the appropriate interval) is picked as soon as a
transition becomes enabled,  was recently introduced
in~\cite{DBLP:conf/apn/LeclercqLR23}.
This allows for a  simpler definition of the concrete semantics.
%but also induces
%differences with the symbolic semantics.
However, the work in~\cite{DBLP:conf/apn/LeclercqLR23}
targets a controller synthesis problem and therefore imposes additional
constraints on the model,
does not consider inhibitor arcs, and assumes nets to be safe.
We therefore consider  in this paper the much more general  definition of PITPNs
in \cite{paris-paper}.

\subsection{Rewriting with SMT and Maude}
\label{sec:rew-smt}

\paragraph{Rewrite theories.}

%\kb{Here is my attempt to simplify  notations for rewriting with SMT.
%I removed some unused (or  used only once) notations from the old version.}

A \emph{rewrite theory} \cite{Mes92} is
a tuple $\mathcal{R} = (\Sigma, E, L, R)$
such that
\begin{itemize}
\itemsep=0.8pt
    \item $\Sigma$ is a signature that declares sorts, subsorts, and function symbols;
    \item $E$ is a set of %(conditional)
    equations of the form $t=t' \mbox{ \textbf{if} } \psi$,
    where
    $t$ and $t'$ are terms of the same sort,
    and $\psi$ is a conjunction of equations;

    \item $L$ is a set of \emph{labels};
    and

    \item $R$ is a set of %labeled (conditional)
    rewrite rules
    of the form
    $l : q \longrightarrow r \mbox{ \textbf{if} } \psi$,
    where $l \in L$ is a label,
    $q$ and $r$ are terms of the same sort,
    and
    $\psi$ is a conjunction of equations.\footnote{The condition may
      also include rewrites, but we do not use this extra generality in
      this paper.}
    % and rewrites.
\end{itemize}

$T_{\Sigma, s}$ denotes the set of ground (\ie{} not containing variables)
terms of sort $s$,
 and $T_{\Sigma}(X)_s$ denotes the set of %$\Sigma$-
 terms of sort $s$
 over a set of  sorted
 variables $X$. $T_{\Sigma}(X)$ and
 $T_{\Sigma}$ denote all terms and ground terms, respectively.
 A substitution $\sigma : X \rightarrow T_{\Sigma}(X)$
 maps each variable to a term of the same sort,
 and
  $t \sigma$
  denotes
 the term obtained
by simultaneously replacing each variable $x$ in a term $t$ with $\sigma(x)$.
The domain of a substitution
$\sigma$ is $\mathit{dom}(\sigma) = \{x \in X \mid \sigma(x) \neq x\}$,
assumed to be finite. % in this paper.

\medskip
A \emph{one-step rewrite} $t \longrightarrow_{\mathcal{R}} t'$ holds
if there are
a rule $l : q \longrightarrow r \mbox{ \textbf{if} } \psi$,
a subterm $u$ of $t$,
and a substitution $\sigma$ such that
$u = q\sigma$ (modulo equations),
$t'$ is the term obtained from $t$
by replacing %the subterm
$u$ with $r\sigma$,
and $v\sigma = v'\sigma$ holds
for each %equation
$v = v'$ in $\psi$.
We denote by
$\longrightarrow_{\mathcal{R}}^\ast$
the reflexive-transitive closure of $\longrightarrow_{\mathcal{R}}$.

%%\medskip
A rewrite theory $\mathcal{R}$
is called \emph{topmost}
iff
there is a sort $\mathit{State}$
at the top of one of the connected components of the subsort partial order
such that
for each rule $l : q \longrightarrow r \mbox{ \textbf{if} } \psi$,
both $q$ and $r$ have the top sort $\mathit{State}$,
and
no operator has sort $\mathit{State}$
or any of its subsorts as an argument sort.

\paragraph{Rewriting with SMT \cite{rocha-rewsmtjlamp-2017}.}
For a signature $\Sigma$ and
a set of equations $E$,
a \emph{built-in theory} $\mathcal{E}_0$
is a first-order theory with a signature $\Sigma_0 \subseteq \Sigma$,
where
(1) each sort $s$ in $\Sigma_0$ is minimal in $\Sigma$;
(2) $s \notin \Sigma_0$ for each operator $f:s_1\times \cdots \times s_n \rightarrow s$
    in $\Sigma \setminus \Sigma_0$; and
(3) $f$ has no other subsort-overloaded typing in $\Sigma_0$.
The satisfiability of a constraint in $\mathcal{E}_0$
is assumed to be decidable
using the SMT theory $\mathcal{T}_{\mathcal{E}_0}$
which is
consistent with  $(\Sigma, E)$, \ie{}
for $\Sigma_0$-terms $t_1$ and $t_2$,
if $t_1= t_2$ modulo $E$, then $\mathcal{T}_{\mathcal{E}_0} \models t_1 = t_2$.

\medskip
A \emph{constrained term}
is a pair $\phi \parallel t$ of
a constraint $\phi$ in $\mathcal{E}_0$ and  a term $t$ in $T_{\Sigma}(X_0)$
over variables $X_0 \subseteq X$ of the built-in sorts in
$\mathcal{E}_0$ \cite{rocha-rewsmtjlamp-2017,bae2019symbolic}.
A constrained term
 $\phi \parallel t$
\emph{symbolically} represents
all instances of the pattern
$t$
such that $\phi$ holds:
    $\llbracket \phi \parallel t \rrbracket
=
\{t' \mid t' = t\sigma \ \mbox{(modulo $E$) and}\  \mathcal{T}_{\mathcal{E}_0}
\models \phi\sigma \ \mbox{for ground}\ \sigma : X_0 \to T_{\Sigma_0}
\}.
$

An \emph{abstraction of built-ins}
for a $\Sigma$-term $t \in T_{\Sigma}(X)$
is a pair $(t^\circ, \sigma^\circ)$
of
a term $t^\circ \in T_{\Sigma \setminus \Sigma_0}(X)$
and
a substitution $\sigma^\circ : X_0 \to T_{\Sigma_0}(X_0)$
such that
$t = t^\circ \sigma^\circ$
and  $t^\circ$  contains no duplicate variables in $X_0$.
Any non-variable built-in subterms of $t$ are
replaced by distinct built-in variables in $t^\circ$.
$\Psi_{\sigma^\circ} = \bigwedge_{x \in \mathit{dom}(\sigma^\circ)} x = x \sigma^\circ $.
%\begin{lemma}\label{lemma:abs-built}
Let $\phi \parallel t$ be a constrained term
and $(t^\circ, \sigma^\circ)$ an
abstraction of built-ins for $t$.
%\fr{Should "by" be "be" in the above sentence?... CO: fixed}
If $\mathit{dom}(\sigma^\circ) \cap \ovars{\phi \parallel t} = \emptyset$,
then
$\llbracket \phi \parallel t \rrbracket
=
\llbracket \phi \wedge \Psi_{\sigma^\circ} \parallel t^\circ \rrbracket$
\cite{rocha-rewsmtjlamp-2017}.

Let $\mathcal{R}$ be a topmost theory
such that for each rule $l : q \longrightarrow r \mbox{ \textbf{if} }
\psi$, extra variables not occurring in the left-hand side $q$
are in $X_0$,
and
$\psi$ is a constraint in a built-in theory $\mathcal{E}_0$.
A \emph{one-step symbolic rewrite} $\phi \parallel t
\rightsquigarrow_{\mathcal{R}} \phi' \parallel t'$ holds
iff there exist
a rule $l : q \longrightarrow r \mbox{ \textbf{if} } \psi$
and
a substitution $\sigma : X \to T_{\Sigma}(X_0)$
such that
(1) $t = q\sigma$
and
$t' = r\sigma$
(modulo equations),
(2) $\mathcal{T}_{\mathcal{E}_0} \models (\phi \wedge  \psi
      \sigma) \Leftrightarrow \phi'$, and
(3) $\phi'$ is $\mathcal{T}_{\mathcal{E}_0}$-satisfiable.
We denote by
$\rightsquigarrow_{\mathcal{R}}^\ast$
the reflexive-transitive closure of $\rightsquigarrow_{\mathcal{R}}$.

\medskip
A \emph{symbolic rewrite}
on constrained terms
symbolically represents
a (possibly infinite) set of system transitions.
If $\phi_t \parallel t \rightsquigarrow^\ast \phi_u \parallel u$
is a symbolic rewrite,
then there exists a ``concrete'' rewrite  $t' \longrightarrow^\ast u'$
 with $t' \in \llbracket \phi_t \parallel t \rrbracket$
 and $u' \in \llbracket \phi_u \parallel u \rrbracket$.
Conversely,
for any concrete
rewrite $t' \longrightarrow^\ast u'$ with
$t' \in \llbracket \phi_t \parallel t \rrbracket$,
there exists
a symbolic rewrite $\phi_t \parallel t \rightsquigarrow^\ast \phi_u \parallel u$
with $u' \in \llbracket \phi_u \parallel u \rrbracket$.\vspace*{-1mm}

\paragraph{Maude.}
Maude~\cite{maude-book} is a language and tool
supporting the specification and analysis of  rewrite theories.
We summarize its syntax below:

\begin{maude}
pr R .                   --- Importing a theory R
sorts S ... Sk .         --- Declaration of sorts S1,..., Sk
subsort S1 < S2 .        --- Subsort relation
vars X1 ... Xm : S .     --- Logical variables of sort S
op f : S1 ... Sn -> S .  --- Operator S1 x ... x Sn -> S
op c : -> T .            --- Constant c of sort T
eq t = t' .              --- Equation
ceq t = t' if c .        --- Conditional equation
crl [l] : q => r if c .  --- Conditional rewrite rule
\end{maude}

\noindent Maude provides a large palette of analysis methods,  including computing
the
normal form of a term $t$ (command \lstinline[mathescape]{red $t$}),
simulation by rewriting (\lstinline[mathescape]{rew $t$}), and rewriting according
to a given rewrite
strategy (\lstinline[mathescape]{srew $t$ using $str$}).
Basic rewrite strategies include:
$r\mathtt{[}\sigma\mathtt{]}$ (apply  rule $r$ once with the optional ground
substitution $\sigma$), \code{all} (apply any of the rules once),
%\code{idle}
%(identity), \code{fail} (empty set),
and \code{match $P$ s.t.\ $C$}, which  checks
whether the current term matches the pattern $P$ subject to the constraint $C$.
Compound strategies can be defined using concatenation
($\alpha\,;\,\beta$), disjunction ($\alpha\, |\, \beta$), iteration ($\alpha \,\mathtt{*}$),
$\alpha \code{ or-else } \beta$ (execute $\beta$ if $\alpha$ fails),
normalization  $\alpha\,\mathtt{!}$ (execute $\alpha$  until it cannot be further
applied), etc.

\medskip
Maude also offers explicit-state
reachability analysis from a ground
term $t$ (\lstinline[mathescape]{search [$n$,$d$] $t$ =>* $t'$}
\lstinline[mathescape]{such that $\Phi$}, where the optional
parameters $n$ and $d$ denote the maximal number of solutions to
search for and the maximal depth of the search tree, respectively)
and  model checking a linear temporal logic (LTL) formula $F$
%against the transition system $(T_{\Sigma, \texttt{State}}, t,
%\longrightarrow_{\mathcal{R}})$
(\lstinline[mathescape]{red modelCheck($t$,$\,F$)}). Atomic propositions
in $F$ are user-defined terms of sort \code{Prop}, and the function
\lstinline{op _|=_ : State Prop -> Bool} specifies which states satisfy a given
proposition. LTL formulas are then built from state formulas, Boolean connectives
and the temporal logic operators \texttt{[]} (``always''), \texttt{<>} (``eventually'')
and  \texttt{U} (``until'').

\medskip
For symbolic reachability analysis, the command

\begin{maude}
smt-search [$n$,$\,d$] $t$ =>* $t'$ such that $\Phi$  --- n and m are optional
\end{maude}
symbolically searches for $n$ states,
reachable from $t \in T_{\Sigma}(X_0)$ within $d$ rewrite steps,
that match the pattern $t' \in T_{\Sigma}(X)$ and satisfy
the constraint $\Phi$ in $\mathcal{E}_0$.
More precisely,
it searches for
a constrained term $\phi_u \parallel u$
such that
$\mathit{true} \parallel t \rightsquigarrow^\ast \phi_u \parallel u$
and  for some $\sigma : X \to T_{\Sigma}(X)$,
$u = t'\sigma$ (modulo equations) and
%$\mathcal{T}_{\mathcal{E}_0} \models \phi_u \Rightarrow \Phi\sigma$.
 $\phi_u \wedge \Phi\sigma$ is
$\mathcal{T}_{\mathcal{E}_0}$-satisfiable.

\medskip
Maude provides   built-in  sorts \code{Boolean}, \code{Integer}, and \code{Real} for
the SMT theories of Booleans, integers, and reals. Rational constants of sort \code{Real} are written
\code{$n$/$m$} (\eg{} \code{0/1}).
Maude-SE~\cite{maude-se} extends Maude with
additional functionality
for rewriting modulo SMT,
including
witness generation for \lstinline{smt-search}.
It uses two theory transformations
to implement
symbolic rewriting~\cite{rocha-rewsmtjlamp-2017} as ``standard'' rewriting, thus opening the
possibility of
using standard Maude's commands as \lstinline{search}
on constrained terms.
In essence,
a  rewrite
rule $l : q \longrightarrow r \mbox{ \textbf{if} } \psi$
is transformed
into
a constrained-term rule
\begin{align*}
l:
\mathtt{PHI} \parallel q^\circ
\longrightarrow
(\mathtt{PHI} \mathbin{and} \psi \mathbin{and} \Psi_{\sigma^\circ} ) \parallel r
\mbox{ \textbf{if} }
&
\mathtt{smtCheck}(\mathtt{PHI} \mathbin{and} \psi \mathbin{and} \Psi_{\sigma^\circ} )
\end{align*}

\noindent
where $\mathtt{PHI}$ is a \code{Boolean} variable,
$(q^\circ, \sigma^\circ)$ is an abstraction of built-ins for $q$,
and
\code{smtCheck} invokes the underlying  SMT solver
to check the satisfiability of an SMT condition.
This rule is executable
if the extra SMT variables in $(\ovars{r} \cup
\ovars{\psi} \cup \ovars{\Psi_{\sigma^\circ}}) \setminus \ovars{q^\circ}$
are considered constants.

\paragraph{Meta-programming.} Maude supports \emph{meta-programming}, where a
Maude module $M$ (resp., a term $t$) can be (meta-)represented as a Maude
\emph{term} $\overline{M}$ of sort \code{Module} (resp.\  as a Maude term
$\overline{t}$ of sort \code{Term}) in Maude's \code{META-LEVEL} module.
Sophisticated analysis commands and model/module transformations can then be
easily defined as ordinary Maude functions on such (meta-)terms. For this
purpose, Maude provides built-in functions such as \code{metaReduce},
\code{metaRewrite},  and \code{metaSearch}, which are the ``meta-level''
functions corresponding to equational reduction to normal form, rewriting, and
search, respectively.

%!TEX root=./main.tex

\section{A rewriting logic semantics for ITPNs }\label{sec:tranformation}
\label{sec:concrete}

This section presents a rewriting logic semantics for (non-parametric)
ITPNs, using
 a (non-executable) rewrite theory
$\rtheorySem$. We provide a
bisimulation  relating the concrete
semantics of a net $\PN$ and an induced rewrite relation in $\rtheorySem$. Furthermore,   % We also
% show that it is possible to extend $\rtheoryBase$ in other ways, leading to
% different rewrite theories that can be more appropriate for certain
% verification tasks.
%\po{Discuss/motivate other extensions when they are in place.}
 we discuss  variants of $\rtheorySem$
to avoid  consecutive tick steps and  to enable time-bounded
 analysis.

\subsection{Formalizing  ITPNs in Maude: The theory $\rtheorySem$}
\label{subsec:eq-theory}

We fix $\PN$ to be the ITPN $\tuple{\Place, \Transition, \emptyset, \relPre{(.)},
\relPost{(.)}, \relInhib{(.)}, \markingInit, \parIntervalStatic,
true}$, and show how ITPNs and markings of such nets can be
represented as Maude terms.

\medskip
We first  define sorts for representing transition
 labels, places, and time values  in Maude.
 The usual approach is to represent each transition $t_i$ and
 each place $p_j$ as a constant of sort \texttt{Label} and
 \texttt{Place}, respectively (\eg{} \texttt{ops \(p_1\) \(p_2\)
   ... \(p_m\) : -> Place [ctor]}).  To avoid even this simple
 parameterization and just use a single rewrite theory $\mathcal{R}_0$
 to define the semantics of  all ITPNs, we assume
 that  places and transition (labels) can be represented as
 strings. Formally, we assume that there is an injective naming
 function $\eta : P \cup T \rightarrow \mathtt{String}$.
 To improve readability, we usually do not mention $\eta$  explicitly.

%\begin{figure}
\begin{maude}
  protecting STRING .   protecting RAT .
  sorts Label Place .                 --- identifiers for transitions and places
  subsorts String < Label Place .     --- we use strings for simplicity

  sorts Time TimeInf .                --- time values
  subsorts Zero PosRat < Time < TimeInf .
  op inf : -> TimeInf [ctor] .
  vars T T' T1 T2 : Time     .
  eq T <= inf = true .
\end{maude}\vspace{0.9mm}

The sort \texttt{TimeInf}
 adds an ``infinity'' value \texttt{inf} to the sort \texttt{Time} of
 time values, which are the  non-negative rational numbers (\code{PosRat}).

\medskip
The ``standard'' way of formalizing Petri nets in rewriting logic (see,
\eg{} \cite{Mes92,petri-nets-in-maude}) represents, \eg{}  a marking
with two tokens in place $p$
and three tokens in place $q$  as the Maude term
$p\; p\; q\; q\; q$. This is crucial to support \emph{concurrent}
firings of transitions in a net. However, since the semantics of PITPNs is
an \emph{interleaving} semantics, and   to support
rewriting-with-SMT-based analysis from
\emph{parametric} initial markings (Example~\ref{ex:pmarking}),
we instead represent markings as
maps from places to the number of tokens in that place,
so that the above
marking is  represented by the Maude
term  \texttt{\(\nameit(p)\)\,|->\,2 ; \(\nameit(q)\)\,|->\,3}.

\medskip
The following declarations define  % elements of
the
sort  \code{Marking} to consist of  \code{;}-separated sets of pairs \linebreak
\code{\(\nameit(p)\)\,\,|->\,\,\(n\)}.
Time intervals are represented as  terms %of the form
\code{[\(\mathit{lower}\,\):\(\,\mathit{upper}\)]} where the upper
bound $\mathit{upper}$,
of sort \code{TimeInf}, also can be the infinity value \code{inf}.
The  Maude term
\code{  \(\;\nameit(t)\)\,: \(pre\) --> \(post\) inhibit \(inhibit\) in
  \(interval\) } represents a transition $t\in T$,
%\code{tr(\(\nameit(t)\),\,\(pre\),\,\(post\),\,\(inhibit\),\,\(interval\))}
where %\code{\(t\)} is the (label of the) transition;
\code{\(pre\)}, \code{\(post\)},  and  \code{\(inhibit\)} are markings
representing, respectively,
  $\relPre{(t)}, \relPost{(t)}, \relInhib{(t)}$; and
  \code{\(interval\)} represents the interval $J(t)$.  The
  `\texttt{inhibit}' part can be omitted if it is \texttt{empty}. A \code{Net}
  is represented as a \texttt{;}-separated set of such transitions.
%  (lines 11--12):
%\lp{there is no line numbering in the piece of code, but it can certainly be added
%easily.}

\begin{maude}
vars M M1 M2 PRE POST INHIBIT INTERM-M : Marking .
vars L L' : Label .
vars I INTERVAL : Interval .
var NET : Net .

sort Marking .                --- Markings
op empty : -> Marking [ctor] .
op _|->_ : Place Nat -> Marking [ctor] .
op _;_ : Marking Marking -> Marking [ctor assoc comm id: empty] .

sort Interval .               --- Time intervals (the upper bound can be infinite)
op `[_:_`] : Time TimeInf -> Interval [ctor] .

sorts Net Transition .        --- Transitions and nets
subsort Transition < Net .
op emptyNet : -> Net [ctor] .
op _;_ : Net Net -> Net [ctor assoc comm id: emptyNet] .

op _`:_-->_inhibit_in_ : Label Marking Marking Marking Interval -> Transition [ctor] .

op _`:_-->_in_ : Label Marking Marking Interval -> Transition .
eq L : M1 --> M2 in I  =  L : M1 --> M2 inhibit empty in I .
\end{maude}

  \begin{example}
Assuming the obvious naming function $\nameit$ mapping $A$ to
\texttt{"A"}, and so on, the net in Fig.~\ref{fig:example:PITPN:valuated}b is
represented as the following term of sort \code{Net}:

\begin{maude}
"t1" : ("A" |-> 1) --> ("C" |-> 1) in [5 : 6] ;
"t2" : ("B" |-> 1) --> ("D" |-> 1) inhibit ("A" |-> 1) in [3 : 4] ;
"t3" : ("B" |-> 1) --> ("E" |-> 1) in [1 : 2]$.$
\end{maude}
\end{example}

We  define some useful operations on markings, such as
\texttt{_+_} and \texttt{_-_}:
% The theory \code{PETRI-NET-REPRESENTATION } defines also operators on
% markings that will be useful to define the semantics
% of the net. For instance, if the current marking is \code{M}, the application
% of a transition will lead to the new marking \code{M - PRE + POST}. The following
% operators and equations define \code{_+_} on markings:

\begin{maude}
vars N N1 N2 : Nat .    var P : Place .
op _+_ : Marking Marking -> Marking .
eq ((P |-> N1) ; M1) + ((P |-> N2) ; M2) = (P |-> N1 + N2) ; (M1 + M2) .
eq M1 + empty = M1 .
\end{maude}

\noindent (This definition  assumes that
  each place
in \code{\(\marking2\)} appears  once in \code{\(\marking1\)} and
\code{\(\marking1\:\)+\(\:\marking2\)}.) The
function \code{_-_} on markings is defined similarly.
% The equations in lines 4-5 are the expected recursive
% definitions to add the values in the marking \code{M2} with those in  \code{M1}.
% Similar definitions are given for \code{M1 - M2}.
%
The following functions  compare markings and  check whether a transition is
active in a  marking:
%\co{being consistent with Sect.2, active = enabled+inhibited}
%\po{Do you mean active = enabled + NOT inhibited? ... yes, that's correct}

\begin{maude}
op _<=_ : Marking Marking -> Bool .       --- Comparing markings
eq ((P |-> N1) ; M1) <= ((P |-> N2) ; M2)  =  N1 <= N2 and (M1 <= M2) .
eq empty <= M2 = true .
ceq M1 <= empty = false if M1 =/= empty .

op active : Marking Transition -> Bool .  --- Active transition
eq active(M, L : PRE --> POST inhibit INHIBIT in INTERVAL) =
      (PRE <= M) and not inhibited(M, INHIBIT) .

op inhibited : Marking Marking -> Bool .  --- Inhibited transition
eq inhibited(M, empty) = false .
eq inhibited((P |-> N2) ; M, (P |-> N) ; INHIBIT) =
      ((N > 0) and (N2 >= N)) or inhibited(M, INHIBIT) .
\end{maude} \label{sec:states}\vspace*{-2mm}

\paragraph{Dynamics.}  We  define the dynamics of ITPNs as a Maude
``interpreter'' for such nets.
The concrete ITPN semantics in~\cite{paris-paper} dynamically adjusts
the ``time intervals'' of non-inhibited transitions when time
elapses.
Unfortunately, the definitions in~\cite{paris-paper} seem slightly
contradictory (even with non-empty firing intervals): On the one hand,
time interval end-points should be
non-negative, and only enabled transitions have intervals in the
states; on the other hand, the definition of time and discrete
transitions in~\cite{paris-paper} mentions $\forall t\in T, I'(t) =
...$ and $\marking \geq\relPre(\transition) \implies
                  \! \rightEP{\!\interval'(\transition)}\geq 0$, which
                    seems superfluous if all end-points are
                    non-negative. Taking the definition of time and
                    transition steps in~\cite{paris-paper} (our
                    Definition~\ref{def:pnet-semantics}) leads us to
                    time intervals where
% Directly implementing this semantics gives us left end-points of the
% intervals that are
% always non-negative, whereas
the right end-points of disabled transitions
could have \emph{negative}
values.  This has some
disadvantages: (i) ``time values'' can be negative numbers; (ii) we
have counterintuitive  ``intervals'' $[0, -r]$ where the right end-point is smaller
than the left end-point; (iii) the reachable ``state spaces'' (in suitable
discretizations) could  be infinite when these negative
values could be unbounded.

\medskip
To have a simple and
well-defined semantics,   we  use
``clocks'' instead of ``decreasing intervals'';  a clock denotes  how
long the corresponding
transition has been enabled (but not inhibited). Furthermore, to
reduce the state space, the clocks of disabled transitions  are always
zero.   The resulting semantics is equivalent to the % more cumbersome and
% state-space-increasing
(most natural interpretation of the) one in~\cite{paris-paper} in a way made
precise in Theorem~\ref{thm:ground}.

The sort \code{ClockValues} denotes sets of \texttt{;}-separated  terms
\code{\(\nameit(t)\) -> \(\tau\)}, where \(t\) is the (label of the) transition and
\code{\(\tau\)}  represents
the current value of  $t$'s ``clock.''

\begin{maude}
sort ClockValues .   --- Values for clocks
op empty : -> ClockValues [ctor] .
op _->_  : Label Time -> ClockValues [ctor] .
op _;_   : ClockValues ClockValues -> ClockValues [ctor assoc comm id: empty] .

var CLOCKS : ClockValues .
\end{maude}

The states in $\rtheorySem$ are
terms $m$\,\texttt{:}\,$\mathit{clocks}$\,\texttt{:}\,$\mathit{net}$
of sort \code{State}, where $m$ represents the current marking,
$\mathit{clocks}$ the current values of the transition clocks, and
$\mathit{net}$ the representation of the Petri net:

\begin{maude}
sort State .
op _:_:_ : Marking ClockValues Net -> State [ctor] .
\end{maude}

 The following rewrite rule models the application of a transition
\code{L} in the net
\code{(\pTransitionMI{L}{PRE}{POST}{INHIBIT}{INTERVAL}) ; NET}.
 Since \texttt{_;_}
is declared to be associative and commutative, \emph{any} transition \code{L}
in the net
can be applied using this rewrite rule:

%\begin{adjustbox}{width=\textwidth}
\begin{maude}
crl [applyTransition] :
     M  :
     (L -> T) ; CLOCKS  :
     ($\pTransitionMI{L}{PRE}{POST}{INHIBIT}{INTERVAL}$) ; NET
    =>
     $\highlight{(M - PRE) + POST}$ :
     $\highlight{L -> 0}$ ; updateClocks(CLOCKS, M - PRE, NET) :
     ($\pTransitionMI{L}{PRE}{POST}{INHIBIT}{INTERVAL}$) ; NET
 if $\highlight{active}$(M, $\pTransitionMI{L}{PRE}{POST}{INHIBIT}{INTERVAL}$)
    and ($\highlight{T in INTERVAL}$) .

op _in_ : Time Interval -> Bool .
eq T in [T1 : T2] = (T1 <= T) and (T <= T2) .
eq T in [T1 : inf] = T1 <= T .
\end{maude}

\noindent The  transition \code{L} is active (enabled and not inhibited)
in the  marking \code{M}
 and its clock value \code{T} is in the \code{INTERVAL}.
After performing the transition, the  marking is
\code{(M\,-\,PRE)\,+\,POST},
the clock of \code{L} is reset\footnote{Since in our semantics clocks
  of disabled transitions should be zero, we can safely set the clock of \texttt{L}
  to \texttt{0} in this rule.}
%\po{Folks, we need to check this. Something seems to be
%  inconsistent. The def in Section 2 does not say that the clock is
%  reset. Only if it is temporarily disabled in the intermediate
%  marking. We need to get the concrete semantics right, exactly
%  following the paper we read in Paris (which our specs are based on).}
and the other clocks
    are updated using the following function:

\begin{maude}
eq updateClocks((L' -> T') ; CLOCKS, INTERM-M,
                ($\pTransitionMI{L'}{PRE}{POST}{INHIBIT}{INTERVAL}$) ; NET)
 = if PRE <= INTERM-M then (L' -> T') else (L' -> 0) fi ;
   updateClocks(CLOCKS, INTERM-M, NET) .
eq updateClocks(empty,  INTERM-M, NET) = empty .
 \end{maude}

The second rewrite rule in $\rtheorySem$ specifies how time
advances. Time can advance by \emph{any} value \code{T}, as long as
time
does not advance beyond the time when an active transition must be
taken.  The clocks are updated according to the elapsed time \code{T},
except for those transitions that are disabled or inhibited:

\begin{maude}
crl [tick] : M$\;\,$:$\;\,$CLOCKS$\;\,$:$\;\,$NET  =>  M$\;\,$:$\;\,$increaseClocks(M,$\:$CLOCKS,$\:$NET,$\:\highlight{T}$)$\;\,$:$\;\,$NET
    if  $\highlight{T}$ <= mte(M, CLOCKS, NET) [nonexec] .
\end{maude}

\noindent This rule is not executable (\code{[nonexec]}), since the
variable \code{T}, which denotes how much time
advances, only  occurs
in the right-hand side of the rule. \code{T} is
therefore \emph{not} assigned any value by the substitution matching the rule
with the state being rewritten.
This time advance  \code{T}
must be less or equal to  the smallest remaining time until the upper
bound of an  active
transition in the marking \code{M} is reached:\footnote{To increase
  readability, we sometimes replace parts of Maude code or output from
  Maude analyses by `\texttt{...}'.}

%\begin{adjustbox}{width=\textwidth}
\begin{maude}
op mte : Marking ClockValues Net -> TimeInf .
eq mte(M, (L$\;$->$\;$T)$\;\,$;$\;\,$CLOCKS, (L$\;\,$:$\;\,$PRE$\;\,$-->$\;\,$POST ... in$\;\,$[T1$\,$:$\,\highlight{inf}$])$\;\,$;$\;\,$NET)
 = mte(M, CLOCKS, NET) .
eq mte(M, (L$\;$->$\;$T)$\;\,$;$\;\,$CLOCKS, (L$\;\,$:$\;\,$PRE --> ... in$\;\,$[T1$\,$:$\,\highlight{T2}$])$\;\,$;$\;\,$NET)
 = if active(M, L : ...) then min(T2 - T, mte(M, CLOCKS, NET))
   else mte(M, CLOCKS, NET) fi .
eq mte(M, empty, NET) = inf .
\end{maude}

The function \code{increaseClocks} increases the transition clocks of
the active transitions by
the elapsed time:

\begin{maude}
op increaseClocks : Marking ClockValues Net Time -> ClockValues .
eq increaseClocks(M, $\highlight{(L -> T1)}$ ; CLOCKS, (L : PRE --> ...) ; NET, $\highlight{T}$)
 = if active(M, L : PRE --> ...)
   then $\highlight{(L -> T1 + T)}$ else $\highlight{(L -> T1)}$ fi ; increaseClocks(M,$\,$CLOCKS,$\,$NET,$\,$T)$\,$.
eq increaseClocks(M, empty, NET, T) = empty .
\end{maude}

The following function $\encBase{\_}$ formalizes how markings and nets are
represented as terms, of respective sorts \texttt{Marking} and
\texttt{Net}, in rewriting logic.\footnote{$\encBase{\_}$ is parametrized
  by the naming function $\nameit$; however, we do not show this
  parameter explicitly.}

\begin{definition}%[Representation]
Let $\PN=\tuple{\Place, \Transition, \emptyset, \relPre{(.)},
\relPost{(.)}, \relInhib{(.)}, \markingInit, \parIntervalStatic,
\mathit{true}}$ be an ITPN.   Then $\encBase{\_} : \mathbb{N}^P \rightarrow
\mathcal{T}_{\mathcal{R}_0,\mathtt{Marking}}$ is defined by
  $\encBase{\{p_1\mapsto n_1, \ldots, p_m\mapsto n_m\}} = \nameit(p_1)
  \texttt{\,|->\;} n_1 \;\texttt{;} \,\ldots\, \texttt{;}\; \nameit(p_m)\;
  \texttt{|->} \;n_m$, where  we can omit  entries $\nameit(p_j)
  \texttt{\,|->\;} 0$. % Likewise, $\encBase{\_} : \mathbb{R}^H_+ \rightarrow
   The Maude representation $\encBase{\PN}$ of the net $\PN$ is
  the term $\encBase{t_1}\, \texttt{;}\, \cdots\,
\texttt{;}\, \encBase{t_n} $
of sort \code{Net}, where, for each $t_i \in T$, $\encBase{t_i}$ is
\newline \code{
$\nameit(t_i)\;$:$\;\encBase{\relPre{(t_i)}}$ -{}->
    $\encBase{\relPost{(t_i)}}$ inhibit
    $\encBase{\relInhib{(t_i)}}$ in [$\leftEP{J(t_i)}\:$:$\:\rightEP{J(t_i)}$].
}
\end{definition}

\subsection{Correctness of the semantics} \label{sec:bisimulation}

In this section we show that our rewriting logic semantics
$\mathcal{R}_0$ correctly simulates any ITPN $\PN$. More concretely, we
provide a bisimulation result relating behaviors from $a_0 =
(\marking_0,\parIntervalStatic)$
in $\PN$ with behaviors in
$\mathcal{R}_0$ starting from the initial state
$\encBase{\marking_0}$\,\code{:}\,\code{initClocks($\encBase{\PN}$)}\,\code{:}\,$\encBase{\PN}$,
where  \texttt{initClocks(\(\mathit{net}\))} is the clock valuation that
assigns the value \texttt{0} to each transition (clock) $\eta(t)$ for
each transition (label) $\eta(t)$ in $\mathit{net}$.

\medskip
Since a transition in $\PN$ consists of a delay followed by
a discrete transition, we  define a corresponding rewrite
relation $\mapsto$ combining the \code{tick} and
\code{applyTransition} rules, and prove the bisimulation for this
relation.%  . We then show that these

\begin{definition}\label{def:two-steps}
    Let $t_1, t_2, t_3$ be terms of sort \code{State} in
    $\mathcal{R}_0$. We write $t_1
    \mapsto t_3$ if
    there exists a $t_2$ such that $t_1 \longrightarrow t_2$ is a one-step
    rewrite applying the \code{tick} rule in $\mathcal{R}_0$
    and $t_2 \longrightarrow t_3$
    is a one-step rewrite applying the \code{applyTransition} rule in
    $\mathcal{R}_0$.
    Furthermore, we write $t_1 \mapsto^* t_2$ to
    indicate that there exists a
    sequence of $\mapsto$ rewrites from $t_1$ to $t_2$.
\end{definition}

The following relation relates our clock-based states with the
changing-interval-based states; the correspondence is a
straightforward function, except for the case when the upper bound of
a transition is $\infty$:

\begin{definition}  \label{def:bisim-relation}
  Let $\PN = \tuple{\Place, \Transition, \emptyset, \relPre{(.)},
    \relPost{(.)}, \relInhib{(.)}, \markingInit, \parIntervalStatic,
    \mathit{true}}$
  be an ITPN and $\mathcal{S_{\PN}} = (\mathcal{A},a_0,\rightarrow)$ be its
  concrete semantics.
   Let
  $T_{\Sigma,\texttt{State}}$ denote the set of $E$-equivalence
  classes of ground terms of sort \texttt{State} in $\mathcal{R}_0$.
  We define a relation  $\approx \, \subseteq \mathcal{A} \times T_{\Sigma,
    \texttt{State}}$, relating states in the concrete semantics
  of $\PN$ to states (of sort \texttt{State}) in
  $\mathcal{R}_0$, where for all states $(\marking,\interval) \in \mathcal{A}$,
  $(\marking,\interval) \approx m\; \code{:}
  \;\mathit{clocks}\; \code{:} \; \mathit{net}$
 if and only if $m = \encBase{M}$ and $\mathit{net} = \encBase{\PN}$
 and  for
  each transition $t\in \Transition$,
  \begin{itemize}
    \item  the value of $\eta(t)$ in $\mathit{clocks}$ is \code{0} if
      $t$ is not enabled in $M$;
    \item otherwise:
      \begin{itemize}
      \item if $\rightEP{\parIntervalStatic(t)} \not = \infty$ then
the value of clock $\eta(t)$ in $\mathit{clocks}$ is  $\rightEP{\parIntervalStatic(t)}
          - \rightEP{\interval(t)}$;
          \item otherwise, if $\leftEP{\interval(t)} > 0$ then
            $\eta(t)$ has
            the value $\leftEP{\parIntervalStatic(t)}
          - \leftEP{\interval(t)}$ in $\mathit{clocks}$; otherwise, the  value
          of $\eta(t)$ in
          $\mathit{clocks}$ could be any value $\tau\geq
          \leftEP{\parIntervalStatic(t)}$.
        \end{itemize}
        \end{itemize}
\end{definition}

\begin{theorem}
\label{thm:ground}
Let $\PN = \tuple{\Place, \Transition, \emptyset, \relPre{(.)},
  \relPost{(.)}, \relInhib{(.)}, \markingInit, \parIntervalStatic,
  true}$ be an ITPN,
 and $\mathcal{R}_0 = (\Sigma,E,L,R)$. Then,
$\approx$ is a bisimulation between the transition systems
$\mathcal{S_{\PN}} = (\mathcal{A},a_{0},\rightarrow)$ and \\$\left(T_{\Sigma
    , \texttt{State}}, (\encBase{\marking_0}
  \;\code{:}\;\mathtt{initClocks}(\encBase{\PN})\;\code{:}\;\encBase{\PN}),
\mapsto\right)$.
\end{theorem}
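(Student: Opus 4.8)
The plan is to verify directly the two conditions in the definition of bisimulation (Section~2.1) for the relation~$\approx$. Condition~(i), that the initial states are related, is essentially by construction: in~$a_0 = (\marking_0, \parIntervalStatic)$ every enabled transition~$t$ has $\interval(t) = \parIntervalStatic(t)$, so the required clock value is $\rightEP{\parIntervalStatic(t)} - \rightEP{\parIntervalStatic(t)} = 0$ when the upper bound is finite, and (when the upper bound is~$\infty$ and $\leftEP{\parIntervalStatic(t)} > 0$) the value $\leftEP{\parIntervalStatic(t)} - \leftEP{\parIntervalStatic(t)} = 0$ as well; disabled transitions get clock~$0$ by definition of \code{initClocks}. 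All of these match the clocks in $\encBase{\marking_0}\,\code{:}\,\mathtt{initClocks}(\encBase{\PN})\,\code{:}\,\encBase{\PN}$, and the marking and net components agree by definition of $\encBase{\_}$.

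For condition~(ii) I would argue both directions by analysing a single combined step. On the concrete side, a step $(\marking,\interval) \rightarrow (\marking',\interval')$ factors as a delay $\fleche{\delta}$ followed by a discrete firing $\fleche{\transition_f}$ (Definition~\ref{def:pnet-semantics}); on the Maude side, a $\mapsto$-step is a \code{tick} followed by \code{applyTransition} (Definition~\ref{def:two-steps}). So the core of the proof is two matching lemmas: first, that if $(\marking,\interval) \approx m\,\code{:}\,\mathit{clocks}\,\code{:}\,\mathit{net}$ and $(\marking,\interval)\fleche{\delta}(\marking,\interval'')$, then there is a \code{tick} step $m\,\code{:}\,\mathit{clocks}\,\code{:}\,\mathit{net} \longrightarrow m\,\code{:}\,\mathit{clocks}''\,\code{:}\,\mathit{net}$ with delay $T = \delta$ (the side condition $T \le \mathtt{mte}(\ldots)$ is exactly the concrete requirement that no active transition overshoots its upper bound, translated through~$\approx$), and $(\marking,\interval'') \approx m\,\code{:}\,\mathit{clocks}''\,\code{:}\,\mathit{net}$; and second, the analogous statement for the discrete step, where the guard $\transition_f \in \enabled(\marking)\setminus\inhibited(\marking) \land \leftEP{\interval(\transition_f)} = 0$ corresponds to \code{active(M, ...)} $\land$ \code{T in INTERVAL}, and the updates to $\interval'$ via \code{newlyEnabled} correspond to the resetting of \code{L} to~$0$ and the action of \code{updateClocks}. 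The reverse directions reuse the same correspondences, reading the equalities the other way. A careful case split on whether $\rightEP{\parIntervalStatic(t)} = \infty$, and within that on whether $\leftEP{\interval(t)} > 0$, is needed throughout, since $\approx$ is only a partial function (not injective) precisely in the case of an enabled transition with infinite upper bound and lower bound already reached.

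The main obstacle I anticipate is exactly that last case: when $\rightEP{\parIntervalStatic(t)} = \infty$ and $\leftEP{\interval(t)} = 0$, the relation~$\approx$ permits the clock $\eta(t)$ to hold \emph{any} value $\tau \ge \leftEP{\parIntervalStatic(t)}$, so $\approx$ relates one concrete state to many Maude states and vice versa. For the forward direction from the concrete side this is harmless (pick any witness), but for the forward direction from the Maude side one must check that an arbitrary \code{tick} of duration~$T$ from such a loosely-constrained clock state can be matched by a concrete delay — and, crucially, that the \code{applyTransition} firing of~$t$ remains enabled, i.e.\ that \code{T in [T1 : inf]} reduces to $\mathtt{T1} \le \mathtt{T}$, which holds because the clock is already $\ge \leftEP{\parIntervalStatic(t)} = \mathtt{T1}$ and only increases under \code{tick}. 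One also has to confirm that the invariant ``clocks of disabled transitions are~$0$'' is preserved by both rules (this is what \code{updateClocks} and \code{increaseClocks} enforce), since $\approx$ bakes that invariant in; and that firing a transition cannot make the \code{mte} side condition vacuously unsatisfiable. Apart from this, the remaining steps — matching \code{M - PRE + POST} with $\marking - \relPre{\transition_f} + \relPost{\transition_f}$, and the \code{newlyEnabled}/\code{updateClocks} bookkeeping — are routine equational unfolding using the definitions of the auxiliary functions given in Section~\ref{subsec:eq-theory}.
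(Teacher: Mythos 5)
Your proposal is correct and follows essentially the same route as the paper: condition~(i) is checked directly (all clocks are $0$ initially and $\interval = \parIntervalStatic$), and condition~(ii) is established by decomposing the concrete step into $\fleche{\delta}$ followed by $\fleche{\transition_f}$ and the Maude step into \code{tick} followed by \code{applyTransition}, matching the \code{mte} bound with the concrete requirement that no active transition overshoots its upper bound, and the guard \code{active(...)} $\land$ \code{T in INTERVAL} with firability. If anything, your explicit case analysis for $\rightEP{\parIntervalStatic(t)} = \infty$ and the preservation of the ``disabled clocks are zero'' invariant is more careful than the paper's own (rather terse) argument, which leaves these points implicit.
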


\begin{proof}
  By definition $a_0 = (\marking_0,\parIntervalStatic) \approx (\encBase{\marking_0}
  \;\code{:}\;\mathtt{initClocks}(\encBase{\PN})\;\code{:}\;\encBase{\PN})$,
  since all clocks are \texttt{0} in $\mathtt{initClocks}(...)$, so
  that these clocks satisfy all the constraints in
  Definition~\ref{def:bisim-relation} since  $I=J$ in the initial state.
  Hence, condition (i) for $\approx$ being a bisimulation
  is satisfied. Condition (ii) follows from the  two lemmas below.
\end{proof}

\begin{lemma}
If $\left(\marking,\interval\right) \to
\left(\marking',\interval'\right)$
and $\left(\marking,\interval\right) \approx
(\encBase{\marking} \code{:} \mathit{clocks} \code{:} \encBase{\PN})$
then there is a $\mathit{clocks'}$ such that
$(\encBase{\marking}
\code{:}\mathit{clocks}\code{:}\encBase{\PN})
\mapsto (\encBase{\marking'}\code{:}\mathit{clocks'}
\code{:}\encBase{\PN})$
and $\left(\marking',\interval'\right) \approx
(\encBase{\marking'}\code{:} \mathit{clocks'} \code{:}\linebreak  \encBase{\PN})$.
\end{lemma}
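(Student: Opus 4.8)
\emph{Proof strategy.} The plan is to exploit the decomposition of a concrete step. By Definition~\ref{def:pnet-semantics}, $\left(\marking,\interval\right)\to\left(\marking',\interval'\right)$ means there are a delay $\delta\in\grandrplus$, a transition $\transition_f$, and an intermediate state $\left(\marking'',\interval''\right)$ with $\left(\marking,\interval\right)\fleche{\delta}\left(\marking'',\interval''\right)$ and $\left(\marking'',\interval''\right)\fleche{\transition_f}\left(\marking',\interval'\right)$. I would mirror this by the $\mapsto$ step of Definition~\ref{def:two-steps}: first apply the \code{tick} rule with the variable \code{T} instantiated to $\delta$, then apply \code{applyTransition} with \code{L} instantiated to $\eta(\transition_f)$. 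Since a time step does not change the marking, $\marking''=\marking$, the \code{tick} step yields the state $\encBase{\marking}\code{ : }\mathit{clocks}''\code{ : }\encBase{\PN}$ with $\mathit{clocks}''=\code{increaseClocks}(\encBase{\marking},\mathit{clocks},\encBase{\PN},\delta)$, and the \code{applyTransition} step yields $\encBase{\marking'}\code{ : }\mathit{clocks}'\code{ : }\encBase{\PN}$ for the $\mathit{clocks}'$ produced by the rule; I take this $\mathit{clocks}'$ as the witness and must then check $\left(\marking'',\interval''\right)\approx\encBase{\marking}\code{ : }\mathit{clocks}''\code{ : }\encBase{\PN}$ and $\left(\marking',\interval'\right)\approx\encBase{\marking'}\code{ : }\mathit{clocks}'\code{ : }\encBase{\PN}$.

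For the \code{tick} step I must discharge its side condition $\delta\le\code{mte}(\encBase{\marking},\mathit{clocks},\encBase{\PN})$. By definition of \code{mte} it suffices to show $\delta\le\rightEP{\parIntervalStatic(\transition)}-\mathit{clocks}(\eta(\transition))$ for every transition $\transition$ that is \code{active} in $\marking$ and has $\rightEP{\parIntervalStatic(\transition)}\neq\infty$ (inhibited, disabled, and infinite–upper-bound transitions do not contribute to \code{mte}). For such a $\transition$, the side condition of the time-transition relation gives $\rightEP{\interval''(\transition)}=\rightEP{\interval(\transition)}-\delta\ge 0$, while $\approx$ (the enabled, finite case) gives $\mathit{clocks}(\eta(\transition))=\rightEP{\parIntervalStatic(\transition)}-\rightEP{\interval(\transition)}$; combining the two yields the bound. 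To see that the resulting state is $\approx$-related to $\left(\marking'',\interval''\right)$, I do a per-transition case analysis: \code{increaseClocks} adds $\delta$ exactly to the clocks of the active transitions, the time-transition relation decreases both interval endpoints by $\delta$ (clamping the left at $0$; the right may in principle go negative, but the \code{mte} bound keeps it non-negative for active transitions), so the equalities required by $\approx$ in the finite case are preserved; inhibited and disabled transitions keep both their clock and their interval; and in the infinite-upper-bound case the clock grows exactly like $\leftEP{\parIntervalStatic(\transition)}-\leftEP{\interval''(\transition)}$ while $\leftEP{\interval''(\transition)}>0$ and exceeds $\leftEP{\parIntervalStatic(\transition)}$ once that endpoint hits $0$. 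A small invariant used here, which I would state separately, is that time elapse preserves the width $\rightEP{\interval(\transition)}-\leftEP{\interval(\transition)}$ of every interval whose left endpoint is still positive; this is what links left-endpoint data to clock values in the finite case.

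For the \code{applyTransition} step I must check its condition: $\transition_f$ is \code{active} in $\marking''=\marking$, which is immediate from the precondition $\transition_f\in\enabled(\marking)\wedge\transition_f\notin\inhibited(\marking)$ of the discrete-transition relation, and $\mathit{clocks}''(\eta(\transition_f))\in[\leftEP{\parIntervalStatic(\transition_f)} : \rightEP{\parIntervalStatic(\transition_f)}]$. The latter follows from $\leftEP{\interval''(\transition_f)}=0$ (also a precondition of the discrete relation): when $\rightEP{\parIntervalStatic(\transition_f)}\neq\infty$ the clock equals $\rightEP{\parIntervalStatic(\transition_f)}-\rightEP{\interval''(\transition_f)}$, which is $\le\rightEP{\parIntervalStatic(\transition_f)}$ by the \code{mte} bound and $\ge\leftEP{\parIntervalStatic(\transition_f)}$ by the width invariant together with $\leftEP{\interval''(\transition_f)}=0$; when $\rightEP{\parIntervalStatic(\transition_f)}=\infty$, $\approx$ admits any clock value $\ge\leftEP{\parIntervalStatic(\transition_f)}$, which is exactly what we have. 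The resulting marking is $(\marking''-\relPre{\transition_f})+\relPost{\transition_f}$, i.e.\ $\encBase{\marking'}$, by definition of the discrete relation. It then remains to verify that $\mathit{clocks}'$ — namely $\eta(\transition_f)\to 0$ together with, for every other $\transition$, $\eta(\transition)\to\mathit{clocks}''(\eta(\transition))$ if $\relPre{\transition}\le\marking''-\relPre{\transition_f}$ and $\eta(\transition)\to 0$ otherwise — satisfies the $\approx$ conditions for $\left(\marking',\interval'\right)$. This reduces to comparing the condition ``$\relPre{\transition}\le\marking''-\relPre{\transition_f}$'' of \code{updateClocks} with the predicate $\newlyEnabled(\transition,\marking'',\transition_f)$: a transition enabled in $\marking'$ is newly enabled iff it is $\transition_f$ or was not enabled in $\marking''-\relPre{\transition_f}$, and then $\interval'(\transition)=\parIntervalStatic(\transition)$ corresponds to clock $0$; a transition enabled in $\marking'$ but not newly enabled keeps $\interval'(\transition)=\interval''(\transition)$ and its clock unchanged, consistent because $\left(\marking'',\interval''\right)\approx\encBase{\marking}\code{ : }\mathit{clocks}''\code{ : }\encBase{\PN}$; and a transition disabled in $\marking'$ must have clock $0$ under $\approx$, which is what \code{updateClocks} assigns since $\marking''-\relPre{\transition_f}\le\marking'$.

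I expect the main obstacle to be this last bookkeeping in the \code{applyTransition} part: showing that \code{updateClocks} reproduces precisely the ``newly enabled vs.\ not'' distinction drawn by $\newlyEnabled$, and that in each of the four cases (the fired transition, a newly enabled transition, a transition that stays enabled, a transition that becomes or stays disabled) the produced clock value meets the $\approx$ constraints — especially in the $\rightEP{\parIntervalStatic(\transition)}=\infty$ subcases, where $\approx$ is a genuine relation rather than a function, so one must argue that $\interval'$ is \emph{some} admissible witness for the produced clock value rather than the unique one. Isolating the interval-width invariant as a standalone lemma (used in both the \code{tick} and the \code{applyTransition} steps) is what keeps the argument manageable.
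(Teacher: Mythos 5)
Your proposal is correct and follows essentially the same route as the paper's proof: decompose the concrete step into a delay followed by a discrete firing, instantiate the \code{tick} variable \code{T} with $\delta$ and discharge the \code{mte} condition from the non-negativity side condition $\rightEP{\interval''(\transition)}\geq 0$ together with the clock/interval correspondence of $\approx$, then discharge the \code{applyTransition} condition from $\leftEP{\interval''(\transition_f)}=0$. Your write-up is in fact more thorough than the paper's, which only argues that the two rules are applicable and leaves implicit the re-establishment of $\approx$ (the interval-width observation and the \code{updateClocks}/$\newlyEnabled$ case analysis that you spell out).
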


\begin{proof}
Since $\left(\marking,\interval\right) \to \left(\marking',\interval'\right)$, we
have that there exists an intermediate pair $(\marking,\interval'') \in (\Transition\cup\grandrplus)$ such that
$\left(\marking,\interval\right) \fleche{\delta} \left(\marking,\interval''\right)$~and
$\left(\marking,\interval''\right) \fleche{\transition_f} \left(\marking',\interval'\right)$.

\noindent
For the first step ($\fleche{\delta}$), since $\left(\marking,\interval\right) \fleche{\delta} \left(\marking,\interval''\right)$,
there exists a $\delta$ such that $\forall t \in \Transition$, either $\interval''(\transition) =
\interval(\transition)$ or $\leftEP{\interval''(\transition)} = \max(0,\leftEP{\interval(\transition)} - \delta)$
and $\rightEP{\interval''(\transition)} = \rightEP{\interval(\transition)} - \delta$. In both cases
we have that $\forall \transition \in \Transition, \rightEP{\interval''(\transition)}\geq 0$.
Now, letting \code{T} $= \delta$, it must be the case that
\code{T <= mte($\encBase{\marking}$, $clocks$, $\encBase{\PN}$)}. This is
because \code{mte($\encBase{\marking}$, $clocks$, $\encBase{\PN}$)} is
defined to be equal to the minimum difference between
$\rightEP{\parIntervalStatic(\transition)}$ \linebreak and~the clock value of $t$ out of all
$\transition \in \Transition$. That is, it is the maximum time that can elapse before an
enabled transition reaches the right endpoint of its interval. In other words, an upper limit
for $\delta$. Hence, the \code{tick}-rule can be applied to
$(\encBase{\marking}~{ : }~\,{clocks}~{ : }\,~\encBase{\PN})$
with all enabled clocks~having their time ad\-vanced by $\delta$.

\noindent
For the second step ($\fleche{\transition_f}$), since $\left(\marking,\interval''\right)
\fleche{\transition_f} \left(\marking',\interval'\right)$, the transition $\transition_f$
is active and $\leftEP{\interval''(\transition_f)} = 0$. Since
$\leftEP{\interval(\transition_f)} = 0$,
the clock of transition $\transition_f$ must be in the interval $[\leftEP{\parIntervalStatic(\transition_f)},
\rightEP{\parIntervalStatic(\transition_f)}]$ by definition of $\mathit{clocks}$ for $(\marking,\interval'')$.
This is precisely the condition for applying the \code{applyTransition}-rule to the
resulting state of the previous \code{tick}-rule application.
\end{proof}

\begin{lemma}
If
$(\encBase{\marking}\;\code{:}\;\mathit{clocks}\;\code{:}\;\encBase{\PN})
\mapsto b$  and \newline  $\left(\marking, \interval\right) \approx
(\encBase{\marking}\;\code{:}\;\mathit{clocks}\;\code{:}\;\encBase{\PN})$,
then there exists a state $\left(\marking', \interval'\right) \in \mathcal{A}$
such that $\left(\marking, \interval\right) \to \left(\marking', \interval'\right)$ and
$\left(\marking', \interval'\right) \approx b$. \end{lemma}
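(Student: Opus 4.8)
The plan is to prove the converse direction of the bisimulation by reversing the argument of the previous lemma. I start from a $\mapsto$-rewrite $(\encBase{\marking}\;\code{:}\;\mathit{clocks}\;\code{:}\;\encBase{\PN}) \longrightarrow t_2 \longrightarrow b$, where the first step applies the \code{tick} rule and the second applies \code{applyTransition}. From the \code{tick} step I extract the amount of elapsed time $\delta = \code{T}$, which by the rule's condition satisfies $\code{T} \leq \code{mte}(\encBase{\marking}, \mathit{clocks}, \encBase{\PN})$. From the \code{applyTransition} step I extract the transition $\transition_f$ that is fired, and its clock value $\tau$ in the updated clock valuation, which by that rule's condition lies in $[\leftEP{\parIntervalStatic(\transition_f)} : \rightEP{\parIntervalStatic(\transition_f)}]$, and which satisfies $\code{active}(\marking, \ldots)$.

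\textbf{Constructing the matching concrete step.} Using $\delta$ and $\transition_f$, I would define $(\marking, \interval'')$ and $(\marking', \interval')$ by the concrete semantics: set $\interval'' = \interval \fleche{\delta}$ componentwise as in Definition~\ref{def:pnet-semantics}, and let $(\marking'', \interval'') \fleche{\transition_f} (\marking', \interval')$. I must then check that both steps are legal. For the time step, the key point is the translation between the \code{mte} bound and the concrete-semantics requirement that $\rightEP{\interval''(\transition)} \geq 0$ for all enabled $\transition$: since $\code{mte}$ is (by its defining equations) the minimum of $\rightEP{\parIntervalStatic(\transition)} - (\text{clock of }\transition)$ over enabled, non-inhibited $\transition$ with finite upper bound, and $\approx$ gives $\text{clock of }\transition = \rightEP{\parIntervalStatic(\transition)} - \rightEP{\interval(\transition)}$ in that case, we get $\code{mte} = \min_\transition \rightEP{\interval(\transition)}$, so $\delta \leq \code{mte}$ exactly guarantees $\rightEP{\interval''(\transition)} = \rightEP{\interval(\transition)} - \delta \geq 0$. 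For the discrete step I must show $\leftEP{\interval''(\transition_f)} = 0$: since $\approx$ relates $\tau \in [\leftEP{\parIntervalStatic(\transition_f)}, \rightEP{\parIntervalStatic(\transition_f)}]$ to $\interval(\transition_f)$ via the (three-way) clause of Definition~\ref{def:bisim-relation}, and $\transition_f$ is active with $\tau \geq \leftEP{\parIntervalStatic(\transition_f)}$, tracing the relation back yields $\leftEP{\interval''(\transition_f)} = 0$ as required. Activeness of $\transition_f$ in $\marking$ transfers directly since $\approx$ forces $\marking = \encBase{\marking}$ on the marking component and \code{active} is defined to mirror the PITPN notion.

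\textbf{Re-establishing $\approx$ on the successors.} Finally I would verify $(\marking', \interval') \approx b$, where $b = \encBase{\marking'}\;\code{:}\;\mathit{clocks'}\;\code{:}\;\encBase{\PN}$. The marking and net components match by construction of the \code{applyTransition} rule. For the clocks: transitions newly enabled by firing $\transition_f$ get clock $0$ (via \code{updateClocks} with argument $\marking - \relPre{\transition_f}$, which resets exactly those not enabled in the intermediate marking) and interval $\parIntervalStatic(\transition)$ in the concrete semantics, so the $\approx$-clauses hold trivially; disabled transitions get clock $0$ on both sides; and the remaining (continuously enabled, not newly enabled) transitions keep $\interval$ and keep their incremented clock, so the arithmetic identities of Definition~\ref{def:bisim-relation} are preserved under adding $\delta$ to both the clock and subtracting $\delta$ from the interval endpoint. \textbf{The main obstacle} I anticipate is the $\rightEP{\parIntervalStatic(\transition)} = \infty$ case of $\approx$, where the clock is only loosely constrained ($\tau \geq \leftEP{\parIntervalStatic(\transition)}$ once $\leftEP{\interval(\transition)} = 0$): here I need to argue that for \emph{any} concrete witness $(\marking', \interval')$ compatible with the fired transition's clock $\tau$, the loose constraint is still satisfiable, which amounts to checking that the nondeterminism in $\approx$ is ``wide enough'' to absorb the choice of $\delta$ and the choice of firing time — essentially a small case analysis on whether $\transition$ has just crossed its lower bound during the \code{tick} step.
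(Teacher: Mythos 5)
Your proof follows essentially the same route as the paper's: decompose the $\mapsto$ step into its \code{tick} and \code{applyTransition} components, read off $\delta$ from the tick variable \code{T} and the fired transition $\transition_f$ from the second rule, use the \code{mte} bound to legitimize the time transition, and use the clock-in-static-interval condition (transferred through $\approx$) to obtain $\leftEP{\interval''(\transition_f)}=0$ for the discrete transition. You are in fact more thorough than the paper, whose proof never explicitly verifies that $\approx$ is re-established on the successor states; your closing case analysis on newly-enabled, disabled, and continuously-enabled transitions, and your flagging of the $\infty$-endpoint clause of Definition~\ref{def:bisim-relation}, is precisely the detail the paper leaves implicit.
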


\begin{proof}
Since $(\encBase{\marking} \code{:} \mathit{clocks} \code{:} \encBase{\PN}) \mapsto b$,
we have that there exists an intermediate state \\
$(\code{M\!  : \! CLOCKS \! : \! NET}) \in T_{\Sigma , \texttt{State}}$
such that $(\encBase{\marking} \code{:} \mathit{clocks} \code{:} \encBase{\PN})
\stackrel{\texttt{tick}}{\longrightarrow} (\code{M : CLOCKS : NET})$ and
$(\code{M : CLOCKS : NET}) \stackrel{\texttt{applyTransition}}{\longrightarrow} b$.

\smallskip\noindent
For the first step $\left(\stackrel{\texttt{tick}}{\longrightarrow}\right)$, since
$(\encBase{\marking}\;\code{:}\;\mathit{clocks}\;\code{:}\;\encBase{\PN})
\stackrel{\texttt{tick}}{\longrightarrow} (\code{M : CLOCKS : NET})$, there is a
\code{T <= mte($\encBase{\marking}$, $clocks$, $\encBase{\PN}$)}.
Now, as in the previous lemma, since the \code{mte} is an upper limit for $\delta$,
we have that there exists a time transition
$\left(\marking,\interval\right) \fleche{\delta} \left(\marking,\interval''\right)$
with $\delta$ equal to the \code{T} used in the above \code{tick}-rule application so that
$(\code{M : CLOCKS : NET}) = (\encBase{\marking}\;\code{:}\;\mathit{clocks}\;\code{:}\;\encBase{\PN})$.

\smallskip\noindent
For the second step $\left(\stackrel{\texttt{applyTransition}}{\longrightarrow}\right)$, since
$(\encBase{\marking}\;\code{:}\;\mathit{clocks}\;\code{:}\;\encBase{\PN})
\stackrel{\texttt{applyTransition}}{\longrightarrow} b$, there must be a transition $\transition_f$ which
is active and whose clock is in the interval $[\leftEP{\parIntervalStatic(\transition_f)},
\rightEP{\parIntervalStatic(\transition_f)}]$. By definition of $\mathit{clocks}$ on $(\marking,\interval'')$,
This is precisely the condition for the discrete step $\transition_f$ to $(\marking,\interval'')$.
Hence, $\left(\marking,\interval''\right) \fleche{\transition_f} \left(\marking',\interval'\right)$ and
$(\encBase{\marking}\;\code{:}\;\mathit{clocks}\;\code{:}\;\encBase{\PN})
\stackrel{\texttt{applyTransition}}{\longrightarrow}
(\encBase{\marking'}\code{:}\mathit{clocks}\;\code{:}\encBase{\PN})$.
\end{proof}\vspace*{-4mm}

\subsection{Some variations of $\rtheorySem$}\label{sec:optim}

This section introduces the theories $\rtheorySemN{1}$ and
$\rtheorySemN{2}$ as  two variations of $\rtheorySem$. %  to
   $\rtheorySemN{1}$ avoids  consecutive applications of
  the \texttt{tick} rule. This  is  useful for \emph{symbolic}
    analysis,  since in concrete executions of $\rtheorySemN{1}$, a tick rule
    application may not advance time far enough for a transition to become
    enabled.     $\rtheorySemN{2}$
      adds a ``global clock'', denoting how much time has elapsed in the system. % (In
       This allows for  analyzing  time-bounded properties (can a certain state be reached
  in a certain time interval?).\vspace*{-1mm}

\subsubsection{The theory $\rtheorySemN{1}$} To avoid consecutive tick
rule applications, we  add a new component---whose value is
either \texttt{tickOk} or \texttt{tickNotOk}---to the global
state. The tick rule can only be applied when this new component of the
global state has the value \texttt{tickOk}. We therefore add a new
constructor \verb@_:_:_:_@ for these extended global states, a new
sort \texttt{TickState} with values \texttt{tickOk} and
\texttt{tickNotOk}, and modify (or add) the two rewrite rules below:

\begin{maude}
sort TickState .
ops tickOk tickNotOk : -> TickState [ctor] .
op _:_:_:_ : TickState Marking ClockValues Net -> State [ctor] .
var TS : TickState .

crl [applyTransition] :
    $\highlight{TS}$ : M : ((L -> T) ; CLOCKS) : (L : PRE --> ...) ; NET)
   =>
    $\highlight{tickOk}$ : ((M - PRE) + POST) :  ...
   if active(...) and (T in INTERVAL) .

crl [tick] : $\highlight{tickOk}$ : M : ...  =>  $\highlight{tickNotOk}$ : M : increaseClocks(...) ...
   if  T <= mte(M, CLOCKS, NET) [nonexec] .
\end{maude}

\begin{theorem}\label{th:r0r1}
    Let  $m\; \code{:} \;\mathit{clocks}\; \code{:} \; \mathit{net}$
    be a ground term of sort \code{State} in
  $\rtheorySem$. Then \vspace*{-1mm}
 \[m\; \code{:} \;\mathit{clocks}\; \code{:} \; \mathit{net}  \longrightarrow^*_{\rtheorySem} m'\; \code{:} \;\mathit{clocks'}\; \code{:} \; \mathit{net}\]
        if and only if \vspace*{-1mm}
   \[\mathtt{tickOk}\; \code{:}\; m\; \code{:} \;\mathit{clocks}\; \code{:} \; \mathit{net}
    \longrightarrow^*_{\rtheorySemN{1}}
    \mathtt{tickNotOk}\; \code{:}\; m'\; \code{:} \;\mathit{clocks'}\; \code{:} \; \mathit{net}.\]
\end{theorem}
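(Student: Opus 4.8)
The plan is to prove the two directions of the equivalence by relating runs in $\rtheorySem$ to runs in $\rtheorySemN{1}$ step by step, exploiting the fact that the only difference between the two theories is the bookkeeping component of sort \code{TickState} and the restriction it imposes on when \code{tick} may fire. The key observation is that in $\rtheorySem$ the \code{applyTransition} and \code{tick} rules act on the underlying triple $m \code{ : } \mathit{clocks} \code{ : } \mathit{net}$ in exactly the same way as their $\rtheorySemN{1}$ counterparts act on the last three components of $\mathtt{TS} \code{ : } m \code{ : } \mathit{clocks} \code{ : } \mathit{net}$; the only new effect in $\rtheorySemN{1}$ is that \code{applyTransition} forces the flag to \texttt{tickOk} and \code{tick} forces it to \texttt{tickNotOk} and is guarded by \texttt{tickOk}.

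For the ``only if'' direction I would take a rewrite sequence $m \code{:} \mathit{clocks} \code{:} \mathit{net} \longrightarrow^*_{\rtheorySem} m' \code{:} \mathit{clocks'} \code{:} \mathit{net}$ and first \emph{normalize} it: since two consecutive \code{tick} steps with delays $T_1$ then $T_2$ (each satisfying its \code{mte} bound along the way) can be merged into a single \code{tick} step with delay $T_1 + T_2$ — because \code{increaseClocks} is additive in the delay on active clocks and the set of active transitions is unchanged by a pure time step, so the \code{mte} constraint is still met — any $\rtheorySem$-run can be rewritten into one in which \code{tick} and \code{applyTransition} strictly alternate, starting and ending appropriately, without changing the endpoints. (A leading or trailing \code{tick} with $T = 0$ can be inserted or is harmless.) Then I lift this alternating run to $\rtheorySemN{1}$: starting from $\mathtt{tickOk} \code{:} m \code{:} \cdots$, each \code{tick} step is enabled (flag is \texttt{tickOk}) and sets the flag to \texttt{tickNotOk}; each following \code{applyTransition} step resets it to \texttt{tickOk}; hence the $\rtheorySemN{1}$-run mimics the alternating $\rtheorySem$-run and ends in $\mathtt{tickNotOk} \code{:} m' \code{:} \mathit{clocks'} \code{:} \mathit{net}$ (ensuring the normalized run ends with a \code{tick}, which we can always arrange by appending a $0$-delay \code{tick}). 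For the ``if'' direction the argument is easier: any $\rtheorySemN{1}$-run from $\mathtt{tickOk} \code{:} \cdots$ to $\mathtt{tickNotOk} \code{:} \cdots$ projects, by erasing the flag component, onto a legal $\rtheorySem$-run, since each $\rtheorySemN{1}$ rule instance projects to the corresponding $\rtheorySem$ rule instance with identical side conditions (the flag plays no role in \code{active}, \code{mte}, or \code{T in INTERVAL}).

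The main obstacle is the merging/normalization step in the ``only if'' direction, i.e.\ showing carefully that collapsing consecutive \code{tick}s preserves all the \code{mte} side conditions and yields the same clock valuation and marking at the endpoints. This requires the small lemma that for a pure time step the enabled/inhibited status of every transition is unchanged (the marking is untouched), so \code{increaseClocks} composes additively and $\code{mte}(m, \mathit{clocks}, \mathit{net})$ decreases by exactly the elapsed time, making $T_1 \le \code{mte}$ at the first step and $T_2 \le \code{mte} - T_1$ at the second equivalent to $T_1 + T_2 \le \code{mte}$ for the merged step. One also has to handle the degenerate cases: an empty $\rtheorySem$-run ($m' = m$, $\mathit{clocks'} = \mathit{clocks}$), which corresponds to the empty $\rtheorySemN{1}$-run (flag already \texttt{tickOk}) — but note the statement demands the target flag be \texttt{tickNotOk}, so strictly we need at least one \code{tick} on the right; this is resolved by observing a $0$-delay \code{tick} is always applicable and changes nothing but the flag, and symmetrically a $\rtheorySemN{1}$-run ending in \texttt{tickNotOk} with an idle final \code{tick} projects to an $\rtheorySem$-run whose last step is a vacuous time advance. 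Once these bookkeeping details are in place, the bisimulation-style correspondence between individual steps is routine.
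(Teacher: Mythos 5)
Your proposal is correct and follows essentially the same route as the paper's (very terse) proof: the backward direction by erasing the flag and replaying the same rule instances, and the forward direction by merging consecutive \code{tick} applications with delays $t_1$ and $t_2$ into a single \code{tick} with delay $t_1+t_2$. You additionally spell out the details the paper leaves implicit — the additivity of \code{increaseClocks}, the corresponding shift of the \code{mte} bound, and the need for a $0$-delay \code{tick} to reach the \texttt{tickNotOk} flag when the run ends in \code{applyTransition} — all of which are sound.
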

\begin{proof}
For the ($\Leftarrow$) side, it suffices to follow in $\rtheorySem$ the
    same execution strategy as in $\rtheorySemN{1}$.
    For ($\Rightarrow$), it suffices to perform the following
    (reachability-preserving) change in the $\rtheorySem$ trace: the
    application of two consecutive \code{tick} rules with $T=t_1$ and $T=t_2$
    are replaced by a single application of \code{tick} with $T=t_1 +
    t_2$.
    This is enough to show that the same trace can be obtained in $\rtheorySemN{1}$.
\end{proof}

Although reachability is preserved, an ``arbitrary'' application of
the tick rule in  $\rtheorySemN{1}$, where time does not advance far
enough for a transition to be taken, could lead to a deadlock in
$\rtheorySemN{1}$ which does not correspond to a deadlock in
$\rtheorySemN{0}$.

\subsubsection{The theory $\rtheorySemN{2}$}  To analyze
whether a certain state can be reached in a certain time interval, and
to enable time-bounded analysis where behaviors  beyond the time
bound are not explored, we
add a new component, denoting the ``global time,''  to the global state:

\begin{maude}
op _:_:_:_$\highlight{@_}$ :  TickState Marking ClockValues Net $\highlight{Time}$ -> State [ctor] .
\end{maude}

%\po{Again, would be good to highlight the new field and ``Time''}

The \code{tick} and \code{applyTransition} rules are modified as expected. For instance,
the rule \code{tick} becomes:
\begin{maude}
var GT : Time .

crl [tick] :
     tickOk    : M : CLOCKS : NET $\highlight{@ GT}$
    =>
     tickNotOk : M : increaseClocks(..., T) : NET $\highlight{@ GT + T}$
    if  T <= mte(M, CLOCKS, NET) [nonexec] .
\end{maude}

\noindent For analyses up to some
time bound $\Delta$, we  add a conjunct \code{GT\,+\,T\,<=}$\;\Delta$
to the condition of this rule
to stop executing beyond the time bound.

\medskip
Let $t$ and $t'$ be terms of sort \code{State} in $\rtheorySem$. We say that
$t'$ is reached in time $d$  from $t$, written $t
\dtransition^*_{\rtheorySem} t'$, if $t \longrightarrow^*_{\rtheorySem} t'$ and
$d$ is the sum of the values taken by the variable \code{T} in the different
applications of the rule \code{tick} in such a trace.
%Moreover,
% $\enc{TS, (M,\ftval), \pi(\PN)}@d$ denotes the term
% $TS \;\code{:}\; \encBase{M} \;\code{:}\; \encBase{\ftval} \;\code{:}\; \encBase{\pi(\PN)}\;\code{@}\;d$.

 \begin{theorem}\label{th:r0r2}
  %   Let  $t= m\; \code{:} \;\mathit{clocks}\; \code{:} \; \mathit{net}$
  %   be a term of sort \code{State} in
  % $\rtheorySem$. Then,
    \[m\; \code{:} \;\mathit{clocks}\; \code{:} \; \mathit{net} \dtransition^*_{\rtheorySem} m'\; \code{:} \;\mathit{clocks'}\; \code{:} \; \mathit{net}\]
    if and only if
    \[\mathtt{tickOk}\; \code{:}\; m\; \code{:} \;\mathit{clocks}\; \code{:} \; \mathit{net}\; \code{@}\; 0
    \longrightarrow^*_{\rtheorySemN{2}}
    \mathtt{tickNotOk}\; \code{:}\; m'\; \code{:} \;\mathit{clocks'}\; \code{:} \; \mathit{net}\; \code{@}\; d.\]
\end{theorem}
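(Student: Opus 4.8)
The plan is to reuse the proof of Theorem~\ref{th:r0r1} almost verbatim, since $\rtheorySemN{2}$ is just $\rtheorySemN{1}$ equipped with an extra global-time component that the \code{tick} rule increments by exactly the chosen delay \code{T}, while \code{applyTransition} leaves it untouched. The first thing I would establish is the invariant that, along any $\rtheorySemN{2}$ trace starting from a state $\mathtt{tickOk}\;\code{:}\;m\;\code{:}\;\mathit{clocks}\;\code{:}\;\mathit{net}\;\code{@}\;0$, the current value of the \code{@}-component equals the sum of the delays \code{T} used in the \code{tick} steps performed so far (immediate induction on the length of the trace). By the definition of $\dtransition^*_{\rtheorySem}$, this sum is exactly the ``time elapsed along the trace,'' which is what links the two sides of the theorem.

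For the ($\Leftarrow$) direction I would project the given $\rtheorySemN{2}$ trace onto $\rtheorySem$ by deleting the \code{TickState} and \code{@} components from each state; the rule side conditions do not mention those components, so every \code{applyTransition}/\code{tick} step of $\rtheorySemN{2}$ maps to the corresponding step of $\rtheorySem$ with the same delay. This yields an $\rtheorySem$ trace from $m\;\code{:}\;\mathit{clocks}\;\code{:}\;\mathit{net}$ to $m'\;\code{:}\;\mathit{clocks'}\;\code{:}\;\mathit{net}$ whose total elapsed time is, by the invariant, the final \code{@}-value $d$; hence $m\;\code{:}\;\mathit{clocks}\;\code{:}\;\mathit{net}\dtransition^*_{\rtheorySem} m'\;\code{:}\;\mathit{clocks'}\;\code{:}\;\mathit{net}$.

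For the ($\Rightarrow$) direction I would start from an $\rtheorySem$ trace from $m\;\code{:}\;\mathit{clocks}\;\code{:}\;\mathit{net}$ to $m'\;\code{:}\;\mathit{clocks'}\;\code{:}\;\mathit{net}$ whose \code{tick} delays sum to $d$, and normalize it in two steps. First, apply the reachability-preserving transformation from the proof of Theorem~\ref{th:r0r1} that replaces each maximal block of consecutive \code{tick} steps, with delays $t_1,\dots,t_k$, by a single \code{tick} step with delay $t_1+\cdots+t_k$: since the marking (hence the set of active transitions) is constant across such a block, this preserves the reached state, and since it replaces delays summing to $t_1+\cdots+t_k$ by a single delay of the same value it preserves $d$ as well. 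Second, if the resulting trace is empty or ends with an \code{applyTransition} step, append one \code{tick} step with delay \code{0}; this is always enabled (at every reachable state $\mathtt{mte}(\cdots)\ge 0$, so $0\le\mathtt{mte}(\cdots)$), it changes neither the marking nor the clocks (\code{increaseClocks} with delay \code{0} is the identity), and it does not change $d$. The normalized trace has no two adjacent \code{tick} steps and ends with a \code{tick}. Finally I would lift it to $\rtheorySemN{2}$ starting from $\mathtt{tickOk}\;\code{:}\;m\;\code{:}\;\mathit{clocks}\;\code{:}\;\mathit{net}\;\code{@}\;0$ and following the same sequence of rules: since \code{applyTransition} fires regardless of the \code{TickState} value and resets it to \code{tickOk}, while \code{tick} fires only in \code{tickOk} and sets \code{tickNotOk}, the normalized trace is a legal $\rtheorySemN{2}$ trace; by the invariant it ends at \code{@} $d$, and because its last step is a \code{tick} it ends in \code{tickNotOk}, giving $\mathtt{tickOk}\;\code{:}\;m\;\code{:}\;\mathit{clocks}\;\code{:}\;\mathit{net}\;\code{@}\;0\longrightarrow^*_{\rtheorySemN{2}}\mathtt{tickNotOk}\;\code{:}\;m'\;\code{:}\;\mathit{clocks'}\;\code{:}\;\mathit{net}\;\code{@}\;d$.

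The proof is essentially bookkeeping layered on Theorem~\ref{th:r0r1}, so I do not expect a genuine obstacle; the points needing the most care are verifying that the \code{tick}-merging transformation preserves the accumulated delay, and handling the two edge cases (an empty trace, or a trace ending in \code{applyTransition}) by closing off with a harmless zero-delay \code{tick} --- which in turn rests on the small observation that a zero-delay \code{tick} is enabled at every reachable state because $\mathtt{mte}$ is non-negative there.
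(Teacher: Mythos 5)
Your proposal is correct and follows essentially the same route as the paper: both reduce the claim to Theorem~\ref{th:r0r1} and then observe that each \code{tick} application in $\rtheorySemN{2}$ advances the \code{@}-component by exactly the delay \code{T}, so the final global clock equals the accumulated delay $d$. Your write-up is somewhat more careful than the paper's (the explicit invariant on the \code{@}-component, the check that merging consecutive \code{tick}s preserves the delay sum, and the zero-delay \code{tick} to handle traces that are empty or end in \code{applyTransition}), but these are refinements of the same argument rather than a different one.
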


\begin{proof}
    From \Cref{th:r0r1}, we know that the \code{tickOk}/\code{tickNotOk}
    strategy can be followed in $\rtheorySemN{0}$ to produce
    an equivalent trace. Using that trace, the result follows trivially
    by noticing that applications of
    \code{tick} in $\rtheorySemN{0}$ with \code{T}$=\delta$
    ($t \dtransitionD{\delta}_{\rtheorySem} t'$)
    match applications of \code{tick} in
     $\rtheorySemN{2}$ with the same
    instance of \code{T}, thus advancing the global clock in
    exactly $\delta$ time units.
\end{proof}

\section{Explicit-state analysis of ITPNs in Maude}
\label{sec:concrete-ex}

The theories $\mathcal{R}_0$--$\mathcal{R}_2$ cannot be directly
executed in Maude, since the \texttt{tick} rule introduces a new
variable \texttt{T} in its right-hand side.
Following the Real-Time Maude~\cite{tacas08,rtm-journ}
methodology for analyzing dense-time systems, although we cannot cover
all time points, we can choose to ``sample'' system execution at \emph{some}
time points. For example, in this section we change the \texttt{tick}
rule to  increase time by \emph{one time unit} in each
application:

\begin{maude}
crl [tickOne] :
    M$\;$:$\;$CLOCKS$\;$:$\;$NET
   =>
    M$\;$:$\;$increaseClocks(M,$\,$CLOCKS,$\,$NET,$\,\highlight{1}$)$\;$:$\;$NET
   if  $\highlight{1}$ <= mte(M, CLOCKS, NET) .
\end{maude}

Analysis with such time sampling  is in general not sound and
complete, since it does not cover all possible system
behaviors: for example, if some transition's firing interval is
$[0.5,0.6]$, we could not execute that transition with this time
sampling.  Nevertheless, if all interval bounds are natural
numbers, then  ``all behaviors'' should be covered.

We can therefore  quickly prototype our specification and experiment with
different parameter values, before applying the sound and complete
symbolic analysis and parameter synthesis methods developed in the
following sections.

\begin{figure}[h]
    \begin{center}
    \includegraphics[width=0.7\textwidth]{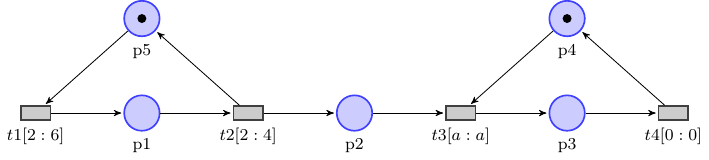}
\end{center}\vspace*{-5mm}
    \caption{A simple \texttt{production-consumption} system taken from
      \cite{Wang1998}.\label{fig:producers}}\vspace*{-2mm}
\end{figure}

The   term \code{net3($l$,$u$)} represents an instance of a  more
general version of the
 net in Fig.~\ref{fig:producers}, where the interval for
 transition $t_3$ is $[a,b]$, and where the parameters $a$ and $b$
 are instantiated
 with  values $l$ and $u$:
\begin{maude}
op net3 : Time TimeInf -> Net .
var LOWER : Time .  var UPPER : TimeInf .
eq net3(LOWER, UPPER)
 = "t1" : "p5" |-> 1               -->  "p1" |-> 1               in [2 : 6] ;
   "t2" : "p1" |-> 1               -->  "p2" |-> 1 ; "p5" |-> 1  in [2 : 4] ;
   "t3" : "p2" |-> 1 ; "p4" |-> 1  -->  "p3" |-> 1               in [LOWER : UPPER] ;
   "t4" : "p3" |-> 1               -->  "p4" |-> 1               in [0 : 0] .
\end{maude}

\noindent The initial marking in Fig.~\ref{fig:producers} is represented by the term
\texttt{init3}:
\begin{maude}
op init3 : -> Marking .
eq init3 = "p1"$\:$|->$\;$0 ; "p2"$\:$|->$\;$0 ; "p3"$\:$|->$\;$0 ; "p4"$\:$|->$\;$1 ; "p5"$\:$|->$\;$1 .
\end{maude}

We can simulate 2000 steps of the net with  different parameter
values:
\begin{maude}
Maude> $\maudeit{\blue{rew} [2000] init3 : initClocks(net3(3,5)) : net3(3,5) .}$

result State:
"p1" |-> 0 ; "p2" |-> 1 ; "p3" |-> 0 ; "p4" |-> 1 ; "p5" |-> 1 :  ...  :  ...
\end{maude}

\noindent To further analyze the system, we define a function \code{k-safe}, where
\code{k-safe($n$,$\,m$)}   holds iff the marking $m$ does not have any
place with more than $n$ tokens:
\begin{maude}
  op k-safe : Nat Marking -> Bool .

  eq k-safe(N1, empty) = true .
  eq k-safe(N1, P$\:$|->$\;$N2 ; M) = N2 <= N1 and k-safe(N1, M) .
\end{maude}

We can then quickly (in 5ms) check whether the net is 1-safe when transition
$t_3$ has interval $[3,4]$ by searching for a reachable state whose
marking \code{M} has some place holding more than one token
(\code{not k-safe(1,M)}):

\begin{maude}
Maude> $\maudeit{\blue{search} [1] init3 : initClocks(net3(3,4)) : net3(\textcolor{red}{3,4})}$
                 $\maudeit{=>*}$
                  $\maudeit{M : CLOCKS : NET \blue{such that} not k-safe(\textcolor{red}{1}, M) .}$

Solution 1 (state 27)
M --> "p1" |-> 0 ; $\highlight{"p2" |-> 2}$ ; "p3" |-> 0 ; "p4" |-> 1 ; "p5" |-> 1
CLOCKS --> "t1" -> 0 ; "t2" -> 0 ; "t3" -> 4 ; "t4" -> 0
NET --> ...
\end{maude}

\noindent The net is not 1-safe: we reached a state with two tokens in place
$p_2$.
However, the net is 1-safe if $t_3$'s interval is instead
$[2,3]$:

\begin{maude}
Maude> $\maudeit{\blue{search} [1] init3 : initClocks(net3(2,3)) : net3(\textcolor{red}{2,3})}$
                 $\maudeit{=>*}$
                  $\maudeit{M : CLOCKS : NET \textcolor{blue}{such that} not k-safe(1,\,M) .}$

No solution.
\end{maude}
\noindent Further analysis shows that \texttt{net3(3,4)} is 2-safe, but that
\texttt{net3(3,5)}
is not even 1000-safe.

\medskip
We can also analyze concrete  instances of our net by full linear
temporal logic (LTL) model checking in Maude. For example, we can
define a parametric atomic proposition
``\code{place$\;p\;$has$\;n\;$tokens}'', which holds in a state iff its
marking has exactly $n$ tokens in place $p$:

\begin{maude}
op place_has_tokens : Place Nat -> Prop [ctor] .
eq  (P$\:$|->$\;$N1 ; M : CLOCKS : NET) |= place P has N2 tokens = (N1 == N2) .
\end{maude}
% var CLOCKS : ClockValues .  var NET : Net . (removed)

Then we can check properties such as whether in \emph{each} behavior of the
system, there will be infinitely many states where $p_3$ has no tokens
\emph{and} infinitely many states where it holds one
token:\footnote{\texttt{[]}, \texttt{<>}, \texttt{O}, \texttt{/\char92}, and
  \texttt{\char126} are the Maude representations of corresponding (temporal)
  logic operators $\Box$ (``always''), $\Diamond$ (``eventually''),
  $\bigcirc$ (``next''),
  conjunction, and negation.}
\eject
\begin{maude}
Maude> $\maudeit{\textcolor{blue}{red} modelCheck(init3 : initClocks(net3(3,4)) : net3(3,4),}$
                      $\maudeit{(\textcolor{brown}{[] <>} place "p3" has 0 tokens) \textcolor{brown}{/\char92} (\textcolor{brown}{[] <>} place "p3" has 1 tokens)) .}$

result Bool: true
\end{maude}

To analyze more complex nested LTL formulas, we can check whether there
\emph{exists}\footnote{In LTL, there is behavior satisfying
$\phi$ if there is a counterexample showing that $\neg \phi$ does not hold
in all behaviors.}
  a behavior such that, from some point on, place $p_3$
  alternates between holding zero and holding one token:

\begin{maude}
Maude> $\maudeit{\textcolor{blue}{red} modelCheck(init3 : initClocks(net3(3,4)) : net3(3,4),}$
                     $\textcolor{brown}{\thicksim\:} \maudeit{(\textcolor{brown}{<> []} (place "p3" has 0 tokens \textcolor{brown}{<->} \textcolor{brown}{O} place "p3" has 1 tokens))) .}$

result Bool: true
\end{maude}

The output shows that no such behavior exists.
Furthermore, it turns out (maybe surprisingly?) that there is no
 behavior in which
we eventually reach a situation where each ``empty'' $p_3$ is
followed by a non-empty $p_3$ in at most two steps:

\begin{maude}
Maude> $\maudeit{\textcolor{blue}{red} modelCheck(init3 : initClocks(net3(3,4)) : net3(3,4),}$
                      $\textcolor{brown}{\thicksim\:}  \maudeit{(\textcolor{brown}{<> []} (place p3 has 0 tokens
                 \;\textcolor{brown}{->}}$
                              $\maudeit{ ((\textcolor{brown}{O} place p3 has 1 tokens)
			   \;  \textcolor{brown}{\char92/} (\textcolor{brown}{O O} place p3 has 1 tokens))))) .}$

result Bool: true
\end{maude}

We know that  \texttt{net3(3,4)} can reach markings with two
tokens in $p_2$; but is this inevitable (\ie{} does it happen in \emph{all}
behaviors)?

\begin{maude}
Maude> $\maudeit{\textcolor{blue}{red} modelCheck(init3 : initClocks(net3(3,4)) : net3(3,4),}$
                      $\maudeit{\textcolor{brown}{<>} place "p2" has 2 tokens) .}$

result ModelCheckResult: counterexample(...)
\end{maude}

\noindent The result is a counterexample showing a path where $p_2$ never holds
two tokens.

%\medskip
We also obtain a ``time sampling'' specification for time-bounded
analyses following the techniques for
$\mathcal{R}_2$; namely,  by adding a global time
component to the state:

\begin{maude}
op _:_:_$\highlight{@_}$ : Marking ClockValues Net $\highlight{Time}$ -> State [ctor] .
\end{maude}

\noindent and modifying the tick rule to increase this global clock
according to the elapsed time. Furthermore, for time-bounded analysis
  we add  a constraint ensuring that system execution does not go
  beyond the time bound $\Delta$:

\begin{maude}
crl [executableTick] :
    M : CLOCKS : NET @ $\highlight{GT}$
   =>
    M : increaseClocks(M,CLOCKS,NET,1) : NET @ $\highlight{GT + 1}$
   if  GT < $\Delta$  and   --- remove this condition for unbounded analysis
       1 <= mte(M, FT, NET) .
\end{maude}

By setting $\Delta$ to \texttt{1000}, we can simulate one behavior of
the system \texttt{net3(3,5)} up to time 1000:

\begin{maude}
Maude> $\maudeit{\blue{rew} init3 : initClocks(net3(3,5)) : net3(3,5) \textcolor{red}{@ 0} .}$

result State:
"p1" |-> 0 ; "p2" |-> 1 ; "p3" |-> 0 ; "p4" |-> 1 ; "p5" |-> 1 : ... : ... $\highlight{@ 1000}$
\end{maude}

\eject
We can then check whether \texttt{net3(3,4)} is one-safe in the time
interval $[5,10]$ by setting $\Delta $ in the tick rule to
\texttt{10}, and execute following command:

\begin{maude}
Maude> $\maudeit{\blue{search} [1] init3 : initClocks(net3(3,4)) : net3(3,4) \textcolor{red}{@ 0}}$
                 $\maudeit{=>*}$
                  $\maudeit{M : CLOCKS : NET \textcolor{red}{@ GT} \blue{such that} not k-safe(1, M) \textcolor{red}{and GT >= 5} .}$

Solution 1 (state 68)
MARKING --> "p1" |-> 0 ; $\highlight{"p2" |-> 2}$ ; "p3" |-> 0 ; "p4" |-> 1 ; "p5" |-> 1
...
$\highlight{GT --> 8}$
\end{maude}
\noindent This shows that the  non-one-safe marking can be reached in eight
time units.  It is worth noticing that this result only shows that (one of) the shortest
rewrite path(s) to a non-one-safe marking has duration \code{8}. We
cannot conclude from this, however,  that such a state cannot be reached in
shorter time, which could be possible with more rewrite steps.

\section{Parameters and symbolic executions}
\label{sec:sym}

%\po{I have attempted to write a new intro to this section.}
Standard explicit-state Maude analysis of the theories
$\rtheorySem$--$\rtheorySemN{2}$ cannot be used to analyze all
possible
behaviors of PITPNs for two reasons:
\begin{enumerate}
\itemsep=0.8pt
  \item
 The rule \code{tick} introduces a new variable \code{T} in its right-hand
    side, reflecting the fact that time can
    advance by \emph{any} value  \code{T <= mte(...)}; and
    \item
    analyzing \emph{parametric} nets with \emph{uninitialized}
  parameters is
  impossible with  explicit-state Maude analysis of
  concrete states. Note, for instance, that  the condition \code{T in INTERVAL} in
  rule \texttt{applyTransition} will   never evaluate to \code{true} if \code{INTERVAL}
  is not a \emph{concrete} interval, and hence the rule will never
  be applied.
\end{enumerate}
 Maude-with-SMT analysis of \emph{symbolic} states with SMT variables can
 solve both  issues, by
symbolically representing the time advances \code{T} and the net's
uninitialized parameters. This enables % opens up %a  range of
 analysis and parameter synthesis methods  for  analyzing \emph{all}
possible behaviors in
dense-time systems with unknown parameters.  % This also includes  analysis

\medskip
This section defines a  rewrite
theory $\rtheorySym$ that faithfully models PITPNs and that can be
symbolically executed using Maude-with-SMT.
We prove that (concrete)
executions in $\rtheorySemN{1}$ are captured by (symbolic) executions in
$\rtheorySym$,  and vice versa.
  We also show that standard folding
techniques \cite{DBLP:journals/jlap/Meseguer20} in rewriting modulo SMT are not
sufficient  for collapsing equivalent symbolic states in
$\rtheorySym$. We therefore
 propose a new folding technique that guarantees termination of
 reachability
 %\po{insert "reachability" here?, CO: Added... whit all this discussion of
%Model checking + folding, this is indeed needed. }
 analyses in  $\rtheorySym$ when the
 state-class graph of the encoded PITPN is finite.
 We present two implementations of the folding procedure
 that we  benchmark in \Cref{sec:benchmarks}.

\subsection{The symbolic rewriting logic
  semantics}\label{subsec:sym-theory}

We define the ``symbolic'' semantics of PITPNs using the rewrite
theory $\rtheorySym$, which  is the symbolic counterpart
of $\rtheorySemN{1}$, instead of basing it on $\rtheorySemN{0}$,  since a symbolic
``tick'' step represents all
possible tick steps from a  symbolic state. We therefore  do not
 introduce deadlocks not
possible in the corresponding PITPN by disallowing multiple
consecutive  (symbolic) applications of the tick rule.
\eject

$\rtheorySym$ is obtained from $\rtheorySemN{1}$
  by replacing
%\begin{itemize}
%  \item
the sort \code{Nat} in markings  and the sort \code{PosRat} for clock values
with the corresponding SMT sorts \code{IntExpr} and
\code{RExpr}. (The former is only needed to enable reasoning
with \emph{symbolic} initial states where the number of tokens in a
location is unknown). %\po{Is this true?}
%\item
Moreover, conditions in rules (\eg{} \code{M1 <= M2})  are replaced
with the corresponding
SMT expressions of sort \code{BoolExpr}. The symbolic execution of
$\rtheorySym$ using  Maude with SMT will
accumulate and check the satisfiability of the constraints needed for a
parametric transition to happen.

\medskip
We  declare the sort \code{Time} as follows:

\begin{maude}
sorts Time TimeInf .
subsort $\highlight{RExpr}$ < Time < TimeInf .
op inf : -> TimeInf [ctor] .
\end{maude}

\noindent where \code{RExpr} is the sort for SMT reals. (We add constraints to
the rewrite rules to  guarantee that only non-negative real numbers are
considered as time values.) Besides rational constants of sort \code{Rat},
terms in  \code{RExpr} can be also SMT variables.

\medskip
Intervals are defined as in $\rtheorySem$:
\lstinline{op `[_:_`] : Time TimeInf -> Interval}.
Since \lstinline{RExpr} is a subsort of \lstinline{Time},
%\po{Does this mean that RealExpression is a subsort of Real?  Maybe
%  worth somewhere; in the prelim?}
an interval in $\rtheorySym$
may contain SMT variables. This means that a parametric interval
$[a,b]$ in a PITPN can be represented as the term
\code{[A$\;$:$\;$B]}, where
\code{A} and \code{B} are variables of sort \code{RExpr}.

\medskip
The definition of markings, nets,  and clock values
is similar to the one in \Cref{subsec:eq-theory}. We only
need to modify the following definition for markings:

\begin{maude}
op _|->_ : Place $\highlight{IntExpr}$ -> Marking [ctor] .
\end{maude}

\noindent Hence, in a pair \code{$\nameit(p)$ |-> $e_I$}, $e_I$ is an SMT
integer expression that could be/include SMT variable(s).

\medskip
Operations on markings and intervals remain the same,  albeit with the
appropriate SMT sorts. The operators in Maude for the sorts \code{Nat} and
\code{Rat} have the same signature as those for \code{IntExpr} and
\code{RExpr}. Therefore,  the specification needs few modifications.
For instance,
the new definition of \code{M1 <= M2} is:

\begin{maude}
vars N1 N2 : $\highlight{IntExpr}$ .
op _<=_ : Marking Marking -> $\highlight{BoolExpr}$ .
eq ((P |-> N1) ; M1) <= ((P |-> N2) ; M2) = N1 $\highlight{<=}$ N2 and (M1 <= M2) .
eq empty <= M2 = true .
\end{maude}

\noindent where  \code{<=}   in \code{N1 <= N2} is a
function
\lstinline{op _<=_ : IntExpr IntExpr -> BoolExpr}.

\medskip
Symbolic states in $\rtheorySym$ are defined as follows:

\begin{maude}
sort State.
op _:_:_:_ : TickState Marking ClockValues Net -> State [ctor]
\end{maude}

The rewrite rules in $\rtheorySym$ act on
symbolic states that may contain SMT variables.
Although  these rules  are similar to those in
$\rtheorySemN{1}$,  their symbolic execution  is completely
different. Recall from \Cref{sec:prelim}
the theory transformation
to implement symbolic rewriting
in Maude-with-SMT.
In the
resulting theory $\widehat{\rtheorySym}$,
when a rule is
applied, the  variables occurring in the right-hand side but not in the
left-hand side are replaced by fresh variables,
represented as terms of the form \code{rr(id)}
of sort \lstinline{RVar} (with~\lstinline{subsort RVar < RExpr})
where$\,$  \code{id}$\,$
is a term of sort$\,$  \code{SMTVarId} (theory$\,$
\code{VAR-ID} in Maude-SE).

\eject
   \noindent \mbox{Moreover,}
rules in $\widehat{\rtheorySym}$ act on constrained terms of the form
$\phi\parallel t$, where $t$ in this case is a term of sort \code{State} and
$\phi$ is a satisfiable SMT Boolean expression (sort \code{BoolExpr}). The constraint $\phi$ is
obtained by accumulating the conditions in rules, thereby restricting the
possible values of the variables in $t$.

\medskip
The tick rewrite rule in $\rtheorySym$ is

\begin{maude}
var T : RExpr .
crl [tick] :  tickOk    : M : CLOCKS     : NET
             =>
              tickNotOk : M : increaseClocks(M, CLOCKS, NET, $\highlight{T}$) : NET
             if ($\highlight{T >= 0}$ and mte(M, CLOCKS, NET, T)) .
\end{maude}

The variable \code{T} is restricted to be a non-negative real
number and to satisfy the following \emph{predicate} \code{mte}, which gathers
the constraints to ensure that time cannot advance beyond the point in
time when an enabled transition \emph{must} fire:

\begin{maude}
var R1     : RExpr .
vars T1 T2 : Time .

op mte : Marking ClockValues Net $\highlight{RExpr}$ -> $\highlight{BoolExpr}$ .
eq mte(M, empty, NET, T) = true .
eq mte(M, (L -> R1) ; CLOCKS, (L : PRE --> ... in [T1 : $\highlight{inf}$]) ; NET , T)
 = mte(M, CLOCKS, NET, T) .
eq mte(M, (L -> $\highlight{R1}$) ; CLOCKS, (L : PRE --> ...  in [T1 : $\highlight{T2]}$) ; NET, $\highlight{T}$)
 = (active(M, L : ...) $\highlight{?\ T <= T2 - R1 :\ true}$) and mte(M, CLOCKS, NET, T) .
\end{maude}

This means that, for every transition \code{L},  if the upper bound of the
interval in \code{L} is \code{inf}, no restriction on \code{T} is added.
Otherwise, if \code{L} is active at marking \code{M}, the SMT ternary operator
\code{$c\;$?$\;e_1\;$:$\;e_2$} (checking $c$ to choose either
$e_1$ or $e_2$)
further constrains \code{T} to be less than \code{T2$\,$-$\,$R1}.  The
definition of \code{increaseClocks} also uses  this SMT
operator to represent the
new values of the clocks:

\begin{maude}
eq increaseClocks(M, (L -> R1) ; CLOCKS, (L : PRE --> ... ) ; NET, T)
 = (L -> (active(M, L : PRE ...) $\highlight{?\ R1 + T :\ R1 }$)) ;
   increaseClocks(M, CLOCKS, NET, T) .
\end{maude}\smallskip

The rule for applying a transition is defined as follows:\smallskip

\begin{maude}
crl [applyTransition] :
    TS     : M                  : ((L -> T) ; CLOCKS) : (L : PRE --> ...) ; NET)
   =>
    tickOk : ((M - PRE) + POST) :  updateClocks(...)  : (L : PRE --> ...) ; NET)
  if active(...) and (T in INTERVAL) .
\end{maude}

When applied, this rule adds new constraints asserting that
the transition \code{L} can be fired
(predicates \code{active} and \code{_in_}) and  updates
the state of the clocks:
\begin{maude}
eq updateClocks((L' -> R1)$\;\,$;$\;\,$CLOCKS, INTERM-M, (L'$\,$:$\,$PRE --> ...); NET)
 = (L ->  PRE <= INTERM-M $\highlight{?\ R1 :\ 0}$) ; updateClocks(...) .
\end{maude}

In the example below, \code{k-safe($k$,$m$)} is a predicate stating that
the marking $m$ does not have more than $k$ tokens
in any place.

\begin{example}\label{ex:producers}
Let \code{$\mathit{net}$}
and \code{$m_0$}
    be the Maude terms representing, respectively,   the PITPN and  the initial
    marking shown in \Cref{fig:producers}. The term \code{$\mathit{net}$} below
    includes an SMT variable representing the parameter $a$ in transition $t_3$:
  \eject

  \hbox{}
  \vspace*{-9mm}
\begin{maude}
op net : -> Net .
eq net =
  "t1" : ("p5" |-> 1)               -->  ("p1" |-> 1)               in  [2 : 6] ;
  "t2" : ("p1" |-> 1)               -->  ("p2" |-> 1 ; "p5" |-> 1)  in  [2 : 4] ;
  "t3" : ("p2" |-> 1 ; "p4" |-> 1)  -->  ("p3" |-> 1)               in  $\highlight{[rr("a") : rr("a")]}$ ;
  "t4" : ("p3" |-> 1)               -->  ("p4" |-> 1)               in  [0 : 0] .
\end{maude}
The Maude commands  introduced in \Cref{sec:analysis} allow us to
    answer the question whether it is possible to reach a state with a marking
    $M$ with more than one token in some place. As shown in Example
    \ref{example:finite}, Maude positively answers this
    question and the resulting accumulated constraint tells us that
    such a state
    is reachable (with 2 tokens in $p_2$) if  \code{rr("a") >= 4}.
\end{example}

Terms of sort \code{Marking} in $\rtheorySymN{1}$ may contain expressions
with parameters (\ie{} variables)  of sort \code{IntExpr}.
Let  $\Lambda_m$  denote
the set of such parameters and $\pi_m:\Lambda_m \to \grandn$
a valuation function for them.
We use $m_s$ to denote a mapping from places to \code{IntExpr}
expressions including parameter variables. Similarly, $\mathit{clocks}_s$
 denotes a mapping from transitions
to \code{RExpr} expressions (including variables).
We write
$\pi_m(\mathit{m}_s)$
to denote the ground term where the parameters in
markings are replaced by the corresponding values
$\pi_m(\lambda_i)$. Similarly for
$\pi(\mathit{clocks}_s)$, we use $\encSym{\PN}$ to
  denote   the above rewriting logic
  representation of nets in  $\rtheorySymN{1}$.

  Let $t_s$ be the constrained term  of
  sort \code{State} in $\widehat{\rtheorySymN{1}} $
  and assume that
  $\phi \parallel t_s
  \rightsquigarrow_{\widehat{\rtheorySymN{1}}} \phi' \parallel t'_s$.
  By construction, if for all $t\in \os \phi\parallel t_s \cs$
  all markings (sort \code{IntExpr}), clocks and parameters (sort
  \code{RExpr}) are non-negative
  numbers, then this is also the case for all $t' \in \os \phi' \parallel t_s'\cs$.

  The following theorem states that the symbolic semantics matches all the behaviors
resulting from a  concrete execution of $\rtheorySemN{1}$ with
arbitrary parameter valuations $\pi$ and $\pi_m$. Furthermore,
 for all symbolic executions with parameters, there exists
a corresponding concrete execution where the parameters are instantiated with
values consistent with the resulting accumulated constraint.

\begin{theorem}[Soundness and completeness]\label{th:sym-correct}
    Let $\PN$ be a  PITPN
    and $m_s$ be a marking possibly including parameters.

  \smallskip
    \noindent\textbf{(1)}
            Let $\phi$ be the constraint
                            $\bigwedge_{t\in T}(\leftEP{\parIntervalStatic(t)} \leq \rightEP{\parIntervalStatic(t)})
                            \wedge \bigwedge_{\lambda_i\in \Lambda_m}(0 \leq \lambda_i)
                            $. If
                            \[\phi\parallel
                            \mathtt{tickOk}\;\code{:}\;m_s\;\code{:}\;\mathit{clocks_s}\;\code{:}\;\encSym{\PN}
                            \rightsquigarrow^*_{\widehat{\rtheorySymN{1}}}
                            \phi'\parallel TS'\;\code{:}\;m'_s\;\code{:}\;\mathit{clocks'_s}\;\code{:}\;\encSym{\PN}
                        \]
                            \noindent then, there exist a parameter valuations $\pi$ and a parameter marking valuation  $\pi_m$ s.t.
                            \[\mathtt{tickOk}\:\code{:}\;\pi_m(m_s)\;\code{:}\;\mathit{clocks}\;\code{:}\;\encBase{\pi(\PN)}
                            \longrightarrow^*_{\rtheorySemN{1}}
                            TS'\:\code{:}\;\pi_m(m'_s)\;\code{:}\;\mathit{clocks'}\;\code{:}\;\encBase{\pi(\PN)}
                        \]
                        \noindent where the constraint
                            $\phi'\wedge\bigwedge_{\lambda_i\in\Lambda}
                            {\lambda_i} = \pi(\lambda_i)
                            \wedge \bigwedge_{\lambda_i\in\Lambda_m}
                            {\lambda_i} = \pi_m(\lambda_i)$
                            is satisfiable,
                            $\mathit{clocks}\in \os \phi \parallel \mathit{clocks}_s \cs$
                            and $\mathit{clocks'}\in \os \phi' \parallel \mathit{clocks'}_s \cs$.

    \vspace{1.8mm} \noindent  \textbf{(2)}
            Let $\pi$ be a parameter valuation and $\pi_m$ a parameter marking valuation.
            Let $\phi$ be the constraint
                            $\bigwedge_{\lambda_i\in \Lambda}(\lambda_i=\pi(\lambda_i))
                            \wedge \bigwedge_{\lambda_i\in
                              \Lambda_m}(\lambda_i=\pi_m(\lambda_i))
                            $.
            If
            \[\mathtt{tickOk}\;\code{:}\;\pi_m(m_s)\;\code{:}\;\mathit{clocks}\;\code{:}\;\encBase{\pi(\PN)}
              \longrightarrow^*_{\rtheorySemN{1}}
          TS'\;\code{:}\;m'\;\code{:}\;\mathit{clocks'}\;\code{:}\;\encBase{\pi(\PN)}\]
                 \noindent then
                   \[\phi \parallel \mathtt{tickOk}\;\code{:}\;m_s\;\code{:}\;\mathit{clocks_s}\;\code{:}\;\encSym{\PN}
                   \longrightarrow^*_{\widehat{\rtheorySymN{1}}}
               \phi' \parallel TS'\;\code{:}\;m_s'\;\code{:}\;\mathit{clocks_s}'\;\code{:}\;\encSym{\PN}\]
                            \noindent where
                            $m' \in \os \phi'\parallel m'_s\cs$,
                            $\mathit{clocks}\in \os \phi \parallel \mathit{clocks}_s \cs$
                            and $\mathit{clocks'}\in \os \phi' \parallel \mathit{clocks'}_s \cs$.
  \end{theorem}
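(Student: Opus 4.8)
The plan is to reduce both parts to the generic soundness and completeness of rewriting modulo SMT recalled in \Cref{sec:rew-smt}, once we observe that $\rtheorySemN{1}$ is nothing but the interpretation of the rules of $\rtheorySymN{1}$ over the standard model of the SMT theory of integers and reals. First I would note that $\rtheorySymN{1}$ satisfies the hypotheses of that framework: it is topmost with \code{State} at the top of its component, the only right-hand-side variable not occurring on a left-hand side is \code{T}, of the built-in sort \code{RExpr}, and all rule conditions live in $\mathcal{E}_0$. Next I would establish a \emph{bridge lemma}: if a state of $\rtheorySymN{1}$ contains no SMT variable and its net component is a ground (non-parametric) net, then, identifying that net with $\encBase{\cdot}$ of the corresponding ITPN and identifying ground \code{IntExpr}/\code{RExpr} terms with the naturals/non-negative rationals they denote, one step of $\rtheorySymN{1}$ between such states holds iff the corresponding one step of $\rtheorySemN{1}$ holds; iterating gives the same for $\longrightarrow^{*}$. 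This is essentially a rule-by-rule inspection; the two non-routine points are (i) that the \code{BoolExpr}-valued predicate \code{mte(M,CLOCKS,NET,T)} of $\rtheorySymN{1}$ holds for a ground \code{T} exactly when \code{T <= mte(M,CLOCKS,NET)} for the \code{TimeInf}-valued function of $\rtheorySemN{1}$, and (ii) that the \code{?:}-guarded updates in \code{increaseClocks} and \code{updateClocks} pick, on ground markings, the same branch as the \code{if_then_else_fi} updates of $\rtheorySemN{1}$ --- both following from the defining equations after \code{active} is evaluated. I would also record, by an easy induction on derivation length, that the net component is inert (no rule ever rewrites inside it), hence preserved throughout; this is why in the theorem the source and target states carry the same net.

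For part \textbf{(1)}, I would apply the soundness half of \Cref{sec:rew-smt} to the given symbolic derivation from $\phi \parallel \mathtt{tickOk}\;\code{:}\;m_s\;\code{:}\;\mathit{clocks_s}\;\code{:}\;\encSym{\PN}$ to $\phi' \parallel TS'\;\code{:}\;m'_s\;\code{:}\;\mathit{clocks'_s}\;\code{:}\;\encSym{\PN}$: the standard construction yields a single ground substitution $\sigma$ with $\mathcal{T}_{\mathcal{E}_0} \models \phi'\sigma$ (a fortiori $\models \phi\sigma$, since $\phi'$ implies $\phi$ along a symbolic derivation) such that the $\sigma$-instance of the source rewrites in $\rtheorySymN{1}$ to the $\sigma$-instance of the target. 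Let $\pi$ and $\pi_m$ be $\sigma$ restricted to the time parameters $\Lambda$ and to the marking parameters $\Lambda_m$; the conjuncts $\bigwedge_{t\in T}(\leftEP{\parIntervalStatic(t)}\le\rightEP{\parIntervalStatic(t)})$ and $\bigwedge_{\lambda_i\in\Lambda_m}(0\le\lambda_i)$ of $\phi$ guarantee that $\pi(\PN)$ is a well-formed ITPN and that $\pi_m$ is a marking-parameter valuation, so the two instances are $\mathtt{tickOk}\;\code{:}\;\pi_m(m_s)\;\code{:}\;\mathit{clocks}\;\code{:}\;\encSym{\pi(\PN)}$ and $TS'\;\code{:}\;\pi_m(m'_s)\;\code{:}\;\mathit{clocks'}\;\code{:}\;\encSym{\pi(\PN)}$ for the ground clock valuations $\mathit{clocks},\mathit{clocks'}$ read off $\sigma$. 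By the bridge lemma this $\rtheorySymN{1}$-rewrite is the desired $\rtheorySemN{1}$-rewrite from $\mathtt{tickOk}\;\code{:}\;\pi_m(m_s)\;\code{:}\;\mathit{clocks}\;\code{:}\;\encBase{\pi(\PN)}$ to $TS'\;\code{:}\;\pi_m(m'_s)\;\code{:}\;\mathit{clocks'}\;\code{:}\;\encBase{\pi(\PN)}$. Satisfiability of $\phi'\wedge\bigwedge_{\lambda_i\in\Lambda}\lambda_i=\pi(\lambda_i)\wedge\bigwedge_{\lambda_i\in\Lambda_m}\lambda_i=\pi_m(\lambda_i)$ is witnessed by $\sigma$ itself, and $\mathit{clocks}\in\os\phi\parallel\mathit{clocks_s}\cs$, $\mathit{clocks'}\in\os\phi'\parallel\mathit{clocks'_s}\cs$ follow by projecting the two $\sigma$-instance memberships onto their clock components.

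For part \textbf{(2)}, I would argue symmetrically with the completeness half of \Cref{sec:rew-smt}. Starting from the given $\rtheorySemN{1}$-rewrite from $\mathtt{tickOk}\;\code{:}\;\pi_m(m_s)\;\code{:}\;\mathit{clocks}\;\code{:}\;\encBase{\pi(\PN)}$ to $TS'\;\code{:}\;m'\;\code{:}\;\mathit{clocks'}\;\code{:}\;\encBase{\pi(\PN)}$, the bridge lemma turns it into a $\rtheorySymN{1}$-rewrite from $\mathtt{tickOk}\;\code{:}\;\pi_m(m_s)\;\code{:}\;\mathit{clocks}\;\code{:}\;\encSym{\pi(\PN)}$. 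Taking $\mathit{clocks_s} := \mathit{clocks}$ and $\phi = \bigwedge_{\lambda_i\in\Lambda}(\lambda_i=\pi(\lambda_i))\wedge\bigwedge_{\lambda_i\in\Lambda_m}(\lambda_i=\pi_m(\lambda_i))$, this source state lies in $\os\phi\parallel\mathtt{tickOk}\;\code{:}\;m_s\;\code{:}\;\mathit{clocks_s}\;\code{:}\;\encSym{\PN}\cs$ (and $\mathit{clocks}\in\os\phi\parallel\mathit{clocks_s}\cs$ trivially). Completeness then yields a symbolic derivation $\phi\parallel\mathtt{tickOk}\;\code{:}\;m_s\;\code{:}\;\mathit{clocks_s}\;\code{:}\;\encSym{\PN}\rightsquigarrow^{*}_{\widehat{\rtheorySymN{1}}}\phi'\parallel \hat t$ with the concrete target in $\os\phi'\parallel\hat t\cs$. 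Since there are no \code{TickState} or net variables, $\hat t$ is literally $TS'\;\code{:}\;m'_s\;\code{:}\;\mathit{clocks'_s}\;\code{:}\;\encSym{\PN}$ for some $m'_s,\mathit{clocks'_s}$, and projecting the membership of the concrete target onto its marking and clock components gives $m'\in\os\phi'\parallel m'_s\cs$ and $\mathit{clocks'}\in\os\phi'\parallel\mathit{clocks'_s}\cs$, as required.

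I expect the main obstacle to be the bridge lemma --- specifically matching the two flavours of \code{mte} (a \code{TimeInf} value versus a \code{BoolExpr} constraining the elapsed time) and verifying that the \code{?:}-guarded clock updates agree over the standard model with the \code{if_then_else_fi} updates of $\rtheorySemN{1}$ --- together with the care needed around the fact that $\rtheorySemN{1}$'s \code{tick} rule is \code{[nonexec]} (so ``$\longrightarrow_{\rtheorySemN{1}}$'' must be read as the mathematically defined one-step rewrite relation, in which the fresh time value may be any solution of the rule's condition) and around treating the parameter symbols occurring in $\encSym{\PN}$ as the free SMT variables constrained by $\phi$. Everything else is a mechanical application of \Cref{sec:rew-smt}.
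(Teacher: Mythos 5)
Your proposal is correct and follows essentially the same route as the paper: Theorem~\ref{th:sym-correct} is proved there by a direct two-sentence appeal to the generic soundness and completeness of rewriting modulo SMT from~\cite{rocha-rewsmtjlamp-2017}, exactly the reduction you carry out. The only difference is one of thoroughness: the paper leaves implicit the step you isolate as the ``bridge lemma'' (that ground rewriting in $\rtheorySymN{1}$ over the standard model coincides with rewriting in $\rtheorySemN{1}$, including the matching of the two flavours of \code{mte} and the \code{?:}-guarded updates), so your write-up is a more careful rendering of the same argument rather than a different one.
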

 % \eject

\begin{proof}
This result is a direct consequence of
soundness and completeness of rewriting modulo SMT \cite{rocha-rewsmtjlamp-2017}.
More precisely, from  \cite{rocha-rewsmtjlamp-2017} we know that:
if $\phi_t \parallel t \rightsquigarrow^\ast \phi_u \parallel u$
then $t' \longrightarrow^\ast u'$
 for some $t' \in \llbracket \phi_t \parallel t \rrbracket$
 and $u' \in \llbracket \phi_u \parallel u \rrbracket$; and
if  $t' \longrightarrow^\ast u'$ with
$t' \in \llbracket \phi_t \parallel t \rrbracket$,
then there exists $\phi_u$ and $t_u$ s.t.
 $\phi_t \parallel t \rightsquigarrow^\ast \phi_u \parallel u$
with $u' \in \llbracket \phi_u \parallel u \rrbracket$.
\end{proof}

The symbolic counterpart $\rtheorySymN{2}$ of the theory $\rtheorySemN{2}$,
that adds a component to the state denoting the ``global time'',
can be defined similarly. We have also defined a third symbolic theory
$\rtheorySymN{3}$ that replaces the \code{applyTransition} rule in $\rtheorySymN{1}$ with the following one:

\begin{maude}
crl [applyTransition] :
    $\highlight{tickNotOk}$ : M                  : ((L -> T) ; CLOCKS) : (L : PRE --> ...) ; NET)
   =>
    tickOk    : ((M - PRE) + POST) :  updateClocks(...)  : (L : PRE --> ...) ; NET)
  if active(...) and (T in INTERVAL) .
\end{maude}

Hence, in $\rtheorySymN{3}$, two consecutive applications of
\code{applyTransition} (without
applying \code{tick} in between) are not allowed.

\subsection{A sound and complete folding method for symbolic
  reachability}\label{subsec-folding}

Reachability analysis should terminate
%\po{is ``should terminate'' too
%  strong?  should we just say ``terminates in Romeo''? This should be true
%independent of the tool... }
  for both
positive and negative queries for nets with
finite parametric state-class graphs.
However,  the state space
resulting from the (symbolic) execution of the theory ${\rtheorySym}$
is infinite even for such nets, and it
 will not terminate when the desired states are
 unreachable.
 We note that the  standard {search} commands in Maude  stop exploring from  a
symbolic state only if it has already visited  the \emph{same}
state. Due to the fresh variables created
in $\widehat{\rtheorySym}$ whenever the \texttt{tick} rule is applied, symbolic
states representing the same set
of concrete states are not the same, even though they are \emph{logically}
equivalent, as exemplified below.

\begin{example}\label{ex:infinite}
    Let $\phi$ be the constraint $0 \leq a < 4$.
    The following command, trying to show that the PITPN in
    \Cref{fig:producers} is 1-safe if the parameter $a$ satisfies $\phi$, does not
    terminate.
\begin{maude}
search $\phi$ || tickOk$\;$:$\;m_0\;$:$\;$0-clock($net$)$\;$:$\;\mathit{net}$  =>*  PHI || TICK : M : CLOCKS : NET
such that smtCheck(PHI and $\highlight{not k-safe(1,M)}$) $\,$.
\end{maude}

\noindent Furthermore, the command

\begin{maude}
search $\phi$ || tickOk$\;$:$\;m_0\;$:$\;$0-clock($net$)$\;$:$\;\mathit{net}$  =>*  PHI || TICK : M : CLOCKS : NET
such that smtCheck(PHI and $\highlight{M <=  \(m_0\) and \(m_0\) <= M}$) $\,.$
\end{maude}

\noindent searching for reachable states where
    $M = m_0$ will produce infinitely many (equivalent) solutions,
    including, \eg{} the following constraints:

\noindent\adjustbox{width=\textwidth}{
    \begin{maude}
Solution 1:  #p5-9:IntExpr === 1 and #t3-9:RExpr  + a:RExpr -  #t2-9:RExpr  <= 0/1 and ...
Solution 2: #p5-16:IntExpr === 1 and #t3-16:RExpr + a:RExpr -  #t2-16:RExpr <= 0/1 and ...
    \end{maude}
}

\noindent     where the variables created by the search procedure start with \verb@#@
    and end with a number taken from a sequence to guarantee freshness.
    Let $\phi_1\parallel t_1$ and $\phi _2\parallel t_2$ be, respectively,  the
    constrained terms found in \code{Solution 1} and \code{Solution 2}.
    In this particular output, $\phi_2\parallel t_2$ is obtained by further rewriting
    $\phi_1\parallel t_1$.
    The variables representing the state of markings and clocks
    (\eg{} \code{\#p5-9} in $t_1$ and \code{\#p5-16} in $t_2$) are clearly different,
    although they
    represent the same set of concrete values ($\os \phi_1\parallel
    t_1\cs = \os \phi_2\parallel t_2 \cs$).
    Since constrains are accumulated when a rule is applied, we note that
    $\phi_2$ equals $\phi_1 \wedge \phi_2'$ for some $\phi_2'$, and
    ${\it vars}(\phi_1\parallel t_1)\subseteq {\it vars}(\phi_2\parallel t_2)$.
\end{example}

The usual approach for collapsing equivalent symbolic states in rewriting
modulo SMT is subsumption
\cite{DBLP:journals/jlap/Meseguer20}. Essentially, we stop searching
from a symbolic state if, during the search, we have already encountered
another (``more general'') symbolic state that subsumes (``contains'') it.
More precisely, let $U = \phi_u \parallel t_u$ and
$V = \phi_v \parallel t_v$ be constrained terms.
Then
$U  \sqsubseteq  V$  if there is
a substitution $\sigma$ such that $t_u = t_v\sigma$  and
%\po{important: do you really mean $=_E$?  How do you check this? Using
%  normal forms?  Are they even well-defined?  Or do you mean ``equal modulo
%  assoc and/or comm''??}
the implication $\phi_u \Rightarrow \phi_v\sigma $ holds. In that case,
$\llbracket U \rrbracket \subseteq \llbracket V \rrbracket$.
%\cite{DBLP:journals/jlap/Meseguer20}.
A search will not further explore a constrained term $U$ if another
constrained term $V$ with $U \sqsubseteq V$ has already been
encountered. % uring the search procedure, if a term $U$ is reached
It is known that such reachability analysis with folding
is sound (does not
generate spurious counterexamples~\cite{bae2013abstract})
but not necessarily complete (since
$\llbracket U \rrbracket \subseteq \llbracket V \rrbracket$
does not imply $U \sqsubseteq V$).

\begin{example}
    Let $\phi_1$ and $\phi_2$ be
    the resulting constraints in the two solutions found by the second
    \lstinline{search} command in Example \ref{ex:infinite}. Let $\sigma$ be the
    substitution that maps \code{\#p$i$-9} to \code{\#p$i$-16}
    and \code{\#t$j$-9}  to \code{\#t$j$-16} for each place $p_i$ and
    transition $t_j$.
    %\po{The li are confusing here, since nothing was
    %  shown in Example 5; maybe expand the output of Ex 5?, possible
    %  using smaller fonts for the OUTput in Ex 5. CO: It should be
    %  ti, not li... corrected}
    The SMT solver determines that  the formula $\neg (\phi_2
    \Rightarrow \phi_1\sigma)$
    is satisfiable (and therefore  $\phi_2 \Rightarrow \phi_1\sigma$ is not
    valid).
    Hence, a procedure based on checking this implication will fail
    to determine that the state in the second solution can be subsumed by the state
    found in the first solution.
    \end{example}

The satisfiability witnesses of $\neg (\phi_2 \Rightarrow \phi_1\sigma)$ can
give us some ideas on how to make the subsumption procedure more precise.
Assume that $\phi_1$ carries the information $R = T_0$ for some
clock represented by $R$ and $T_0$ is a tick variable subject to $\phi = (0 \leq
T_0 \leq 2)$. Assume also that in $\phi_2$, the value of the same clock is
$R' = T_1 + T_2$ subject to $\phi' = (\phi\wedge T_1 \geq 0 \wedge T_2\geq 0 \wedge T_1+T_2 \leq 2)$.
Let $\sigma= \{R \mapsto R'\}$. Note that  $(R'=T_1+T_2
\wedge \phi \wedge \phi')$ does not imply $(R = T_0\wedge \phi)\sigma$ (take, \eg{}
the valuation $T_1=T_2=0.5$ and $T_0=2$). The key
observation is that, even if $R$ and $R'$ are both constrained to be
in the interval $[0,2]$ (and hence represent the same state for this
clock), the assignment of $R'$ in the antecedent does not need
to coincide with the one for $R$ in the consequent of the implication.

\medskip
In the following,  we propose a subsumption relation that solves the
aforementioned problems.
Let $\phi \parallel t$ be a constrained term where $t$ is a term of sort \code{State}.
%For any place $p_i$, $v(t,p_i)$ will denote the \code{Integer}  expression
%$e_i$ in the subterm $p_i \mapsto e_i$ of $t$. Similarly,
%$v(t,l_i)$ denotes the \code{Real} expression $e_i$ in the subterm
%$l_i \to e_i$ of $t$.
Consider the abstraction of built-ins $(t^\circ, \sigma^\circ)$ for
$t$,
where $t^\circ$ is as $t$ but it replaces the expression $e_i$ in
markings ($p_i\mapsto e_i$)
and clocks ($l_i \to e_i$)  with new fresh variables.
The substitution $\sigma^\circ$ is defined accordingly,
such that  $t = t^\circ \sigma^\circ$ (see \Cref{sec:rew-smt}).
Let $\Psi_{\sigma^\circ} = \bigwedge_{x \in
  \mathit{dom}(\sigma^\circ)} x = x \sigma^\circ $.
We use $\projnow{(\phi\parallel t)}{}$ to denote the constrained
term $\phi \wedge \Psi_{\sigma^\circ} \parallel t^\circ $.
Intuitively,  $\projnow{(\phi\parallel t)}{}$ replaces the clock values and markings
 with fresh variables and the
 Boolean expression $\Psi_{\sigma^\circ}$
 constrains those variables to take the values
 of clocks and marking in $t$. From \cite{rocha-rewsmtjlamp-2017} we can show that
 $\os \phi \parallel t \cs = \os \projnow{(\phi\parallel t)}{}\cs$.

Note that the
only variables occurring in $\projnow{(\phi\parallel t)}{}$
are those for parameters (if any)
and the fresh variables in $\mathit{dom}(\sigma^\circ)$
(representing the symbolic state of clocks and markings).
For a constrained term $\phi\parallel t$, we use
$\exists(\phi \parallel t)$ to denote the formula
$(\exists X) \phi$ where
$X = \mathit{vars}(\phi) \setminus \mathit{vars}(t)$.

 \begin{definition}[Relation $\preceq$]\label{def:rel}
 Let $U = \phi_u \parallel t_u$ and $V = \phi_v \parallel t_v$ be
 constrained terms where $t_u$ and $t_v$ are terms of sort \code{State}.
 Moreover, let $\projnow{U}{} = \phi_u'\parallel t_u'$ and
$\projnow{V}{}= \phi_v'\parallel t_v'$,
where $\ovars{t_u'} \cap \ovars{t_v'} = \emptyset$.
We define
 the relation $\preceq$ on constrained terms
 so that  $U\preceq V$
 whenever there exists a substitution $\sigma$ such that  $t_u' = t_v'\sigma$
 and the formula
 $\exists(\projnow{U}{}) \Rightarrow \exists(\projnow{V}{})\sigma$ is valid.
 %\kb{The condition $\ovars{t_u'} \cap \ovars{t_v'} = \emptyset$ is needed.}
\end{definition}

The formula $\exists(\projnow{U}{})$ uses the existential quantification to
\emph{hide} the information about all the tick
variables created so far (in the previous time instants). We therefore obtain
a constraint representing only
the information about
 the parameters and the values of the clocks and markings ``now''.
 Moreover, if $t_u$ and $t_v$ above are both \code{tickOk} states (or both
 \code{tickNotOk} states),
  and they represent two symbolic states of the same PITPN,
 then $t_u'$ and $t_v'$ always match ($\sigma$ being
 the identity on the variables representing parameters
 and mapping the corresponding variables created
 in $\projnow{V}{}$  and $\projnow{U}{}$).

 \begin{theorem}[Soundness and completeness]\label{th:folding}
     Let $U$  %= \phi_u \parallel t_u
     and $V$  %= \phi_v \parallel t_v
     be
     constrained terms in $\widehat{\rtheorySym}$ representing two
     symbolic states of the same
     PITPN.
%\po{Should we mention in which rewrite theory these are terms??}
%\co{Good point... in that case, we should give a name to the
%Maude-SE theory resulting from $\rtheorySymN{1}$... I will do it}
     Then,
     $\os U \cs \subseteq \os V \cs$ iff $U \preceq V$.
\end{theorem}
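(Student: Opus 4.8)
The plan is to reduce the claim to a purely logical equivalence between the existential closures $\exists(\projnow{U}{})$ and $\exists(\projnow{V}{})$. Since $\os \phi \parallel t \cs = \os \projnow{(\phi \parallel t)}{}\cs$ (noted before the theorem, following~\cite{rocha-rewsmtjlamp-2017}), it is enough to prove $\os \projnow{U}{}\cs \subseteq \os \projnow{V}{}\cs$ iff $U \preceq V$. First I would make the shape of $\projnow{U}{}$ explicit: writing $\projnow{U}{} = \phi_u' \parallel t_u'$, the term $t_u'$ is obtained from $t_u$ by replacing every marking count and every clock value with a fresh, \emph{pairwise distinct} built-in variable, while keeping the $\mathit{TS}$-tag and the net $\encSym{\PN}$ intact; and $\phi_u'$ is $\phi_u$ together with the equations of $\Psi_{\sigma^\circ}$ recording the original clock and marking expressions, so $\ovars{\phi_u'}$ still contains the tick and ``history'' variables accumulated along the derivation. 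The key observation is then that a ground instance $t_u'\gamma$ belongs to $\os \projnow{U}{}\cs$ precisely when the valuation $\gamma$ of $\ovars{t_u'}$ extends to a model of $\phi_u'$, i.e.\ precisely when $\gamma \models \exists(\projnow{U}{})$; and, because $t_u'$ is linear in those fresh variables, the ground term $t_u'\gamma$ determines $\gamma$ uniquely. Hence $\os \projnow{U}{}\cs = \{\, t_u'\gamma \mid \gamma \models \exists(\projnow{U}{})\,\}$, with distinct valuations yielding distinct terms.

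Next I would use the structural fact highlighted just before the theorem. Since $U$ and $V$ are symbolic states of the \emph{same} PITPN, either their $\mathit{TS}$-tags disagree --- in which case $t_u'$ and $t_v'$ have no common instance, so $U \not\preceq V$, while $\os U\cs$ and $\os V\cs$ are disjoint and nonempty (every constrained term appearing in a $\widehat{\rtheorySym}$-derivation carries a satisfiable constraint), so $\os U\cs \not\subseteq \os V\cs$ and both sides of the iff are false --- or the $\mathit{TS}$-tags agree, and then $t_u'$ and $t_v'$ coincide up to the canonical bijective renaming $\rho$ of their (disjoint) fresh state variables, with $\rho$ the identity on parameters. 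Any $\sigma$ with $t_u' = t_v'\sigma$ must then agree with $\rho$ on $\ovars{t_v'}$, and since the free variables of $\exists(\projnow{V}{})$ all lie in $\ovars{t_v'}$ we have $\exists(\projnow{V}{})\sigma = \exists(\projnow{V}{})\rho$; therefore $U \preceq V$ holds iff the single implication $\exists(\projnow{U}{}) \Rightarrow \exists(\projnow{V}{})\rho$ is valid, independently of the witness.

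Finally I would close the loop. For any ground $\gamma$ with $\gamma \models \exists(\projnow{U}{})$ we have $t_u'\gamma = t_v'(\rho\gamma)$, and $\rho\gamma$ is the unique ground substitution on $\ovars{t_v'}$ producing this term (again by linearity), so by the characterization of paragraph one, $t_u'\gamma \in \os \projnow{V}{}\cs$ iff $\rho\gamma \models \exists(\projnow{V}{})$ iff $\gamma \models \exists(\projnow{V}{})\rho$. Ranging over all such $\gamma$ gives $\os \projnow{U}{}\cs \subseteq \os \projnow{V}{}\cs$ iff $\exists(\projnow{U}{}) \Rightarrow \exists(\projnow{V}{})\rho$ is valid, i.e.\ iff $U \preceq V$. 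The main obstacle I expect is the variable bookkeeping in the first paragraph --- cleanly separating the ``current-state'' variables that survive into $t'$ from the tick/history variables that get existentially abstracted, and verifying that the two existential closures line up under $\rho$. Linearity of $t_u'$ (and $t_v'$) in their fresh variables is the crucial ingredient: it is what pins down each ground instance to a unique valuation and thereby upgrades the routine soundness direction $U \preceq V \Rightarrow \os U\cs \subseteq \os V\cs$ (which holds already for ordinary subsumption~$\sqsubseteq$) into a genuine equivalence, so I would be careful to invoke it explicitly at both places where a ground instance is claimed to determine its substitution.
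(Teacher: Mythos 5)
Your proposal is correct and follows essentially the same route as the paper's proof: both reduce the claim to the projections $\projnow{U}{}$, $\projnow{V}{}$, use the linearity of the abstracted state variables (each marking/clock slot keyed by a distinct place/transition name) to pin down a unique renaming substitution between $t_u'$ and $t_v'$, and thereby identify the set inclusion with validity of the implication between the two existential closures. The only cosmetic differences are that you package both directions at once via the bijection between ground instances and valuations of $\exists(\projnow{U}{})$, and justify the existence of the matching substitution from the structural observation stated just before the theorem rather than from the matching lemma of~\cite{rocha-rewsmtjlamp-2017}.
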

\begin{proof}
 Let
 $\projnow{U}{} = \phi_u'\parallel t_u'$ and
$\projnow{V}{}= \phi_v'\parallel t_v'$,
where $\ovars{t_u'} \cap \ovars{t_v'} = \emptyset$.
Let
$X_u = \ovars{\phi_u'} \setminus \ovars{t_u'}$
and
$X_v = \ovars{\phi_v'} \setminus \ovars{t_v'}$.
By construction,
$t_u', t_v' \in T_{\Sigma \setminus \Sigma_0}(X_0)$,
$\exists(\projnow{U}{}) = (\exists X_u) \phi_u'$,
and
$\exists(\projnow{V}{}) = (\exists X_v) \phi_v'$.
It suffices to show
$\os \projnow{U}{} \cs \subseteq \os \projnow{V}{} \cs$
iff
$U \preceq V$.

\medskip
\noindent
 ($\Rightarrow$)
Assume $\os \projnow{U}{} \cs \subseteq \os \projnow{V}{} \cs$.
Then,
$t_u'$ and $t_v'$ are $E$-unifiable (witnessed by $w \in \os
\projnow{U}{} \cs \cap \os \projnow{V}{} \cs$). Since
$t_v'$
has no duplicate variables
and
$E$ only contains structural axioms for $\Sigma \setminus \Sigma_0$,
by the matching lemma \cite[Lemma~5]{rocha-rewsmtjlamp-2017},
there exists a substitution $\sigma$
with  $t_u' = t_v' \sigma$ (equality modulo ACU).
Since
any built-in subterm of $t_u'$ is a variable in $X_0$,
$\sigma$ is a renaming substitution
$\sigma: X_0 \to X_0$
and
thus $\os \projnow{V}{} \cs = \os (\projnow{V}{})\sigma \cs$.

Suppose $\exists(\projnow{U}{}) \Rightarrow \exists(\projnow{V}{})\sigma$ is not valid,
\ie{} $((\exists X_u) \phi_u') \wedge  (\forall X_v) \neg \phi_v'\sigma$ is satisfiable.
Let $Y$
be the set of free variables
in $((\exists X_u) \phi_u') \wedge  (\forall X_v) \neg \phi_v'\sigma$.
Notice that $Y = \ovars{t_u'} = \ovars{t_v'\sigma}$.
Let $\rho : Y \to T_{\Sigma_0}$
be a ground substitution that represents
a satisfying valuation of $(\exists X_u) \phi_u' \wedge  (\forall X_v) \neg \phi_v'\sigma$.
Then, $t_u'\rho \in \os \projnow{U}{} \cs$
but $t_u'\rho = t_v' \sigma\rho \notin \os (\projnow{V}{})\sigma \cs = \os \projnow{V}{} \cs$,
which is a contradiction.

\medskip
\noindent ($\Leftarrow$)
Assume $U \preceq V$.
There exists a substitution $\sigma: X_0 \to X_0$ such that $t_u' = t_v'\sigma$
and $\exists(\projnow{U}{}) \Rightarrow \exists(\projnow{V}{})\sigma$ is valid.
Let $Y$ be the set of free variables in $\exists(\projnow{U}{}) \Rightarrow \exists(\projnow{V}{})\sigma$.
As mentioned above,
$\os \projnow{V}{} \cs = \os (\projnow{V}{})\sigma \cs$
and
$Y = \ovars{t_u'} = \ovars{t_v'\sigma}$.
Let $w \in \os \projnow{U}{} \cs$.
Then,
for some ground substitution $\rho_u$, $w = t_u'\rho_u$
and
$\phi_u'\rho_u$ holds.
From the assignments in $\rho_u \vert_Y$,
we can build a
valuation  $\mathcal{V}$ making true $\exists (\projnow{U}{})$
and, by assumption,  making also true $\exists (\projnow{V}{})\sigma$.
Hence,
 there exists a ground substitution $\rho_v$ (that agrees on the values assigned
 in $\mathcal{V}$) such that $\phi_v'\sigma \rho_v$ holds
 and $\rho_u \vert_Y \!= \rho_v \vert_Y$.
Notice that
$w\! = t_u'\rho_u = t_u' (\rho_u \vert_Y)\! = t_v'\sigma (\rho_v \vert_Y) \!= t_v'\sigma \rho_v$.
Therefore, $w \in \os (\projnow{V}{})\sigma \cs$. \vspace*{-5mm}
\end{proof}

\subsubsection{Two versions  of symbolic reachability analysis based
  on the folding procedure}\label{sec:imp:folding}

The implementation of the folding procedure in Maude requires the specification
of the subsumption relation $\preceq$ that, in turns, requires dealing with the
existentially quantified variables in $\exists(\projnow{U}{})$. At the time of
publishing the conference version of this paper
\cite{DBLP:conf/apn/AriasBOOPR23}, the only method available for checking the
satisfiability of existentially quantified formulas in Maude involved the call
to the existential quantifier elimination procedure of the SMT solver Z3.
Recently \cite{ftscs-journal}, we have implemented   the well-known
Fourier-Motzkin elimination (FME) procedure \cite{dantzig2006linear} using
equations in Maude, and it is independent
of the SMT solver connected to Maude.
(As demonstrated in \Cref{sec:benchmarks}, this change has an important
effect on the performance of the analysis).

\medskip
Building on this FME equational theory, we propose two different
implementations/versions of symbolic reachability analysis based on the folding
procedure:

\begin{enumerate}
  \item The first one is based on a simple theory transformation
that adds to the (symbolic) state the already visited (symbolic)
states \emph{in that execution branch},
and prevents a rule to be applied if the resulting state would be
subsumed by the states already seen in this branch.  This new
theory can be used directly with the standard search command in Maude. However,
this procedure cannot fold equivalent states appearing in \emph{different} branches of
the search tree.
\item The second implementation/version solves this problem by defining,
using Maude's meta-level facilities, a breath-first search procedure
that keeps a \emph{global} set of
already visited states in all the branches across the search tree.
\end{enumerate}
%\po{Mention already here that this is implemented using Maude's
%  meta-level?? CO: Done}

\paragraph{The first implementation (theory $\rtheorySymNF{1}$).}   A term $t$
in the theory $\rtheorySymNF{1}$ defined below also stores the states that have been
visited before reaching $t$. Additionally, a rewrite step from $t$ to a term $t'$
is only
possible if the state represented by $t'$ is not subsumed by
the already visited states. Hence, theory $\rtheorySymNF{1}$ can be used to
perform reachability analysis with folding as illustrated in Example \ref{example:finite}.

\medskip
 We transform
 the theory $\rtheorySymN{1}$ into a rewrite theory
$\rtheorySymNF{1}$
 that rewrites terms of the form
 $\mathit{Map} :\phi\parallel t$. The term  $\mathit{Map}$,
 of sort \code{Map} (defined in Maude's prelude \cite{maude-book}),
 is
 a set of entries of the form $m \mapsto \psi$ where $m$
is a  \code{Marking}
 and $\psi$ is a \code{BoolExpr} expression.
 $\mathit{Map}$
 stores the already visited states, and $\mathit{Map} [m]$
 is the accumulated constraint leading to the state where
 the marking is $m$.
 The theory $\rtheorySymNF{1}$ defines
 the operator
 \code{subsumed}$(\phi\parallel t~,~ \mathit{Map})$ that
 checks whether the symbolic state in the first parameter is subsumed
 by one of the states stored  in $\mathit{Map}$.

\medskip
 A rule  $l : q \longrightarrow r \mbox{ \textbf{if} } \psi$
 in $\rtheorySymN{1}$ is transformed into the following rule in $\rtheorySymNF{1}$:
\[
\begin{array}{ll}
l:~&
\mathit{Map} : \mathtt{PHI} \parallel q^\circ
\longrightarrow
(add(\phi_r\parallel r, \mathit{Map})) : \phi_r \parallel r\\
& \mbox{ \textbf{if} }
\mathtt{smtCheck}(\phi_r) \wedge \mathtt{not~subsumed}(\phi_r\parallel r, \mathit{Map})
\end{array}
\]

\noindent where $\mathtt{PHI}$ is a \code{BoolExpr} variable and $(q^\circ,
\sigma^\circ)$ is an abstraction of built-ins for $q$ and $\phi_r =
(\mathtt{PHI} \land \psi \land \Psi_{\sigma^\circ} )$. Note
that the transition happens only if the new state $\phi_r\parallel r$ is not
subsumed by an already visited state. The function $add(\phi_r\parallel r,
\mathit{Map})$ checks whether there is an entry $m \mapsto \psi$  where $m$ is the
marking in the term $r$. If this is the case, this entry is updated to $m
\mapsto \psi \vee \phi_r$. Otherwise, a new entry $m \mapsto \phi_r$ is added
to the map. In the first case, a symbolic state where the marking is $m$ was
already visited, but with constraints for the clocks and parameters that are
not implied by $\phi_r$ (and hence not subsumed). Therefore, the new state is
merged, and the marking $m$ can be reached by satisfying either $\psi$ or
$\phi_r$. %The theory $\rtheorySymNF{2}$ is similarly obtained from

In $\rtheorySymNF{1}$, for an initial constraint $\phi$ on the parameters,
the command\\

\noindent\lstinline[mathescape]{search [$n$,$m$]  empty : $\phi\parallel t$ =>* $Map$ : $\phi'\parallel
t'$ such that smtCheck($\phi'\wedge \Phi)$} \\

\noindent answers the question whether it is possible to reach a symbolic
state that matches $t'$ and satisfies the condition $\Phi$.
In the following, we denote by $\code{init}(net, m_0, \phi)$  the following term (of
sort \code{State}):
$\code{empty} : \phi\parallel \mathtt{tickOk}\;\code{:}\;m_0\;\code{:}\;
\code{initClocks}(net)\;\code{:}\;net$.

 \begin{example}\label{example:finite}
Consider  the  PITPN in \Cref{fig:producers}.
    Let  $m_0$ be the marking in the figure
    and $\phi$ be the constraint $0 \leq a < 4$. The command

     \begin{maude}
search init($net$, $m_0$, $\phi$) =>* MAP : PHI ||  ( TICK : M : CLOCKS : NET )
       such that smtCheck(PHI and not k-safe(1,M)) . \end{maude}

     \noindent terminates, and  returns \code{No solution}, showing that
    the net is 1-safe if $0 \leq a < 4$.
    %\po{Can you show the command output?}
 \end{example}

The following  result shows that if the set of reachable state classes in the symbolic
semantics of $\PN$ (see \cite{EAGPLP13}) is finite, then so  is the set of
 reachable symbolic states using the new folding technique.

 \begin{corollary}\label{corollary:term}
For any PITPN $\PN$ and state class $(M,D)$,
if the transition system $(\mathcal{C}, (M,D), \Fleche{})$ is finite,
then
so  is
$\left(T_{\Sigma , \texttt{State}}, \code{init}(\PN,M,D), \symredif\right)$.
 \end{corollary}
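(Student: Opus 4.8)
The plan is to reduce finiteness of the folded symbolic reachability relation $\symredif$ to finiteness of the parametric state-class graph of $\PN$, combining the soundness and completeness results already established with Theorem~\ref{th:folding}. Concretely, I would exhibit a map $\kappa$ from the symbolic states $\symredif$-reachable from $\code{init}(\PN,M,D)$ to the reachable state classes of $\PN$ such that the set $\os s \cs$ of concrete configurations denoted by a symbolic state $s$ is completely determined by $\kappa(s)$. Finiteness of the state-class graph then caps the number of pairwise non-subsumed symbolic states that the folding search keeps, and the finite branching of $\symredif$ takes care of the rest.

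For the correspondence, recall that $\rtheorySym$ is the symbolic counterpart of $\rtheorySemN{1}$, so every symbolic execution in $\widehat{\rtheorySym}$ alternates \code{tick} and \code{applyTransition} steps; thus every reachable symbolic state is either a \code{tickOk} state obtained after $n$ transition firings, or the \code{tickNotOk} state obtained from such a state by one further \code{tick} step. For a \code{tickOk} state $s$ reached along a discrete firing sequence $\varrho$, I would chain Theorem~\ref{th:sym-correct} (which relates, with no loss of precision, the symbolic executions of $\widehat{\rtheorySym}$ with the concrete executions of $\rtheorySemN{1}$ ranging over all parameter valuations), Theorems~\ref{th:r0r1} and \ref{thm:ground} (relating $\rtheorySemN{1}$, $\rtheorySemN{0}$ and the concrete ITPN semantics), and the defining property of the parametric state-class semantics of~\cite{EAGPLP13} (the state class reached by firing $\varrho$ denotes exactly the set of concrete configurations reachable along $\varrho$ in $\PN$). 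This yields a uniquely determined reachable state class $\kappa(s)$ with $\os s \cs$ equal to its configuration set; in particular $\os s_1 \cs = \os s_2 \cs$ whenever $\kappa(s_1) = \kappa(s_2)$.

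Now the bounding argument. By construction $\symredif$ never steps to a state that is subsumed by an already-seen state, and by Theorem~\ref{th:folding} this subsumption test coincides with semantic containment of the denoted configuration sets. Hence, along any $\symredif$-path from $\code{init}(\PN,M,D)$, the \code{tickOk} states occurring on it are pairwise non-subsumed, so by the correspondence they map under $\kappa$ to pairwise distinct reachable state classes; since there are only finitely many reachable state classes, every such path contains only boundedly many \code{tickOk} states and hence (as \code{tickOk} and \code{tickNotOk} states strictly alternate along a path) has bounded length. As $\symredif$ is finitely branching---a \code{tickOk} state has a single \code{tick}-successor, and a \code{tickNotOk} state has one \code{applyTransition}-successor per enabled transition---it follows that only finitely many states are reachable from $\code{init}(\PN,M,D)$ by $\symredif$.

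The step I expect to be the main obstacle is establishing the correspondence $\kappa$ with the required precision, namely that $\os s \cs$ is \emph{exactly} the configuration set of a reachable state class, and not merely contained in it. This rests on matching ``one symbolic \code{tick} followed by one \code{applyTransition}'' in $\rtheorySym$ with a single successor step of the state-class graph of~\cite{EAGPLP13}, and in particular on showing that the maximal-time-elapse constraint collected by the \code{tick} rule (the predicate \code{mte}) captures exactly the amount of time that may elapse in the state-class successor construction, so that no configurations are spuriously added or dropped. The intermediate \code{tickNotOk} states are not nodes of the state-class graph, but this is harmless: the length bound on $\symredif$-paths uses $\kappa$ only on \code{tickOk} states, and \code{tickNotOk} states merely alternate with these.
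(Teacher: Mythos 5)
Your argument is essentially the paper's own proof: both rest on the finite branching of $\symredif$, on Theorem~\ref{th:folding} turning the syntactic subsumption check into semantic containment of denotations, and on a correspondence (via Theorem~\ref{th:sym-correct}) between the denotations of reachable symbolic states and the reachable state classes, so that an unbounded supply of pairwise non-subsumed symbolic states would force infinitely many distinct state classes. The paper packages this as a contradiction via an infinite non-subsumed chain obtained from finite branching, rather than your direct path-length bound with an explicit map $\kappa$, and the step you flag as the main obstacle---that $\os s \cs$ coincides \emph{exactly} with the configuration set of a reachable state class---is precisely the step the paper's one-line appeal to Theorem~\ref{th:sym-correct} passes over most quickly.
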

 \begin{proof} Assume that $(\mathcal{C}, (M,D), \Fleche{})$ is a finite transition
     system and, to obtain a contradiction, that there are infinitely many
     $\symredif$-reachable states from $\code{init}(\PN,M,D)$.

     Since
     $\left(T_{\Sigma , \texttt{State}}, \code{init}(\PN,M,D), \symredif\right)$
     is finitely branching, there must be an infinite
     sequence of the form $U_0 ~\symredif~ U_1 ~\symredif~ \cdots $ where, by
     definition of $\symredif$, $U_j \not\preceq U_i$ for $i< j$. From
     \Cref{th:folding} we know that $\os U_j\cs \not\subseteq \os U_i\cs$. By
     \Cref{th:sym-correct}, this means that after each transition, more concrete
     different states are found. Hence, the set of reachable state classes cannot
     be finite, leading to a contradiction.
 \end{proof}

 \paragraph{The second implementation.}
 %\po{Meta-level implementation of WHAT?}
 The above symbolic reachability analysis implementation/method is
 able to fold equivalent states only
 when they appear in the \emph{same branch} of the search tree (since
 different branches store
 different maps of already visited states). The advantage of that
 implementation is that it can be used
 directly with Maude's \lstinline{search} command, as exemplified above.
 %\co{Just changed the title... the meta-level stuff is introduced below}

 Using the meta-programming capabilities of Maude, we have specified a
 second symbolic reachability analysis
 procedure that implements, from scratch, a breath-first search procedure where
 equivalent symbolic states are folded. This implementation maintains
 a \emph{global} $\mathit{Map}$,
 thus allowing for folding symbolic states occurring in \emph{different branches} of
 the search tree. The next level of the search tree can be easily
 computed by calling the meta-level function
 \code{metaSearch} to return the successor states.
 The states in the new frontier are checked for
 subsumption  and the
 non-visited ones are  added to the global map.
 Since equivalent states appearing in different branches can be folded,
 the resulting search space should be smaller than the one
 induced by the theory $\rtheorySymNF{1}$. However, there is an inherent
 performance penalty to pay for this second method,  due to the calls to the meta-level
 operations and for not using
 the (optimized) C++ implementation  of  Maude's \lstinline{search} command.

Section~\ref{sec:analysis:methods} introduces ``user-friendly'' commands for
using these different  symbolic reachability analysis implementations,
and also provides user-friendly command syntax for all
the analyses that can be performed with our framework.

%!TEX root=./main.tex

\section{Parameter synthesis and symbolic model
  checking}\label{sec:analysis}

This section  shows how Maude-with-SMT
  can be used for a wide
range of formal analyses beyond reachability analysis. % In particular,
We show how to use Maude for solving
parameter synthesis problems and for reasoning   with
parametric initial states where the number of tokens in the different
places is not known (Section~\ref{sec:param-synthesis}),  analyzing
nets with  user-defined execution
strategies (Section~\ref{sec:strategies}), and
model checking the classes of
non-nested timed temporal logic properties supported by the
PITPN tool \romeo{} (Section~\ref{sec:model-checking}).  We % show that we
thereby provide analysis methods
that go beyond those supported by \romeo{}, while
supporting almost all forms of analysis provided by \romeo{}.

\medskip
We provide a wide range of analysis methods, requiring different
Maude commands on slightly different transformed models.
To make all these
analysis methods easily  accessible to the PITPN user,
we have also implemented a number of ``commands'' (or operations) that
provide a user-friendly syntax/interface for the various
analysis methods. These commands/operations are summarized in
Section~\ref{sec:analysis:methods}.

\subsection{Analysis methods and properties}
\label{sec:analysis:methods}
% \co{Check the new intro}
Sections \ref{sec:concrete-ex} and \ref{sec:sym}
%The previous section \po{Can you write ``Section X'' instead of
  %``previous section''?}
 present some  explicit-state and symbolic analysis methods that can
be performed with Maude's standard search and model checking commands.
Using  the different theories and the folding procedures, however,
requires that the user  imports different Maude files (each theory
is defined in its own Maude file) and deals
with possibly different representations of states (take for instance
the map in $\rtheorySymNF{1}$).

\medskip
This section introduces  a    user-friendly syntax  for properties and commands
to ease the use of our framework. The definitions
introduced below, in a single file \code{analysis.maude},  allow us to
perform the different analyses using the theories
$\rtheorySymN{1}$, $\rtheorySymN{2}$, $\rtheorySymN{3}$ and
$\rtheorySymNF{1}$,
the folding procedure described in \Cref{sec:imp:folding}, and invoking  Maude's
model checker.

\medskip
We start with the operators needed to perform the
different analyses:

\begin{maude}
--- Analysis with the symbolic theory R1S
op search-sym     [_,_] in_:_s.t._ :   Nat Nat Qid InitState Prop          -> LResult .

--- Analysis with the symbolic theory R3S
op search-sym2    [_,_] in_:_s.t._ :   Nat Nat Qid InitState Prop          -> LResult .

--- Analysis  with folding, theory R1fS
op search-folding [_,_] in_:_s.t._ :   Nat Nat Qid InitState Prop          -> LResult .

--- Analysis with the folding procedure at the meta-level
op folding [_,_] in_:_s.t._ :          Nat Nat Qid InitState Prop          -> LResult .

--- Analysis including a global clock with theory R2S
op $\texttt{search}$ [_,_] in_:_s.t._in-time_ :   Nat Nat Qid InitState Prop Interval -> LResult .

--- AG-synthesis using folding
op AG-synthesis in_:_s.t._ :                   Qid InitState Prop          -> BoolExpr .

--- Analysis with a user-defined strategy
op strat-rew [_] in_:_using_s.t._ :    Nat     Qid InitState Strategy Prop -> LResult .

--- Model checking
op A-model-check in_:_satisfies_in-time_ :     Qid InitState Formula Rat   -> Bool .
op E-model-check in_:_satisfies_in-time_ :     Qid InitState Formula Rat   -> Bool .
\end{maude}

These operators allow us to:
\begin{itemize}
  \item perform  reachability analysis and solve
parameter synthesis problems using the theories $\rtheorySymN{1}$
(\code{search-sym}) and $\rtheorySymN{3}$ (\code{search-sym2}, where applications
of rules \code{tick} and \code{applyTransition} must be interleaved);
%\po{in addition to refering to the models, could you also super-quickly
  %explain the difference between the two?}
\item perform reachability
analysis with folding using the theory $\rtheorySymNF{1}$ and the new
breath-first search procedure with a global set of already visited
states (\code{folding});
\item perform time-bounded reachability analysis
    (\code{search in-time}) using the theory $\rtheorySymN{2}$ (that adds to the state the global clock);
\item solve safety synthesis problems (\code{AG-synthesis});
  \item analyze
    a net with a user-defined strategy (\code{strat-rew}); and
    \item model check non-nested
      (T)CTL formulas (\code{A-model-check} and \code{E-model-check}).
\end{itemize}
      The parameters of these commands are explained next.

\medskip
The parameters of sort \code{Nat}  are
both optional and can be omitted in the reachability analysis
commands.  They can be used to
limit the number of solutions to be returned and the maximal depth of the
search tree. %\po{?? do you mean ``depth of the search tree''??}
The parameter of sort \code{Qid} (quoted
identifiers, as \code{'MODEL})
is the name of the module with the user-defined PITPN. The initial state (sort
\code{InitState}) is a triple containing the definition of the net, the initial marking
and the initial constraint on the parameters:
\begin{maude}
op (_,_,_) : Net Marking BoolExpr -> InitState .
\end{maude}

We define the following atomic propositions,  or \emph{state
  predicates}; these will be used to define the (un)desired properties
of our (symbolic) states:

\begin{maude}
ops _>=_ _>_ _<=_ _<_ _==_ : Place IntExpr -> Prop . --- On markings
ops _>=_ _>_ _<=_ _<_ _==_ : Label RExpr   -> Prop . --- On clocks
op reach     : Marking                     -> Prop .
op k-bounded : IntExpr                     -> Prop .
op _and_     : Prop Prop                   -> Prop .
op _or_      : Prop Prop                   -> Prop .
op not_      : Prop                        -> Prop .
op diff>     : Label Label RExpr           -> Prop . --- | clock1 - clock2 | >= R
op in-time_  : Interval                    -> Prop .
\end{maude}

%\co{Modified as needed... check please}
The atomic formula \code{p} $\bowtie$ \code{n} ($\bowtie \in\{\texttt{>=},
\texttt{>}, \texttt{<=},\texttt{<}, \texttt{==}\}$) states
that the current marking satisfies  $p\bowtie n$ where, for instance,
$p < 3$ holds in a state if  place $p$ holds less than $3$ tokens.
%\po{What does $p
%  \leq n$ mean?  Or $t\leq n$?  Please explain. CO: Added an example}
Similarly, \code{t} $\bowtie$ \code{r} is true in a state
where the value $c$ of the clock associated to transition $t$ satisfies $c\bowtie r$.
%\po{Clock comparison seems not to be declared above; only labels.}
The predicate \code{reach($M$)} is true in a marking $M'$
if $M \leq M'$,  and
\code{k-bounded($n$)} is true in a state where each place
holds less than or equal to $n$ tokens.
For temporal properties (see \Cref{sec:model-checking} for more details),
the formula \code{in-time $\mathit{INTERVAL}$} is true
if the current value of the global clock is in  $\mathit{INTERVAL}$.
Other predicates can be built with the usual Boolean operators.

\medskip
The following operators and equations (we present only some cases)
define when a state $S$ satisfies a state property $\phi$, denoted by
$S \models \phi$:
%\po{we should declare variables STATE PROP IE IE'. DONE}
\begin{maude}
var STATE       : State .
vars PROP PROP' : Prop  .
vars IE IE'     : IntExpr .

op _|=_   : State Prop -> Bool
eq STATE |= PROP = check(STATE, PROP) .

op check  : State Prop -> Bool .
ceq check(STATE, P >= IE) = smtCheck(constraint(STATE) and IE' >= IE)
    if   (P |-> IE' ; M) := marking(STATE) .
eq check(STATE, reach(M)) = smtCheck(constraint(STATE) and M <= marking(STATE)) .
eq check(STATE, PROP and PROP') = check(STATE, PROP) and-then check(STATE, PROP') .
\end{maude}

Since the state may contain SMT variables, a call to the SMT solver
is needed to determine whether the state entails the
given  property. The functions \code{constraint}
and \code{marking} are the expected projection functions, returning
the appropriate components of the state.

\medskip
Terms of sort \code{LResult} are lists of terms of the following sort \code{Result}:
\begin{maude}
sort Result .
op _when_ : Marking BoolExpr -> Result .

\end{maude}

\noindent where the term \code{$M$ when $\phi$} represents the fact
that the marking $M$ can be reached with accumulated
constraint $\phi$.

In the operator/command \code{strat-new}, it is possible to limit only the number of
solutions to be returned, when analyzing a system under a given user-defined
\code{Strategy}. As explained below, (T)CTL and (T)LTL temporal \code{Formula}s
can be model checked with a time bound specified by the last parameter in
the operators \code{A-model-check} and \code{E-model-check}.

\medskip
The definition of the  above  search, folding, synthesis,  and
model-checking operators use the \code{META-LEVEL} module in Maude
(for meta-programming) to:
\begin{itemize}
    \item define a new theory (a term of sort \code{Module}, a meta-representation
        of a rewrite theory) that imports both the needed
%\po{what is ``build a new modulo''??}
rewrite theory (\eg{} $\rtheorySymN{1}$) and the module with the
user-defined net;  %% \co{Added some explanations... better?} \po{Excellent}
\item
    compute the meta-representation of the initial state
according to the theory to
be used;
%\po{What does ``build the needed term'' mean?}
\item invoke the needed Maude command (\eg{} \code{metaSearch}) to search for a
  solution; and
  \item simplify the resulting constraints by invoking the FME procedure,
eliminating all the SMT variables except those representing parameters. As shown
below, this last step allows for finding the constraints on the parameter
values that enable certain execution paths.
\end{itemize}

In the following subsections we exemplify the use of these operators
and how they can be used to solve interesting problems of
PITPNs.

\subsection{Parameter synthesis}  \label{sec:param-synthesis}

\emph{\EF{}-synthesis} is the problem of computing
parameter values $\pi$ such that there exists a run of $\pi(\PN)$ that reaches a state
satisfying a given state predicate $\phi$. The \emph{safety synthesis problem}
\AG{}$\neg\phi$  is the problem of computing the  parameter values
for which states satisfying $\neg \phi$ are unreachable.

\medskip
Search  (with or without folding) provides a semi-decision procedure for
solving  \EF{}-synthesis problems (which is undecidable in general).
The  synthesis problem \AG{}$\neg\phi$ is solved
 by finding all the solutions
for \EF{}$\phi$ (therefore a search procedure with
 folding is necessary to guarantee termination when possible)
 and then negating the disjunction of all these
solutions/constraints.
If the resulting constraint is unsatisfiable,
it means that there are no values for the parameters guaranteeing that
non-$\phi$-states are unreachable
and hence, there is no solution for $\AG{}\neg\phi$. In that
case, our procedure returns
the constraint \code{false}  (see Example \ref{ex:tutorial1}).

\begin{example}\label{ex:analysis1}
  Example \ref{ex:producers} shows an \EF-synthesis problem: find values
  for the parameter $a$ such that a state with at least two tokens in
  \emph{some} place can be reached. If $\phi = 0 \leq a$, the command
\begin{maude}
Maude> $\maudeit{\blue{red} search-sym  [1] in 'MODEL : (}$$net,$ $m,$ $\phi$$\maudeit{) }$ $\maudeit{s.t. not k-bounded(1) . }$

result Result: ("p1" |-> 0 ; "p2" |-> 2 ; "p3" |-> 0 ; "p4" |-> 1 ; "p5" |-> 1)
                when  4 <= rr("a") .
\end{maude}
finds a state where the place $p_2$ has two tokens.
The resulting constraint (after eliminating all the SMT variables
but not those of the parameters) determines that this is possible
when $4 \leq a$.
The same solution can be found with the commands
\code{search-sym2}, \code{search-folding}, and
\code{folding}.

\medskip
With the same model, we can synthesize the values for the parameter $a$
to reach a state where the difference between the values of the clocks in
transitions $t_1$ and $t_3$ is bigger than, for instance, $10$:

\begin{maude}
Maude> $\maudeit{\blue{red} search-sym [1] in 'MODEL : (}$$net,$ $m,$ $\phi$$\maudeit{)}$ $\maudeit{s.t.\;\:diff>("t1", "t3", 10) .}$

Result: ("p1" |-> 1 ; "p2" |-> 2 ; "p3" |-> 0 ; "p4" |-> 1 ; "p5" |-> 0)
        when 6 <= rr("a")
\end{maude}
  \end{example}

  \romeo{} only supports properties over markings. The state predicate in the
  previous example includes also conditions on the %\po{rephrasing ... ok?}
  values of the ``clocks'' associated to the transitions, where each
  clock denotes how long the corresponding transition has been active
  without being fired.  %\po{mmm ... was this good?? CO: It seems so}

\begin{figure}[h]
\vspace*{-2mm}
    \centering
    \includegraphics[width=0.49\textwidth]{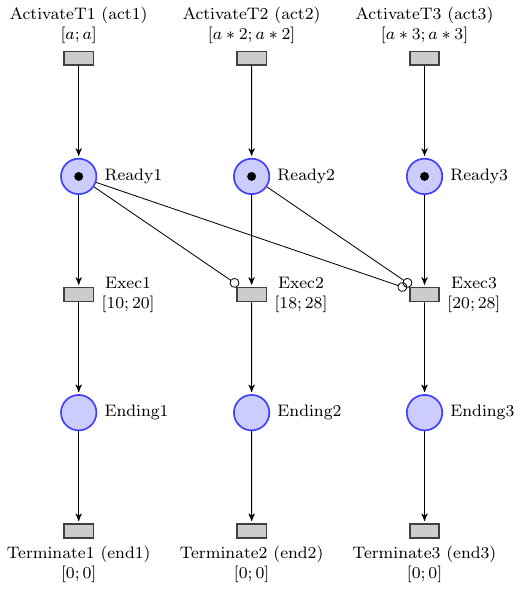}\vspace*{-1mm}
    \caption{The \texttt{scheduling} case study taken
      from~\cite{DBLP:journals/jucs/TraonouezLR09}. \label{fig:ex-scheduling}}\vspace*{-3mm}
\end{figure}

\medskip
  To solve the safety synthesis problem \AG{}$\neg\phi$,
  the \code{AG-synthesis} procedure iteratively calls the command
  \lstinline{folding} to find a state
  reachable from the initial marking $m_0$, with initial constraint $\phi_0$,
  where $\phi$ does not hold. If such state is found, with accumulated
  constraint $\phi'$, the \lstinline{folding} command is invoked again with
  initial constraint $\phi_0 \wedge \neg \phi'$. This process stops when no
  more reachable states where $\phi$ does not hold are found, thus solving the
  \AG{}$\neg\phi$ synthesis problem.

  \begin{example}  \label{ex:analysis2}
    Consider the PITPN in Fig.~\ref{fig:ex-scheduling}, taken from
    \cite{DBLP:journals/jucs/TraonouezLR09}, with % , which is a parametric
    % version of the scheduling problem in
    % \cite{DBLP:journals/tse/BucciFSV04}.
     a parameter $a$ and three parametric transitions
    with respective firing intervals $[a:a]$, $[2a:2a],$ and $[3a:3a]$.
    \romeo{} can synthesize the values of the parameter $a$
    making the net 1-safe, subject to  initial constraint
    $\phi = 30 \leq a \leq 70$.
    The same query can be answered
    in Maude:
\begin{maude}
Maude> $\maudeit{\blue{red} AG-synthesis  in 'MODEL : (}$$net,$ $m_0,$ $\phi$$\maudeit{) s.t. k-bounded(1) . }$

result BoolExpr: 48 < rr("A") and rr("A") <= 70
\end{maude}
The first counterexample found assumes
    that $a \leq 48$. If  $a>48$, \lstinline{folding}
    does not find any state not satisfying
    \code{k-bounded(1)}. This is the same answer  found by
      \romeo{}. % to this safety synthesis problem.
%\po{As mentioned, it is slightly messy to introduce parameter-syn in
%  the example; please introduce as a new "command" of ours before the example.}
 \end{example}

Our  symbolic theories  can have  parameters
(variables of sort \code{IntExpr})  in the initial marking. This opens up the
possibility of using Maude-with-SMT to solve synthesis problems involving
parametric initial markings.  For instance, we can determine the initial markings that
make the net $k$-safe.

\begin{example}\label{ex:pmarking}
    Consider the net in \Cref{fig:producers}, with
    the initial constraint $\phi$ stating
    that $0 \leq a$ and the initial marking $m_0$
    as in the figure.
    The following command shows that the net
    is 1-safe if $0 \leq a < 4$:

\begin{maude}
Maude> $\maudeit{\blue{red} AG-synthesis  in 'MODEL : (}$$net,$ $m_0,$ $\phi\maudeit{) s.t. k-bounded(1) .}$

result ConjRelLinRExpr: rr("a") < 4 and 0 <= rr("a")
\end{maude}

Assume that we fix the parameter $a$ to be, for instance, $1$.
 We may then want to analyze
 whether the net continues to be 1-safe
even if there could be a token initially also in place $p_1$ and/or
place $p_3$ (for illustration purposes, the example
does not give an upper bound on the number of tokens initially in each of these
places).
We therefore consider the
     \emph{parametric} initial marking $m_s$
       with parameters $x_1$ and $x_3$
     denoting the number of tokens in
    places $p_1$  and $p_3$,  respectively, and the initial
    constraint $\phi'$ stating
    that $a = 1$, $ 0 \leq  x_1$,  and $0 \leq x_3$.
    The execution of the following command:
\begin{maude}
Maude> $\maudeit{\blue{red} AG-synthesis  in 'MODEL : (}$$net,$ $m_s,$ $\phi'\maudeit{) s.t.\ k-bounded(1) .}$

Result BoolExpr: ii("x1") < 1 and ii("x3") < 1 and 0 <= ii("x1") and 0 <= ii("x3")

\end{maude}
determines that the net is 1-safe only when both places $p_1$ and
$p_3$ are initially empty.
\end{example}

\begin{figure}[h]
\vspace*{-1mm}
    \centering
\includegraphics[width=.59\textwidth]{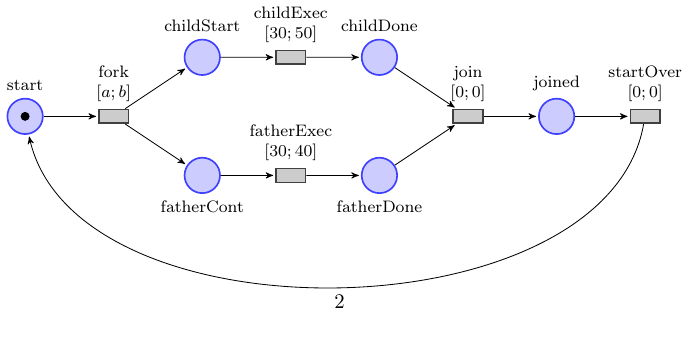}\vspace*{-5mm}
    \caption{The \texttt{tutorial} case study. \label{fig:tutorial}}\vspace*{-4mm}
\end{figure}

\begin{example}\label{ex:tutorial1}
Consider the PITPN
\texttt{tutorial}
in~\Cref{fig:tutorial}, taken from the \romeo{} website. This model
has been modified so that transition \texttt{startOver} produces two tokens,
thus leading to a non-1-safe system.
The first solution to \EF{}$(\neg \texttt{k-bounded(1)})$
 is the  initial constraint
%system. The first answer to \EF{}\texttt{\neg k-bounded(1)} is the constraint
$a \leq b$ already present in this net's $K_0$ (i.e., no further
constraints on these parameters are needed to
reach the   non-1-safe state). Hence, there is no solution for the corresponding
\AG-synthesis problem:

\begin{maude}
Maude> $\maudeit{\blue{red} AG-synthesis  in 'MODEL : (}$$net,$ $m_s,$ $\phi\maudeit{) s.t. k-bounded(1) .}$

result Bool: false
\end{maude}
\end{example}

%!TEX root=./main.tex

\subsection{Analysis with user-defined strategies}
\label{sec:strategies}
Maude's strategy language  \cite{maude-manual} allows users to
define  execution strategies for their rewrite theories. This section
explains how we can analyze all possible behaviors of a PITPN allowed
by  a user-defined strategy for the net.
Such analysis  is supported by our
framework through the command
\begin{maude}
red strat-rew [n] in 'MODEL : ($\mathit{net}$, $m_0$, $\phi$) using $\mathit{str}$ s.t. $\psi$
\end{maude}

This command rewrites, using the  meta-level function \code{metaSrewrite},
the  term $(net, m_0, \phi)$ following the strategy
\lstinline[mathescape]{$str$ ; match S s.t. check(S, $\psi$)}
and the rules in $\rtheorySymNF{1}$ (to guarantee termination when possible),
and  outputs the first $n$ solutions.
The
strategy \lstinline[mathescape]{match S s.t. check(S, $\psi$)} fails
whenever
the state $S$
does not satisfy the state property $\psi$. This command therefore returns
the first $n$ reachable states following the strategy $str$ that
 satisfy $\psi$.
%\co{Check the new description above}

\begin{example}
 % From \Cref{ex:finite} we know that the net in Fig. \ref{fig:producers}
 % is 1-safe if $0\leq a < 4$. But, what if we
 We  analyze the net in  Fig.~\ref{fig:producers} when all its executions
 (must) adhere to the
 following strategy \lstinline{t3-first}:
 whenever transition $t_3$  and some other
 transition are enabled
 at the same time, then $t_3$  fires first.
This execution strategy can be specified as follows using Maude's
strategy language:
 \begin{maude}
t3-first := ($\;$applyTransition[$\,$L <- "t3"$\,$] or-else all )*
\end{maude}

Starting with the initial constraint $\phi = 0 \leq a$, the execution of the command

\begin{maude}
Maude> $\maudeit{\blue{red}  strat-rew in 'MODEL : (}$$net,$ $m_0,$ $\phi$$\maudeit{) using t3-first s.t.\ k-bounded(1) .}$

result NeLResult:
  (("p1" |-> 0 ; "p2" |-> 0 ; "p3" |-> 0 ; "p4" |-> 1 ; "p5" |-> 1) when ...)
  (("p1" |-> 0 ; "p2" |-> 0 ; "p3" |-> 0 ; "p4" |-> 1 ; "p5" |-> 1) when ...)
  ...
  (("p1" |-> 1 ; "p2" |-> 0 ; "p3" |-> 0 ; "p4" |-> 1 ; "p5" |-> 0) when ...)
\end{maude}
shows that there are 12 possible reachable symbolic states when this
strategy is applied.
Furthermore, the execution of the command
\begin{maude}
Maude> $\maudeit{\blue{red}  strat-rew in 'MODEL : (}$$net,$ $m_0,$ $\phi$$\maudeit{) using t3-first s.t. not k-bounded(1) .}$

result LResult: nil
\end{maude}

\noindent
  returns
  no such non-1-safe symbolic
  states (the empty list \code{nil}). This shows that all markings reachable with
  the strategy \texttt{t3-first} are 1-safe.
  Note that this is not the case
  when the system behaviors are not restricted by such a  strategy.
  As shown in Example \ref{ex:pmarking},
  the parameter $a$ needs to be further constrained ($0 \leq a < 4$) to guarantee
  1-safety.
\end{example}

\subsection{Analyzing temporal properties}\label{sec:model-checking}
%\po{I give it a shot and rewrite it substantially. There does not seem
%  to be much here anyways.}

This section shows how Maude-with-SMT can be used to analyze the
 temporal properties supported by  \romeo{}~\cite{romeo},
 albeit in a few
 cases without parametric bounds in the temporal formulas.  %, albeit
\romeo{} can analyze the following temporal properties:
\[
    \textbf{Q}\, \phi\,\CTLU_J\,\psi \;\mid\; \textbf{Q} \CTLF_J\,\phi \;\mid\;
\textbf{Q} \CTLG_J \,\phi \;\mid\; \phi \rightsquigarrow_{\leq b} \psi
\]where $\textbf{Q}\in \{\exists, \forall\}$ is the existential/universal path
quantifier, $\phi$ and $\psi$ are \emph{state predicates} on \emph{markings},
and $J$ is a time
interval  $[a,b]$, where $a$ and/or $b$ can be parameters and $b$ can
be $\infty$. % The interval
% subindex is omitted when $J=[0,\infty)$.  %%% Peter: not used.
% The above temporal connectives have the usual meaning.
For example,
$\forall  \CTLF_{[a,b]}\,\phi$ says that
in \emph{each} path from the initial state,  a marking satisfying
$\phi$ is  reachable in some time
in  $[a,b]$.
The bounded  response $\phi
\rightsquigarrow_{\leq b} \psi $ denotes the formula
$\forall\CTLG(\phi \,\Longrightarrow\,
\forall\CTLF_{[0,b]} \,\psi)$ (each $\phi$-marking \emph{must} be followed
by a $\psi$-marking  within time $b$).  %\po{Note that I changed the
  %interval for leads-to to $\leq b$, according to the TACAS tool paper.}

\medskip
Since queries include time bounds, we use the theory
$\rtheorySymN{2}$ (that adds a component representing the global clock)
so that the term
$\mathtt{tickOk}\;\code{:}\;m\;\code{:}\;clocks\;\code{:}\;net\;\mathtt{@\;}t$
represents a state of the system where the ``global clock'' is $t$.

\medskip
Some of the temporal formulas supported by \romeo{} can be easily verified
using reachability commands similar to the ones presented in the previous section.
The property
$\exists\CTLF_{[a,b]}\,\psi$ can be verified using the command

%\lp{Why is there "not" in the search command?}
\begin{maude}
search [1] in 'MODEL : ($net$, $m_0$, $\phi$) s.t. $\psi$ and in-time [$a\:$:$\:b$] .
\end{maude}
where $\phi$ states that all  parameters are non-negative  numbers
(and adds the net's $K_0$ constraints, including that firing
intervals are not empty), and
 $a$ and $b$ can be variables representing parameters to be
 synthesized.

%%%\eject
The dual property  $\forall \CTLG_{[a,b]} \:\phi$ can be
checked by analyzing
  $\exists \CTLF_{[a,b]} \:\neg\,\phi$.

\begin{example}
    Consider the PITPN in Fig.~\ref{fig:ex-scheduling} with
    parameter constraint
    $\phi = 30\leq a\leq 70$. The property
    $\exists\CTLF_{[0,b]}(\neg {  \it 1\mhyphen safe} )$
    can be verified with the following command,
    which shows that the desired property
      holds when the upper time bound $b$ in the timed temporal logic
      formula satisfies $60 \leq b$.
\begin{maude}
Maude> $\maudeit{\blue{red} search [1] in 'MODEL : (net, }$$m_0,$ $\phi$$\maudeit{ and 0 <= rr("b"))}$
           $\maudeit{s.t. not k-bounded(1) in-time [0 : rr("b")] .}$

result Result:
  ("END1" |-> 0 ; "END2" |-> 0 ; "END3" |-> 0 ; "R1" |-> 0 ; "R2" |-> 2 ; "R3" |-> 1)
   when ... 2 * rr("a") <= rr("b") and 30 <= rr("a") and rr("a") <= 48 ...
\end{maude}
  \end{example}

The bounded response $\phi \rightsquigarrow_{\leq b} \psi $ formula
can  be verified using a simple theory transformation on
$\rtheorySymN{1}$ followed by
reachability analysis. The theory transformation
adds a new constructor for the sort \code{State}
to build terms of the form $\red{C_\phi}\texttt{ : }M\texttt{ : }{\mathit Clocks}\texttt{ : } {\mathit Net}$,
where $C_\phi$ is either \code{noClock} or \code{clock($\tau$)}; the
latter represents the  time ($\tau$) since
a $\phi$-state was visited, without having been followed by a $\psi$-state.
 The rewrite rules are adjusted to
update this new component as follows.
 The new \code{tick} rule  updates
 \code{clock(T1)} to \code{clock(T1$\;$+$\;$T)} and leaves
 \code{noClock} unchanged.
The rule \code{applyTransition} is split into two rules:
\begin{maude}
crl [applyTransition] : clock(T) : M ... => NEW-TP : M' ...
 if NEW-TP := if $\mathit{STATE'}$ |= $\psi$ then noClock else clock(T) fi /\ ...

crl [applyTransition] : noClock : M ... => NEW-TP : M' ...
 if NEW-TP := if $\mathit{STATE'}$ |= $\phi$ and not $\mathit{STATE'}$ |= $\psi$
    then clock(0) else noClock fi /\ ...
\end{maude}
In the first rule, if a $\psi$-state is encountered, the
new ``$\phi$-clock''  is reset
to \code{noClock}. In the second rule, this ``$\phi$-clock'' starts running
if the new state satisfies $\phi$ but not $\psi$.
The query $\phi \rightsquigarrow_{\leq b} \psi $ can be answered
by   searching for a state where a $\phi$-state has not been followed
by a $\psi$-state before the deadline $b$:
\begin{maude}
search [1] ...  =>*  S : PHI' $\parallel $ clock(T) : ... such that T > $b$ .
\end{maude}

Reachability analysis cannot be used to analyze the other properties
supported by \romeo{} ($\textbf{Q}\, \phi\,\CTLU_J\,\psi$, and
$\forall \CTLF_J\,\phi$ and its dual
$\exists \CTLG_J \,\phi$).  %, and   $\phi \rightsquigarrow_J \psi$).
While developing a full SMT-based \emph{timed} temporal logic model
checker is future
work,  we can  combine Maude's explicit-state model checker and
SMT solving to solve these (and many other) queries. On the positive side, and
beyond \romeo, we can use full LTL.

\medskip
The timed temporal operators $\Diamond_I$, $\mathcal{U}_I$, and
$\Box_I$, written \code{<$\:I\:$>}, \code{U$\;I$}, and \code{[$\,I\,$]},
respectively,  in our framework, can be defined on top of the
(untimed) LTL temporal operators in Maude (\code{<>}, \code{[]}, and
\code{U}) as follows :
\begin{maude}
op <_>_ :  Interval Prop     -> Formula .
op _U__ : Prop Interval Prop -> Formula .
op [_]_ : Interval Prop      -> Formula .

vars PR1 PR2 : Prop .

eq < INTERVAL > PR1   = <> (PR1 /\ in-time INTERVAL) .
eq PR1 U INTERVAL PR2 = PR1 U (PR2 /\ in-time INTERVAL) .
eq [ INTERVAL ] PR1   = ~ (< INTERVAL > (~ PR1)) .
\end{maude}
\eject
For this fragment of  non-nested timed temporal logic formulas, it is possible
to model check universal and existential quantified formulas with the
following commands:

\begin{maude}
op A-model-check in_:_satisfies_in-time_ : Qid InitState Formula Rat -> Bool .
op E-model-check in_:_satisfies_in-time_ : Qid InitState Formula Rat -> Bool .
\end{maude}

Non-nested $A$-formulas
can be directly model checked by calling  Maude's
LTL model checker:

\noindent\lstinline{modelCheck(STATE, F) == true}.
For the $E$-formulas, what we need is to check
whether $\neg F$ does not hold: \lstinline{modelCheck(STATE , ~ F) =/= true}.
The ``\texttt{in time $r$}'' part in the command is optional, and it is used to
perform bounded model checking, forbidding the application
of the tick rule when  the global clock
is  beyond $r$.  This parameter is specially important
for $E$-formulas that require exploring the whole state space to
check whether $F$ does not hold.

\begin{example}\label{ex:tutorial}
Consider the PITPN \texttt{tutorial} in~\Cref{fig:tutorial}.
 Below we model check some formulas and, in comments, we explain the results:
\begin{maude}
--- All paths lead to a state where the number of tokens in place start >= 2
Maude> $\maudeit{\blue{red} A-model-check in 'MODEL : (}$$net,$ $m,$ $\phi\maudeit{) satisfies  (<> (start >= 2)) .}$

result Bool: true

--- The corresponding E-query requires a time bound to terminate
Maude> $\maudeit{\blue{red} E-model-check in 'MODEL : (}$$net,$ $m,$ $\phi\maudeit{) }$ $\maudeit{satisfies  (<> (start >= 2)) in-time 30 .}$

result Bool: true

--- 20 time units are not sufficient for producing 2 tokens
Maude> $\maudeit{\blue{red} E-model-check in 'MODEL : (}$$net,$ $m,$ $\phi\maudeit{) }$  $\maudeit{satisfies  <> (start >= 2) in-time 20 . }$

result Bool: false

--- The net is 1-safe until a state start >= 2 is reached
Maude> $\maudeit{\blue{red} E-model-check in 'MODEL : (}$$net,$  $m,$  $\phi\maudeit{) } $
           $\maudeit{satisfies  (k-bounded(1)) U [0 :  30] (start >= 2) in-time 90 .}$

result Bool: true

--- The existential property is true while the universal is not since there
--- is an execution path where 3 tokens are accumulated in fatherCont (and not in start)
Maude> $\maudeit{\blue{red} A-model-check in 'MODEL : (}$$net,$ $m,$ $\phi\maudeit{) }$
          $\maudeit{satisfies  (k-bounded(2)) U [0 :  90] (start >= 3) in-time 90 .}$

result Bool: false

Maude> $\maudeit{\blue{red} E-model-check in 'MODEL : (}$$net,$ $m,$ $\phi\maudeit{) }$
          $\maudeit{satisfies  (k-bounded(2)) U [0 :  90] (start >= 3) in-time 90 .}$

result Bool: true
\end{maude}\vspace*{-4mm}
\end{example}

\section{Benchmarking}\label{sec:benchmarks}
%\co{I rewrote some paragraphs... please check}
We have used six PITPNs to compare the performance of our
Maude-with-SMT analysis with that of \romeo{} (version 3.9.4),  and
with that of our previous
implementation presented in \cite{DBLP:conf/apn/AriasBOOPR23}. We compare the
time it takes % for different rewrite theories
to solve the synthesis problem
\EF{($p > n$)} (\ie{} place $p$ holds more than $n$ tokens), for different
places $p$ and $0 \leq n \leq 2$, and to check whether the net is
$1$-safe.

The
PITPNs used in our experiments are: the
\texttt{producer-consumer} system~\cite{Wang1998} in~\Cref{fig:producers}, the
\texttt{scheduling} system~\cite{DBLP:journals/jucs/TraonouezLR09}
in~\Cref{fig:ex-scheduling},   the \texttt{tutorial} system
in~\Cref{fig:tutorial} taken from the \romeo{} website and modified as
explained in Example \ref{ex:tutorial}, and
the systems
 \texttt{abitpro}, \texttt{train1}, and \texttt{train2}
provided as examples in the \romeo{} distribution\footnote{Other systems
are available in the current \romeo{} distribution but they
use features not supported by ``standard'' PITPNs, including
functions, updates, costs \cite{DBLP:journals/fuin/LimeRS21}, time
inhibitors \cite{DBLP:conf/apn/RouxL04}, etc.}.
For \EF-synthesis problems, we benchmark the performance of commands
implementing reachability with and without folding: \code{search-sym}
(\EF-synthesis using theory $\rtheorySymN{1}$, which does not use
folding), \code{search-sym2} (theory
$\rtheorySymN{3}$, interleaving \code{tick} and \code{applyTransition} rules),
\code{search-folding} (\EF-synthesis with folding using the theory
$\rtheorySymNF{1}$, which uses Maude's \texttt{search} command and folds
symbolic states in the same branch of the search tree) and
\code{folding} (our own meta-level implementation of breath-first
search that folds symbolic states across all branches in the search
tree). For safety synthesis problems, we use
the command \code{AG-synthesis}.

\medskip
We ran all the experiments on a Dell Precision Tower 3430 with a processor Intel
Xeon E-2136 6-cores @ 3.3GHz, 64 GiB memory, and Debian 12. Each
experiment was executed using Maude 3.3.1 connected with the SMTq solver
Yices2, and Maude-SE \cite{maude-se} in combination with
Yices2, CVC4  and Z3.
We use a timeout of 5 minutes. The reader can find the data of
all the experiments and the scripts
needed to reproduce them in~\cite{pitpn2maude}.

\begin{figure}[!ht]
\vspace*{-1mm}
  \centering
  \begin{tabular}{c}
     \includegraphics[width=0.96\textwidth]{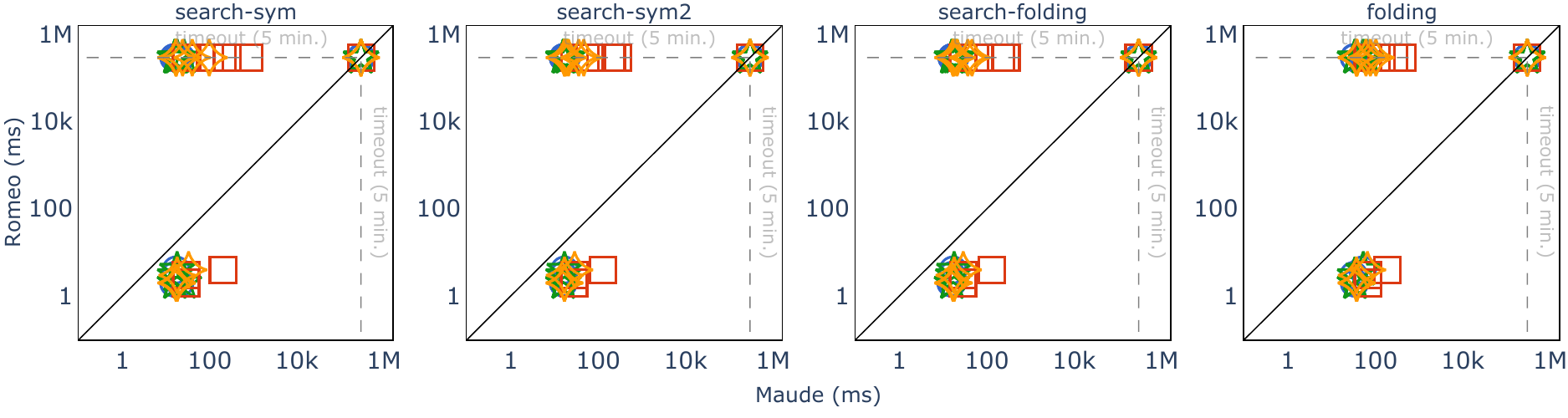}\\[-6pt]
    {(a) \texttt{producer-consumer}\label{fig:bench:A}}
   \\ [4pt] %
   \includegraphics[width=0.96\textwidth]{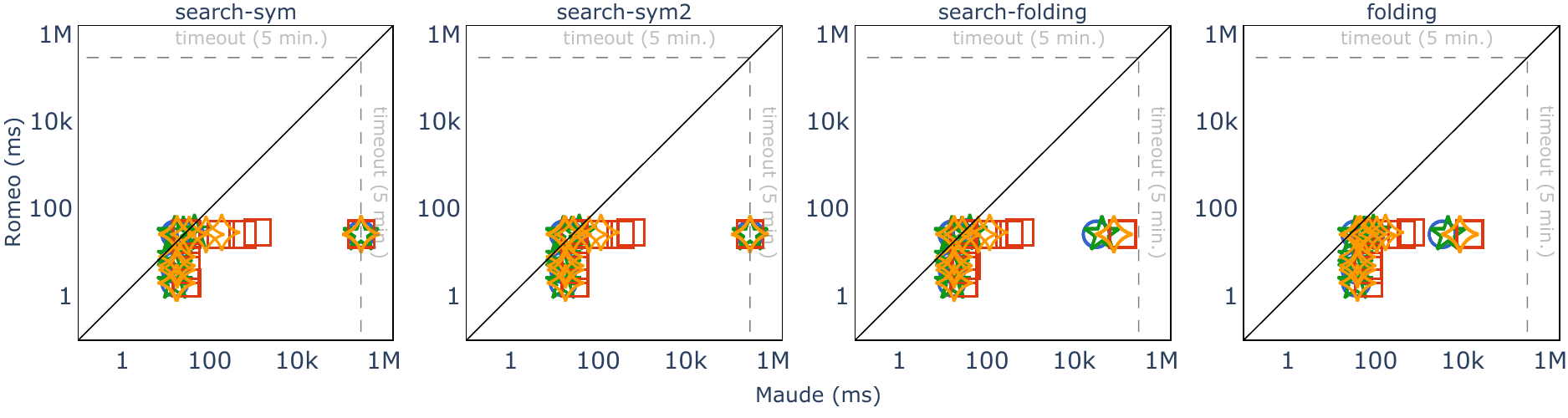}\\[-6pt]
   {(b) \texttt{scheduling}\label{fig:bench:B}}
     \\[4pt]
      \includegraphics[width=0.96\textwidth]{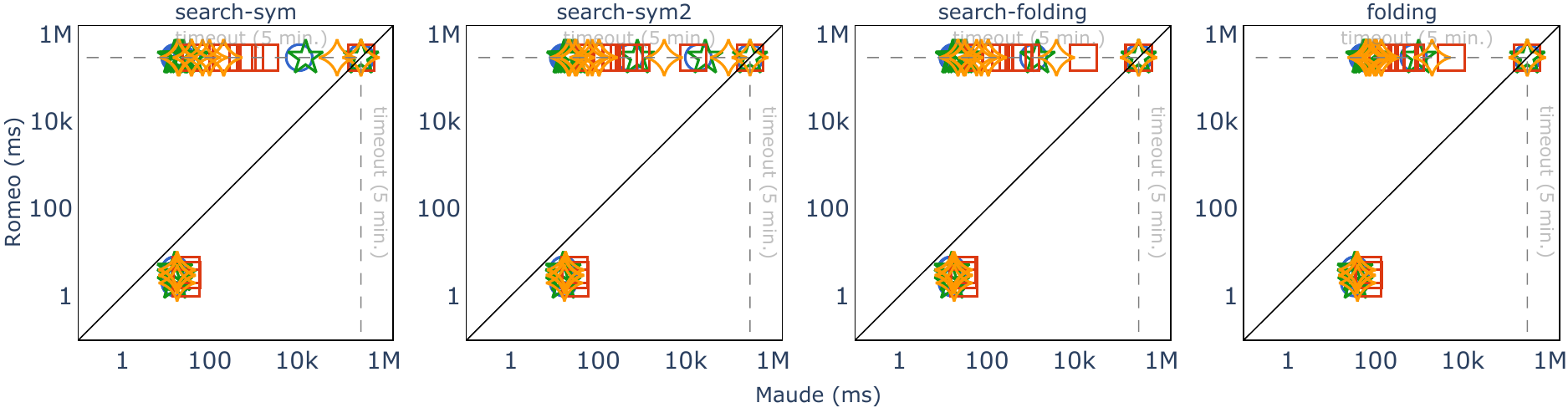}\\[-6pt]
    {(c) \texttt{tutorial}\label{fig:bench:C}}\vspace*{-2mm}
    \end{tabular}
     \caption{\EF-Synthesis. Execution times in log-scale for \romeo{} and Maude connected with
  Yices2 {(\symYices)}, Maude-SE with Yices2 {(\symSEYices)}, Z3 (\symZ), and CVC4 (\symCvc).\label{fig:bench1}}\vspace*{-2mm}
  \end{figure}%
%%%%%%%%%%%%%%%%%%%%%%%%%%

\medskip
Figures~\ref{fig:bench1},~\ref{fig:bench2}, and \ref{fig:bench3} show
the execution times of \romeo{} and Maude in log-scale
for the six benchmarks. The diagonal line represents when the two systems
take the same time to analyze the property \EF{($p > n$)}
(\Cref{fig:bench1,fig:bench2})
and  1-safety
(\Cref{fig:bench3}). An item (or ``point'') above the diagonal line represents a problem
instance where Maude outperforms \romeo{}. Items in the horizontal
line at the top of
each figure represent instances where \romeo{} timed out and Maude was able to
complete the analysis. Items on the vertical line on the right
represent instances where Maude timed
out and \romeo{} completed the analysis.

Executing Maude with Yices2 is marginally faster than Maude-SE with Yices2,
and Maude connected with Yices2 outperforms executing Maude-SE connected to
the other two SMT solvers. For the \EF-synthesis problems in \Cref{fig:bench1,fig:bench2},
it is worth noticing  that: all  queries except
 $p_5 > 2$ (where \romeo{} also times out) could be solved in the
\code{producer-consumer} benchmark; the reduction of
the state space achieved by the folding procedures (commands
\code{search-folding} and \code{folding}) were needed to solve the
queries $\mathit{end}_3
> 2$ in \code{scheduler}, $\mathit{far}_1>2$ in \code{train1} and
\code{train2}, and $P_{12}>2$ in \code{abitpro};
and all the queries in \code{tutorial}
could be  solved by all the commands using Yices2 and CVC4. All the \AG-synthesis
problems (\Cref{fig:bench3}) could be solved by the command \code{AG-synthesis} (that uses the
implementation of folding at the meta-level).

In some reachability queries, Maude outperforms \romeo{}. More interestingly,
our approach terminates in cases where \romeo{} does not (items in the
horizontal line at the top of Figs.~\ref{fig:bench1}a and \ref{fig:bench1}c). Our
results are proven valid when injecting them into the model and running
\romeo{} with these additional constraints. This phenomenon happens when the
search order leads \romeo{} in the exploration of an infinite branch with an
unbounded marking.

\medskip
In \Cref{comp:Maude} we compare the performance of the different Maude
commands for \EF-synthesis and \AG-synthesis. For some instances,
\code{search-sym2}  is
marginally faster than  the command \code{search-sym}.
%\po{Could you very briefly quickly repeat the difference between
%  search-sym2 and search-sym? Maybe even in parenthesis? CO: Done}
In general,
the command \code{search-folding}  is faster than
the command \code{folding}.
%\po{same thing here. CO: Done}
However, the extra
reduction of the search space (folding states in different branches of the
search tree) allows \code{folding} to solve some instances faster in the
\code{scheduling} benchmark (see the three items on the right in
\Cref{comp:Maude}b).

We have also compared the performance of the folding analysis presented in the
conference version of this paper \cite{DBLP:conf/apn/AriasBOOPR23} and the
current one. As explained in \Cref{sec:imp:folding}, the current implementation
uses the FME procedure implemented as an equational theory in Maude, while the
previous one relies on the procedure implemented in Z3. The results
are given in \Cref{fig:compare} and they clearly indicate that the new implementation
is more efficient than the previous one.

\clearpage

  \begin{figure}[!htbp]
  \vspace*{7mm}
      \centering
     \begin{tabular}{c}
    \includegraphics[width=0.98\textwidth]{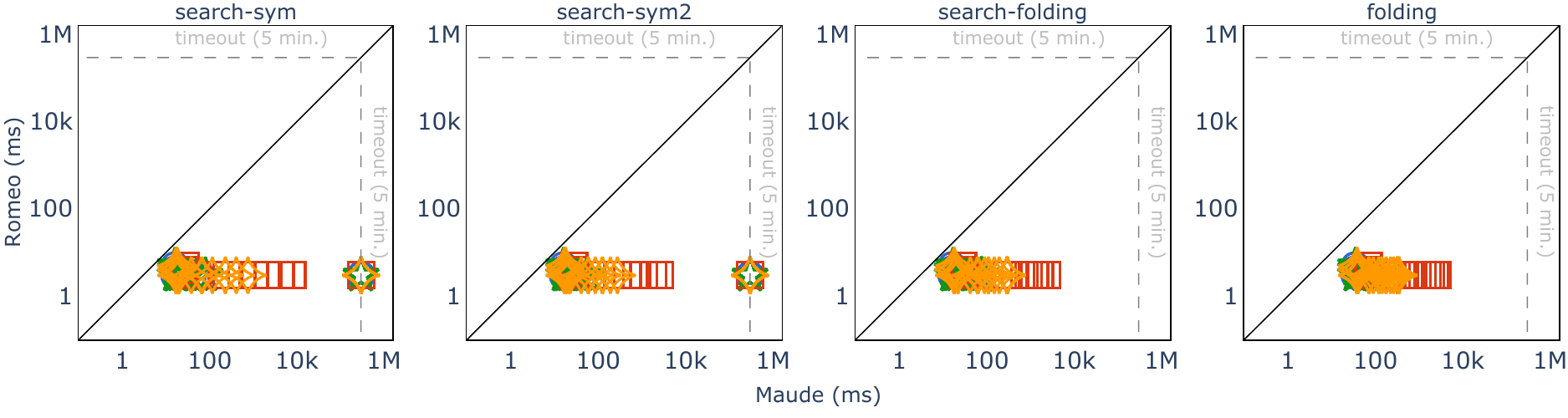}\\
    {(a) \texttt{abitpro}\label{fig:bench:D}}
     \\
    \includegraphics[width=0.98\textwidth]{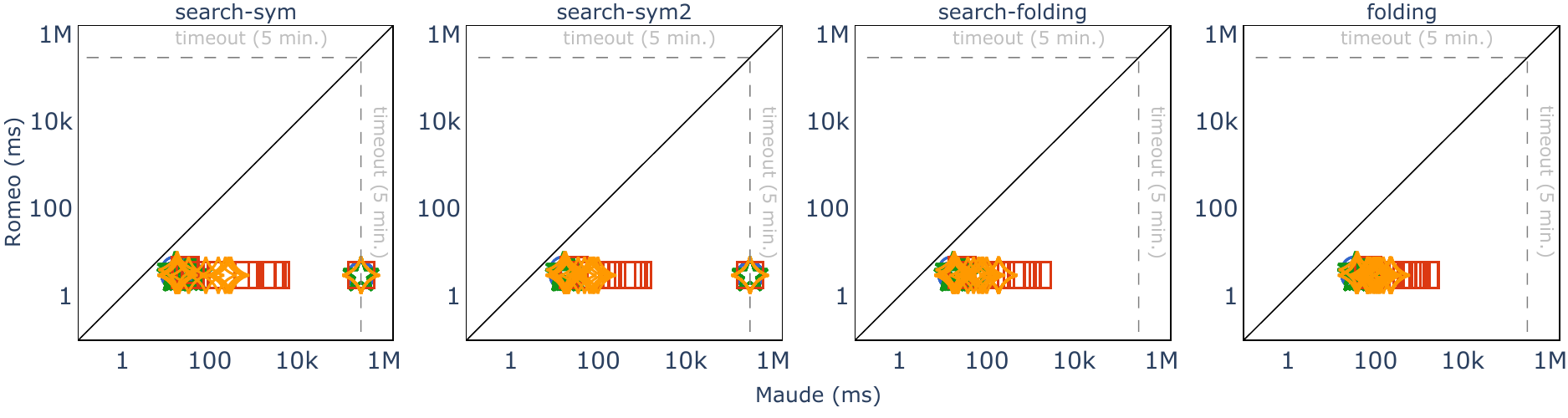}\\
    {(b) \texttt{train1}\label{fig:bench:E}}
   \\
    \includegraphics[width=0.98\textwidth]{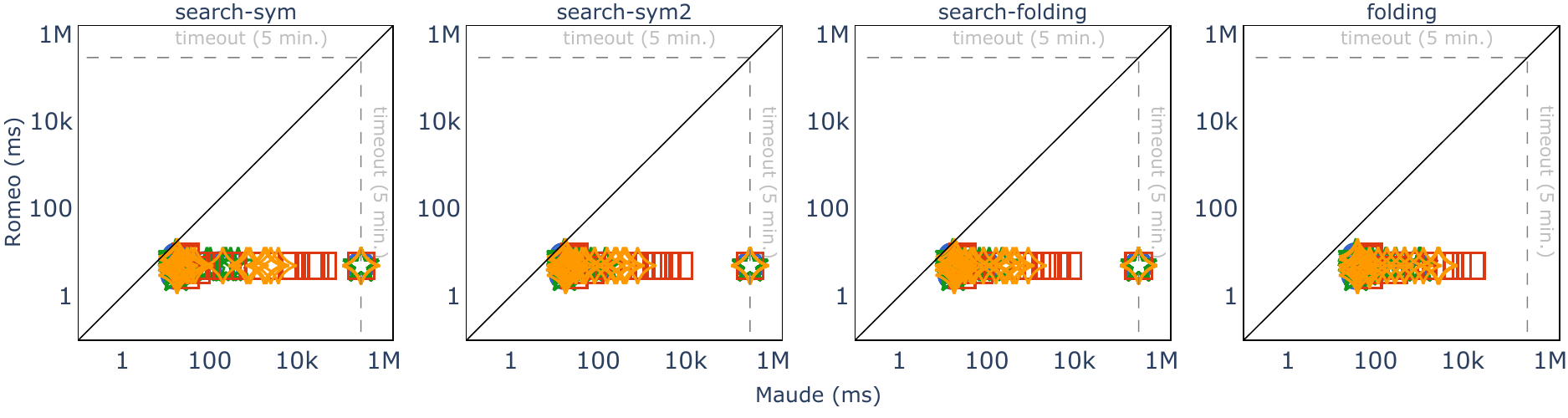}\\
    {(c) \texttt{train2}\label{fig:bench:F}}
  \end{tabular}
  \caption{\EF-Synthesis. Execution times in log-scale for \romeo{} and Maude connected with
  Yices2 {(\symYices)}, Maude-SE with Yices2 {(\symSEYices)}, Z3 (\symZ), and CVC4 (\symCvc).\label{fig:bench2}}
\end{figure}

\clearpage

\begin{figure}
    \centering
    \includegraphics[width=.8\textwidth]{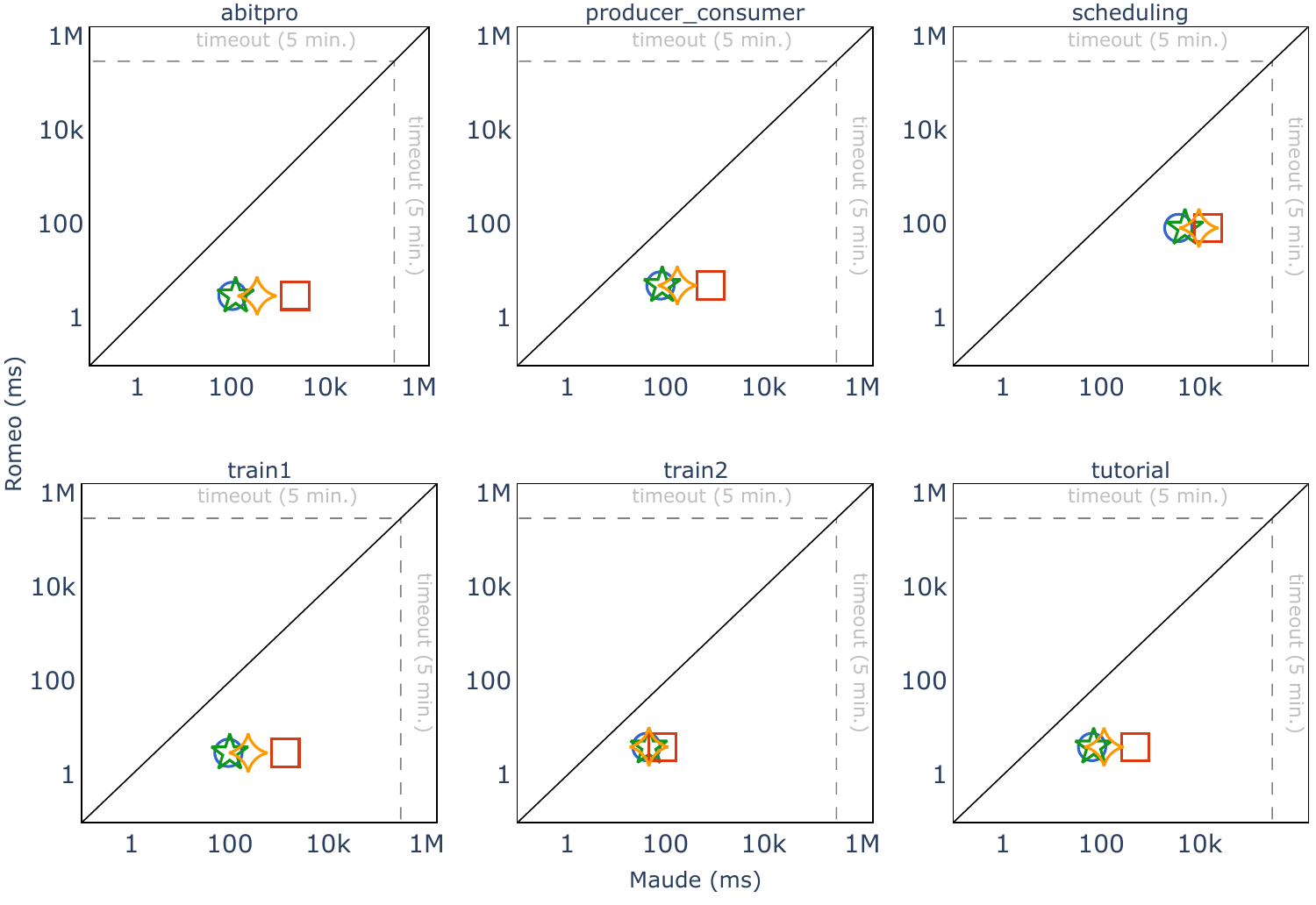}
  \caption{\AG-Synthesis. Execution times in log-scale for \romeo{} and Maude connected with
  Yices2 {(\symYices)}, Maude-SE with Yices2 {(\symSEYices)}, Z3 (\symZ), and CVC4 {(\symCvc)}.\label{fig:bench3}}
\end{figure}

\clearpage
%%%figure2

\begin{figure}[h]
  \centering
  \begin{tabular}{c}
    \includegraphics[width=0.7\textwidth]{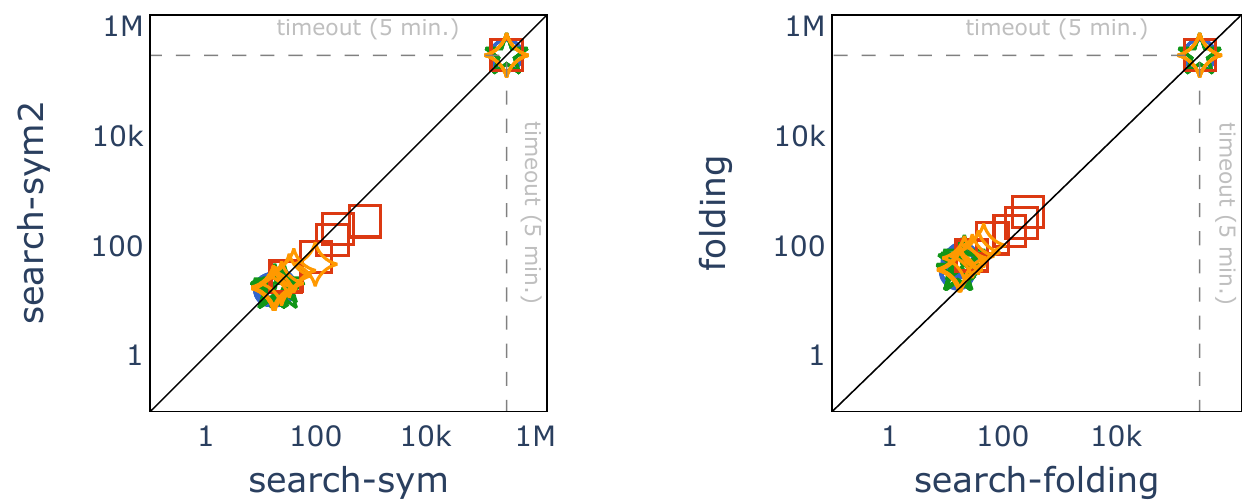}\\
   {(a) \texttt{producer-consumer}}
 \\
    \includegraphics[width=0.7\textwidth]{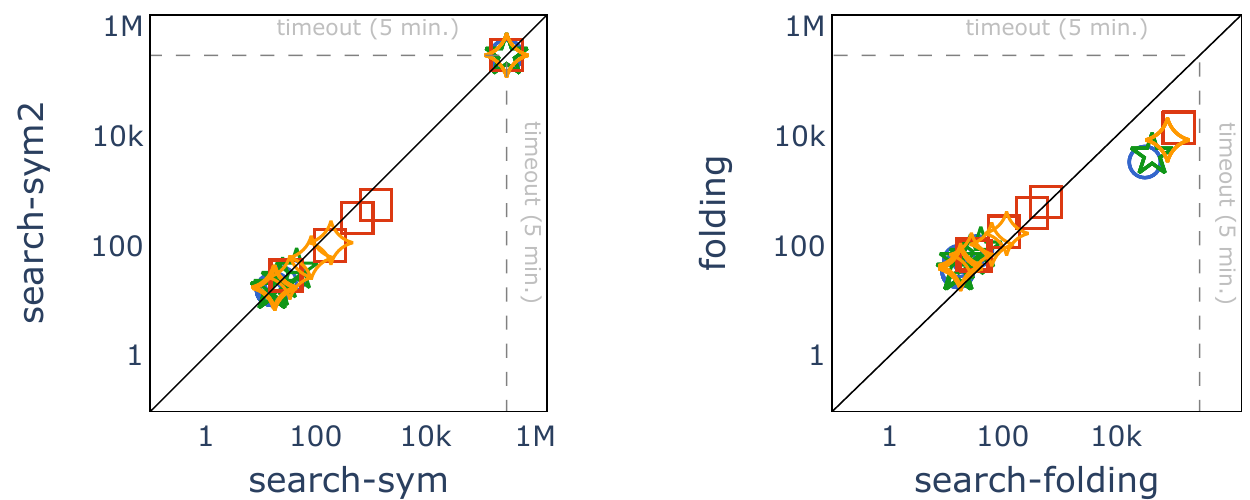}\\
  {(b) \texttt{scheduling}\label{fig:fold-sc}}
  \\
    \includegraphics[width=0.7\textwidth]{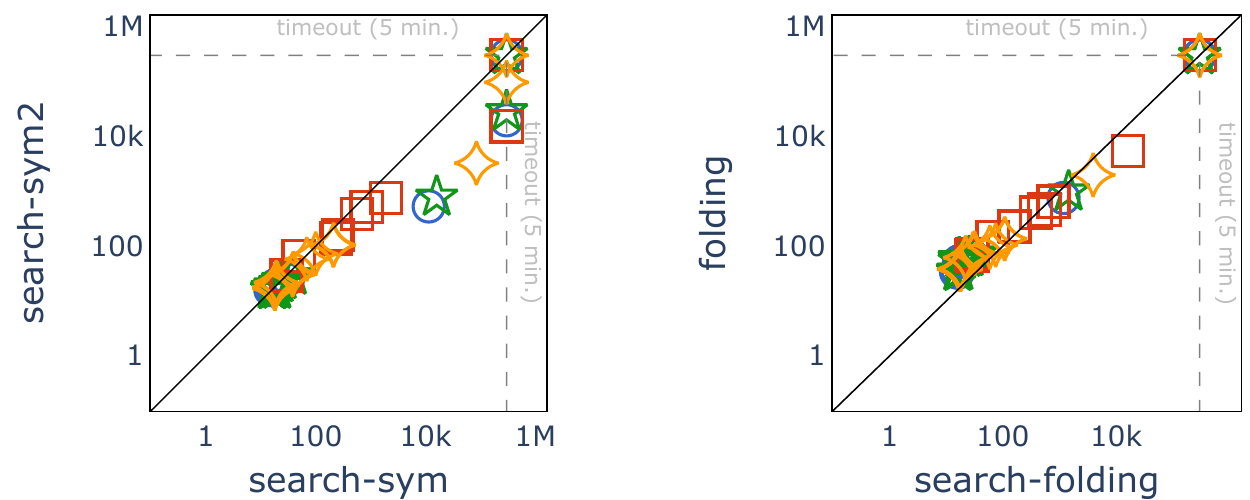}\\
     {(c) \texttt{tutorial}}
    \end{tabular}\\
  \caption{Comparison of the different Maude commands in log-scale.
Maude with Yices2 {(\symYices)}, Maude-SE with Yices2 {(\symSEYices)}, Z3 (\symZ), and CVC4 {(\symCvc)}.\label{comp:Maude}}
\end{figure}

\clearpage

\begin{figure}[!ht]
  \centering
  \includegraphics[width=.77\textwidth]{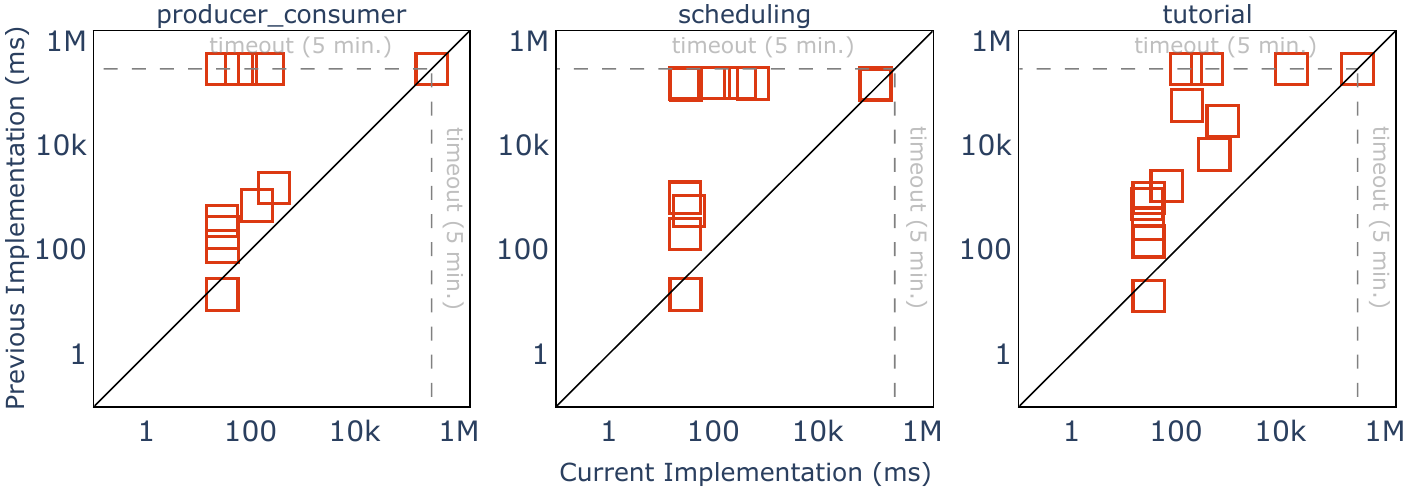}
  \caption{\label{fig:compare}Comparison of the implementation reported in this paper and the one in \cite{DBLP:conf/apn/AriasBOOPR23}. \label{comp:Maude-old}}\vspace*{-6mm}
\end{figure}

%%%%%%%%%%%%%%%%%%%%%%%%%%%

% !TEX root = main.tex

\section{Related work}   \label{sec:related}

\paragraph{Tool support for parametric time Petri nets.}
%%% Peter: I removed 'analysis' above
% \po{@Laure: Can you write about other approaches, and how we differ
% from them? Hopefully there are not too many.}
We are not aware of any  tool for analyzing parametric time(d)
Petri nets other than \romeo{}~\cite{romeo}. % Robustness is addressed
                                % in~\cite{EAGPLP13},
% \ie{} the time zone around a reference parameter valuation that exhibits the same
% properties
% (without tool support).

\paragraph{Petri nets in rewriting logic.} Formalizing
Petri nets algebraically~\cite{meseguer-montanari}  was one of the
inspirations behind rewriting  logic. Different kinds of Petri nets
are given a rewriting logic semantics in~\cite{petri-nets-in-maude},
and in~\cite{OlvMesTCS} for timed nets. In contrast to our paper, these papers
focus on the semantics of such nets, and do not consider execution and
analysis;  nor do they consider inhibitor arcs or parameters.
Capra~\cite{capra22,capra22b}, Padberg and Schultz~\cite{padberg},  and Barbosa et
al.~\cite{barbosa} use
Maude to formalize   dynamically
reconfigurable Petri nets (with inhibitor arcs) and I/O
Petri nets.  In
contrast to our work, these papers target untimed and non-parametric nets, and
do not focus on formal analysis, but only show examples
of standard (explicit-state) reachability analysis and LTL model checking.

Finally, we describe the differences between
this paper and its conference
version~\cite{DBLP:conf/apn/AriasBOOPR23}  in detail in the
introduction.

\paragraph{Symbolic methods for real-time systems in Maude.}
We  develop a symbolic rewrite
semantics and analysis for parametric time automata (PTA)
in~\cite{ftscs22} and \cite{ftscs-journal}.  The differences with the
current paper include:
PTAs are very simple structures
compared to PITPNs (with inhibitor arcs, no bounds on the number of
tokens in a state), so that the semantics of PITPNs is  more
sophisticated than the one for PTAs, which does not use, e.g.,
``structured'' states;
we could use  ``standard'' folding of symbolic states for PTAs compared to having to
develop a new folding mechanism  for PITPNs;
and so on.

Santiago Escobar and others recently developed a \emph{narrowing with SMT}
approach in rewriting logic to symbolically analyze parametric timed
automata extended with other features~\cite{DBLP:conf/ftscs/0001LS23}.
Essentially, rewriting
execution is
replaced by narrowing (``rewriting with logical variables'')
execution.    The key advantage of their approach
is that it allows analyzing symbolic states that
contain logical variables also of non-SMT values.   So far, we have
been able to treat all parameters in both parametric timed automata
and parametric time Petri nets, as well as parametric markings,  with
SMT variables. However, using narrowing  should widen the
scope of symbolic analyses of real-time systems, at the  cost of using
the much more complex narrowing instead of rewriting. Escobar et
al.~also develop  folding methods for merging symbolic states in
narrowing with SMT.

In addition,
a variety of real-time systems
have been formally analyzed using rewriting with SMT,
including
PLC ST programs~\cite{lee2022bounded},
virtually synchronous cyber-physical
systems~\cite{lee2022extension,hsaadl-sttt,lee2021hybrid},
and
soft agents~\cite{nigam2022automating}.
These papers differ from our work
in that
they use
guarded terms~\cite{bae2017guarded,bae2019symbolic}
for state-space reduction instead of folding,
and
do not consider parameter synthesis problems.

\section{Concluding remarks}  \label{sec:concl}

In this paper, we first provided a ``concrete'' rewriting logic
semantics for  (instantiated)
parametric time Petri nets with inhibitor arcs (PITPNs), and proved
that this semantics is bisimilar to the semantics
of such nets in~\cite{paris-paper}. However, this model is
non-executable; furthermore, explicit-state Maude analysis using
Real-Time Maude-style ``time sampling'' leads to unsound analysis for
dense-time systems such as PITPNs. We therefore \emph{systematically
  transformed} this model  into a ``symbolic'' rewriting logic model which  is
amenable to  sound and complete symbolic  analysis using Maude
combined with SMT solving. The resulting symbolic model also provides
a rewriting logic
 semantics for \emph{un-initialized} PITPNs.

\medskip
We have shown how  almost all formal analysis and parameter synthesis methods
supported by the state-of-the-art PITPN tool \romeo{} can be performed using
Maude-with-SMT. In addition, we have shown how Maude-with-SMT
can provide  additional capabilities for PITPNs, including
synthesizing also
 initial markings (and not just firing bounds) from \emph{parametric}
initial markings so that desired properties are satisfied,  full LTL
model checking, and analysis with user-defined execution
strategies.  We have developed  a new
``folding''  method for symbolic states, so that symbolic reachability
analysis using Maude-with-SMT terminates whenever  the corresponding
\romeo{} analysis terminates.

We have compared the performance of \romeo{}, our previous implementation
 in \cite{DBLP:conf/apn/AriasBOOPR23}, and the new commands
described in \Cref{sec:analysis:methods}.
The implementation of the Fourier-Motzkin elimination  procedure
directly as an equational theory allowed us to execute all the analyses with any
SMT solver connected to Maude.
The benchmarking shows that Maude combined with
Yices2 in many cases outperforms \romeo, whereas Maude combined with
Z3 and CVC4 is
significantly slower.
The results also show a considerable improvement
with respect to the implementation reported in \cite{DBLP:conf/apn/AriasBOOPR23}.
 We also experienced that \romeo{} sometimes did not find
(existing) solutions, and that the output of some executions included the message
``\emph{maybe}'' (indicating that \romeo\ has computed an
 approximation).

It is also worth pointing out that we analyze an ``interpreter'' for
PITPNs, where one rewrite theory is used to
analyze all PITPNs.  ``Compiling'' each PITPN to a different
symbolic rewrite theory (where each transition typically would be  modeled
 by a separate rewrite rule) might improve on the already
competitive performance of our interpreter; we should explore this in
future work.

Another advantage of our  high-level
implementation, using Maude and its
meta-programming features, is that it is very easy to develop, test,
and evaluate different analysis methods, maybe \emph{before} they
are efficiently implemented  in state-of-the-art tools such as \romeo.  The wealth of
analysis methods for PITPNs that we could  quickly define and
implement demonstrate  this convenience.

This paper has not only provided significant new analysis features
for PITPNs. It has
also shown that even a
model like our Real-Time Maude-inspired PITPN interpreter---with
functions, equations, and unbounded markings---can easily be turned
into a  symbolic rewrite theory for which Maude-with-SMT provides very
useful sound and complete analyses even for dense-time systems.  This
work have given us valuable insight into symbolic analysis of
real-time systems as we continue  our quest to extend ``symbolic
Real-Time Maude'' to
wider classes of real-time systems.

In future work we should also:
extend Maude's LTL model checker to a full SMT-based (with folding) timed
LTL and CTL model checker, thus covering all the analysis provided  by
\romeo{}; and develop a richer timed strategy language for controlling
the executions of PITPNs.

\paragraph{Acknowledgments.}
We would like to thank the anonymous reviewers for their very
insightful comments on an  earlier version of this paper.
Arias,
Olarte, {\"O}lveczky, and  Petrucci gratefully
acknowledge    support  from  %CNRS INS2I project ESPRiTS and
  the  PHC project Aurora AESIR.    Bae was supported by
  the National Research Foundation of Korea (NRF) grants  funded by
  the Korea government (MSIT) (No. 2021R1A5A1021944 and No. RS-2023-00251577).
  This work was also funded by
 the NATO Science for Peace and Security Programme through grant number
 G6133
 (project SymSafe).

\bigskip

        %%%       \nocite{*}      \bibliographystyle{fundam}      \bibliography{bib-new}

\begin{thebibliography}{10}
\providecommand{\url}[1]{\texttt{#1}}
\providecommand{\urlprefix}{URL }
\expandafter\ifx\csname urlstyle\endcsname\relax
  \providecommand{\doi}[1]{doi:\discretionary{}{}{}#1}\else
  \providecommand{\doi}{doi:\discretionary{}{}{}\begingroup
  \urlstyle{rm}\Url}\fi
\providecommand{\eprint}[2][]{\url{#2}}

\bibitem{Merlin74}
Merlin PM.
\newblock A study of the recoverability of computing systems.
\newblock Ph.D. thesis, University of California, Irvine, CA, USA, 1974.
doi:10.5555/907383.

\bibitem{DBLP:reference/crc/VernadatB07}
Vernadat F, Berthomieu B.
\newblock State Space Abstractions for Time {P}etri Nets.
\newblock In: Son SH, Lee I, Leung JY (eds.), Handbook of Real-Time and
  Embedded Systems. Chapman and Hall/CRC, 2007.
\newblock \doi{10.1201/9781420011746.pt6}.

\bibitem{paris-paper}
Traonouez L, Lime D, Roux OH.
\newblock Parametric Model-Checking of Time {P}etri Nets with Stopwatches Using
  the State-Class Graph.
\newblock In: Formal Modeling and Analysis of Timed Systems ({FORMATS} 2008),
  volume 5215 of \emph{LNCS}. Springer, 2008 pp. 280--294.
\newblock \doi{10.1007/978-3-540-85778-5_20}.

\bibitem{DBLP:conf/formats/GrabiecTJLR10}
Grabiec B, Traonouez L, Jard C, Lime D, Roux OH.
\newblock Diagnosis Using Unfoldings of Parametric Time {P}etri Nets.
\newblock In: Formal Modeling and Analysis of Timed Systems ({FORMATS} 2010),
  volume 6246 of \emph{LNCS}. Springer, 2010 pp. 137--151.
\newblock \doi{10.1007/978-3-642-15297-9_12}.

\bibitem{EAGPLP13}
Andr\'e E, Pellegrino G, Petrucci L.
\newblock Precise Robustness Analysis of Time {Petri} Nets with Inhibitor Arcs.
\newblock In: Formal Modeling and Analysis of Timed Systems ({FORMATS}'13),
  volume 8053 of \emph{LNCS}. Springer, 2013 pp. 1--15.
  doi:10.1007/978-3-642-40229-6\_1.

\bibitem{DBLP:journals/fuin/LimeRS21}
Lime D, Roux OH, Seidner C.
\newblock Cost Problems for Parametric Time {P}etri Nets.
\newblock \emph{Fundam. Informaticae}, 2021.
\newblock \textbf{183}(1-2):97--123.
\newblock \doi{10.3233/FI-2021-2083}.

\bibitem{romeo}
Lime D, Roux OH, Seidner C, Traonouez L.
\newblock Romeo: {A} Parametric Model-Checker for {P}etri Nets with
  Stopwatches.
\newblock In: Tools and Algorithms for the Construction and Analysis of Systems
  ({TACAS} 2009), volume 5505 of \emph{LNCS}. Springer, 2009 pp. 54--57.
\newblock \doi{10.1007/978-3-642-00768-2_6}.

\bibitem{romeo-biology}
Andreychenko A, Magnin M, Inoue K.
\newblock Analyzing resilience properties in oscillatory biological systems
  using parametric model checking.
\newblock \emph{Biosystems}, 2016.
\newblock \textbf{149}:50--58.
\newblock \doi{https://doi.org/10.1016/j.biosystems.2016.09.002}.

\bibitem{romeo-aerial}
Parquier B, Rioux L, Henia R, Soulat R, Roux OH, Lime D, Andr{\'e} {\'E}.
\newblock Applying Parametric Model-Checking Techniques for Reusing Real-Time
  Critical Systems.
\newblock In: Formal Techniques for Safety-Critical Systems (FTSCS 2016),
  volume 694 of \emph{Communications in Computer and Information Science}.
  Springer, 2017 pp. 129--144.
\newblock \doi{https://doi.org/10.1007/978-3-319-53946-1_8}.

\bibitem{romeo-software}
Coullon H, Jard C, Lime D.
\newblock Integrated Model-Checking for the Design of Safe and Efficient
  Distributed Software Commissioning.
\newblock In: Integrated Formal Methods (IFM 2019), volume 11918 of
  \emph{LNCS}. Springer, Cham, 2019 pp. 120--137.
\newblock \doi{https://doi.org/10.1007/978-3-030-34968-4_7}.

\bibitem{Mes92}
Meseguer J.
\newblock Conditional Rewriting Logic as a Unified Model of Concurrency.
\newblock \emph{Theor. Comput. Sci.}, 1992.
\newblock \textbf{96}(1):73--155.
\newblock \doi{10.1016/0304-3975(92)90182-F}.

\bibitem{20-years}
Meseguer J.
\newblock Twenty years of rewriting logic.
\newblock \emph{J. Log. Algebraic Methods Program.}, 2012.
\newblock \textbf{81}(7-8):721--781.
\newblock \doi{10.1016/j.jlap.2012.06.003}.

\bibitem{maude-book}
Clavel M, Dur{\'{a}}n F, Eker S, Lincoln P, Mart{\'{\i}}{-}Oliet N, Meseguer J,
  Talcott CL.
\newblock All About Maude -- {A} High-Performance Logical Framework, volume
  4350 of \emph{LNCS}.
\newblock Springer, 2007.  doi:10.1007/978-3-540-71999-1.

\bibitem{tacas08}
{\"{O}}lveczky PC, Meseguer J.
\newblock The {R}eal-{T}ime {M}aude Tool.
\newblock In: Tools and Algorithms for the Construction and Analysis of Systems
  ({TACAS} 2008), volume 4963 of \emph{LNCS}. Springer, 2008 pp. 332--336.
\newblock \doi{10.1007/978-3-540-78800-3_23}.

\bibitem{wrla14}
{\"{O}}lveczky PC.
\newblock {Real-Time Maude} and Its Applications.
\newblock In: Rewriting Logic and Its Applications ({WRLA} 2014), volume 8663
  of \emph{LNCS}. Springer, 2014 pp. 42--79.
\newblock \doi{10.1007/978-3-319-12904-4_3}.

\bibitem{aer-journ}
{\"{O}}lveczky PC, Meseguer J, Talcott CL.
\newblock Specification and analysis of the {AER/NCA} active network protocol
  suite in {R}eal-{T}ime {M}aude.
\newblock \emph{Formal Methods Syst. Des.}, 2006.
\newblock \textbf{29}(3):253--293.
\newblock \doi{10.1007/s10703-006-0015-0}.

\bibitem{sefm09}
Lien E, {\"{O}}lveczky PC.
\newblock Formal Modeling and Analysis of an {IETF} Multicast Protocol.
\newblock In: Seventh {IEEE} International Conference on Software Engineering
  and Formal Methods ({SEFM} 2009). {IEEE} Computer Society, 2009 pp. 273--282.
\newblock \doi{10.1109/SEFM.2009.11}.

\bibitem{wsn-tcs}
{\"{O}}lveczky PC, Thorvaldsen S.
\newblock Formal modeling, performance estimation, and model checking of
  wireless sensor network algorithms in {R}eal-{T}ime {M}aude.
\newblock \emph{Theor. Comput. Sci.}, 2009.
\newblock \textbf{410}(2-3):254--280.
\newblock \doi{10.1016/j.tcs.2008.09.022}.

\bibitem{fase06}
{\"{O}}lveczky PC, Caccamo M.
\newblock Formal Simulation and Analysis of the {CASH} Scheduling Algorithm in
  {R}eal-{T}ime {M}aude.
\newblock In: Fundamental Approaches to Software Engineering ({FASE} 2006),
  volume 3922 of \emph{LNCS}. Springer, 2006 pp. 357--372.
\newblock \doi{10.1007/11693017_26}.

\bibitem{airplane-journ}
Bae K, Krisiloff J, Meseguer J, {\"{O}}lveczky PC.
\newblock Designing and verifying distributed cyber-physical systems using
  {M}ultirate {PALS:} An airplane turning control system case study.
\newblock \emph{Sci. Comput. Program.}, 2015.
\newblock \textbf{103}:13--50.
\newblock \doi{10.1016/j.scico.2014.09.011}.

\bibitem{kokichi}
Grov J, {\"{O}}lveczky PC.
\newblock Formal Modeling and Analysis of {G}oogle's {M}egastore in {Real-Time
  Maude}.
\newblock In: Specification, Algebra, and Software -- Essays Dedicated to
  Kokichi Futatsugi, volume 8373 of \emph{LNCS}. Springer, 2014 pp. 494--519.
\newblock \doi{10.1007/978-3-642-54624-2_25}.

\bibitem{sefm14}
Grov J, {\"{O}}lveczky PC.
\newblock Increasing Consistency in Multi-site Data Stores: {M}egastore-{CGC}
  and Its Formal Analysis.
\newblock In: Software Engineering and Formal Methods ({SEFM} 2014), volume
  8702 of \emph{LNCS}. Springer, 2014 pp. 159--174.
\newblock \doi{10.1007/978-3-319-10431-7_12}.

\bibitem{manets-journ}
Liu S, {\"{O}}lveczky PC, Meseguer J.
\newblock Modeling and analyzing mobile ad hoc networks in {R}eal-{T}ime
  {M}aude.
\newblock \emph{J. Log. Algebraic Methods Program.}, 2016.
\newblock \textbf{85}(1):34--66.
\newblock \doi{10.1016/J.JLAMP.2015.05.002}.

\bibitem{giovanna-journ}
Broccia G, Milazzo P, {\"{O}}lveczky PC.
\newblock Formal modeling and analysis of safety-critical human multitasking.
\newblock \emph{Innov. Syst. Softw. Eng.}, 2019.
\newblock \textbf{15}(3-4):169--190.
\newblock \doi{10.1007/s11334-019-00333-7}.

\bibitem{fmoods10}
{\"{O}}lveczky PC, Boronat A, Meseguer J.
\newblock Formal Semantics and Analysis of Behavioral {AADL} Models in
  {Real-Time Maude}.
\newblock In: Formal Techniques for Distributed Systems, Joint 12th {IFIP} {WG}
  6.1 International Conference, {FMOODS} 2010 and 30th {IFIP} {WG} 6.1 {FORTE}
  2010, volume 6117 of \emph{LNCS}. Springer, 2010 pp. 47--62.
\newblock \doi{10.1007/978-3-642-13464-7_5}.

\bibitem{musab-l}
AlTurki M, Dhurjati D, Yu D, Chander A, Inamura H.
\newblock Formal Specification and Analysis of Timing Properties in Software
  Systems.
\newblock In: Fundamental Approaches to Software Engineering ({FASE} 2009),
  volume 5503 of \emph{LNCS}. Springer, 2009 pp. 262--277.
\newblock \doi{10.1007/978-3-642-00593-0_18}.

\bibitem{ptolemy-journ}
Bae K, {\"{O}}lveczky PC, Feng TH, Lee EA, Tripakis S.
\newblock Verifying hierarchical {P}tolemy {II} discrete-event models using
  {Real-Time Maude}.
\newblock \emph{Sci. Comput. Program.}, 2012.
\newblock \textbf{77}(12):1235--1271.
\newblock \doi{10.1016/j.scico.2010.10.002}.

\bibitem{fase12}
Bae K, {\"{O}}lveczky PC, Meseguer J, Al{-}Nayeem A.
\newblock The {SynchAADL2Maude} Tool.
\newblock In: Fundamental Approaches to Software Engineering ({FASE} 2012),
  volume 7212 of \emph{LNCS}. Springer, 2012 pp. 59--62.
\newblock \doi{10.1007/978-3-642-28872-2_4}.

\bibitem{carolyn}
{\"{O}}lveczky PC.
\newblock Semantics, Simulation, and Formal Analysis of Modeling Languages for
  Embedded Systems in {Real-Time Maude}.
\newblock In: Formal Modeling: Actors, Open Systems, Biological Systems --
  Essays Dedicated to Carolyn Talcott on the Occasion of Her 70th Birthday,
  volume 7000 of \emph{LNCS}, pp. 368--402. Springer, 2011.
\newblock \doi{10.1007/978-3-642-24933-4_19}.

\bibitem{wrla06}
{\"{O}}lveczky PC, Meseguer J.
\newblock Abstraction and Completeness for {Real-Time Maude}.
\newblock In: 6th International Workshop on Rewriting Logic and its
  Applications ({WRLA} 2006), volume 174 of \emph{Electronic Notes in
  Theoretical Computer Science}. Elsevier, 2006 pp. 5--27.
\newblock \doi{10.1016/j.entcs.2007.06.005}.

\bibitem{maude-se}
Yu G, Bae K.
\newblock Maude-{SE}: a Tight Integration of {M}aude and {SMT} Solvers.
\newblock In: Preliminary proceedings of WRLA@ETAPS. 2020 pp. 220--232.
\texttt{https://wrla2020.webs.upv.es/pre-proceedings.pdf}

\bibitem{rtm-journ}
{\"{O}}lveczky PC, Meseguer J.
\newblock Semantics and pragmatics of {R}eal-{T}ime {M}aude.
\newblock \emph{High. Order Symb. Comput.}, 2007.
\newblock \textbf{20}(1-2):161--196.
\newblock \doi{10.1007/s10990-007-9001-5}.

\bibitem{ftscs22}
Arias J, Bae K, Olarte C, {\"{O}}lveczky PC, Petrucci L, R{\o}mming F.
\newblock Rewriting Logic Semantics and Symbolic Analysis for Parametric Timed
  Automata.
\newblock In: 8th {ACM} {SIGPLAN} International Workshop on Formal Techniques
  for Safety-Critical Systems ({FTSCS} 2022). {ACM}, 2022 pp. 3--15.
\newblock \doi{10.1145/3563822.3569923}.

\bibitem{ftscs-journal}
Arias J, Bae K, Olarte C, {\"{O}}lveczky PC, Petrucci L, R{\o}mming F.
\newblock Symbolic Analysis and Parameter Synthesis for Networks of Parametric
  Timed Automata with Global Variables using Maude and SMT Solving.
\newblock \emph{Science of Computer Programming}, 2024.
\newblock \textbf{233}.
\newblock \doi{10.1016/j.scico.2023.103074}.

\bibitem{DBLP:conf/apn/AriasBOOPR23}
Arias J, Bae K, Olarte C, {\"{O}}lveczky PC, Petrucci L, R{\o}mming F.
\newblock Symbolic Analysis and Parameter Synthesis for Time {P}etri Nets Using
  {M}aude and {SMT} Solving.
\newblock In: Application and Theory of Petri Nets and Concurrency (PETRI NETS
  2023), volume 13929 of \emph{LNCS}. Springer, 2023 pp. 369--392.
\newblock \doi{10.1007/978-3-031-33620-1_20}.

\bibitem{pitpn2maude}
Arias J, Bae K, Olarte C, {\"{O}}lveczky PC, Petrucci L.
\newblock {PITPN2Maude}, 2024.
\newblock
  \urlprefix\url{https://depot.lipn.univ-paris13.fr/real-time-maude/pitpn2maude-journal}.

\bibitem{model-checking}
Clarke EM, Grumberg O, Peled DA.
\newblock Model Checking.
\newblock {MIT} Press, 2001. ISBN: 9780262032704.

\bibitem{rocha-rewsmtjlamp-2017}
Rocha C, Meseguer J, Mu{\~{n}}oz CA.
\newblock Rewriting modulo {SMT} and open system analysis.
\newblock \emph{J. Log. Algebraic Methods Program.}, 2017.
\newblock \textbf{86}(1):269--297.
\newblock \doi{10.1016/j.jlamp.2016.10.001}.

\bibitem{DBLP:conf/apn/LeclercqLR23}
Leclercq L, Lime D, Roux OH.
\newblock A State Class Based Controller Synthesis Approach for Time {P}etri
  Nets.
\newblock In: Application and Theory of Petri Nets and Concurrency ({PETRI}
  {NETS} 2023), volume 13929 of \emph{LNCS}. Springer, 2023 pp. 393--414.
\newblock \doi{10.1007/978-3-031-33620-1_21}.

\bibitem{bae2019symbolic}
Bae K, Rocha C.
\newblock Symbolic state space reduction with guarded terms for rewriting
  modulo {SMT}.
\newblock \emph{Sci. Comput. Program.}, 2019.
\newblock \textbf{178}:20--42.
\newblock \doi{10.1016/j.scico.2019.03.006}.

\bibitem{petri-nets-in-maude}
Stehr M, Meseguer J, {\"{O}}lveczky PC.
\newblock Rewriting Logic as a Unifying Framework for {P}etri Nets.
\newblock In: Unifying Petri Nets, Advances in Petri Nets, volume 2128 of
  \emph{Lecture Notes in Computer Science}. Springer, 2001 pp. 250--303.
\newblock \doi{10.1007/3-540-45541-8_9}.

\bibitem{Wang1998}
Wang J.
\newblock Time {P}etri Nets.
\newblock In: Timed Petri Nets: Theory and Application, pp. 63--123. Springer,
  1998.
\newblock \doi{10.1007/978-1-4615-5537-7_4}.

\bibitem{DBLP:journals/jlap/Meseguer20}
Meseguer J.
\newblock Generalized rewrite theories, coherence completion, and symbolic
  methods.
\newblock \emph{J. Log. Algebraic Methods Program.}, 2020.
\newblock \textbf{110}.
\newblock \doi{10.1016/j.jlamp.2019.100483}.

\bibitem{bae2013abstract}
Bae K, Escobar S, Meseguer J.
\newblock Abstract Logical Model Checking of Infinite-State Systems Using
  Narrowing.
\newblock In: Rewriting Techniques and Applications ({RTA} 2013), volume~21 of
  \emph{LIPIcs}. Schloss Dagstuhl - Leibniz-Zentrum f{\"{u}}r Informatik, 2013
  pp. 81--96.
\newblock \doi{10.4230/LIPIcs.RTA.2013.81}.

\bibitem{dantzig2006linear}
Dantzig G, Thapa M.
\newblock Linear Programming 1: Introduction.
\newblock Springer Series in Operations Research and Financial Engineering.
  Springer New York, 2006.
\newblock ISBN 9780387226330.

\bibitem{DBLP:journals/jucs/TraonouezLR09}
Traonouez L, Lime D, Roux OH.
\newblock Parametric Model-Checking of Stopwatch {P}etri Nets.
\newblock \emph{J. Univers. Comput. Sci.}, 2009.
\newblock \textbf{15}(17):3273--3304.
\newblock \doi{10.3217/jucs-015-17-3273}.

\bibitem{maude-manual}
Clavel M, Dur{\'a}n F, Eker S, Escobar S, Lincoln P, Mart\'{\i}-Oliet N,
  Meseguer J, Rubio R, Talcott C.
\newblock Maude Manual (Version 3.3.1).
\newblock SRI International, 2023.
\newblock Available at \url{http://maude.cs.illinois.edu}.

\bibitem{DBLP:conf/apn/RouxL04}
Roux OH, Lime D.
\newblock Time {P}etri Nets with Inhibitor Hyperarcs. {F}ormal Semantics and
  State Space Computation.
\newblock In: Cortadella J, Reisig W (eds.), Applications and Theory of Petri
  Nets 2004, 25th International Conference, {ICATPN} 2004, volume 3099 of
  \emph{LNCS}. Springer, 2004 pp. 371--390.
\newblock \doi{10.1007/978-3-540-27793-4_21}.

\bibitem{meseguer-montanari}
Meseguer J, Montanari U.
\newblock Petri Nets Are Monoids.
\newblock \emph{Information and Computation}, 1990.
\newblock \textbf{88}(2):105--155.
\newblock \doi{10.1016/0890-5401(90)90013-8}.

\bibitem{OlvMesTCS}
{\"{O}}lveczky PC, Meseguer J.
\newblock Specification of real-time and hybrid systems in rewriting logic.
\newblock \emph{Theor. Comput. Sci.}, 2002.
\newblock \textbf{285}(2):359--405.
\newblock \doi{10.1016/S0304-3975(01)00363-2}.

\bibitem{capra22}
Capra L.
\newblock Rewriting Logic and {P}etri Nets: {A} Natural Model for
  Reconfigurable Distributed Systems.
\newblock In: Distributed Computing and Intelligent Technology ({ICDCIT} 2022),
  volume 13145 of \emph{LNCS}. Springer, 2022 pp. 140--156.
\newblock \doi{10.1007/978-3-030-94876-4_9}.

\bibitem{capra22b}
Capra L.
\newblock Canonization of Reconfigurable {PT} Nets in {M}aude.
\newblock In: Reachability Problems ({RP} 2022), volume 13608 of \emph{LNCS}.
  Springer, 2022 pp. 160--177.
\newblock \doi{10.1007/978-3-031-19135-0_11}.

\bibitem{padberg}
Padberg J, Schulz A.
\newblock Model Checking Reconfigurable {P}etri Nets with {M}aude.
\newblock In: 9th International Conference on Graph Transformation ({ICGT}
  2016), volume 9761 of \emph{LNCS}. Springer, 2016 pp. 54--70.
\newblock \doi{10.1007/978-3-319-40530-8_4}.

\bibitem{barbosa}
Barbosa PES, Barros JP, Ramalho F, Gomes L, Figueiredo J, Moutinho F, Costa A,
  Aranha A.
\newblock Sys{V}eritas: {A} Framework for Verifying {IOPT} Nets and Execution
  Semantics within Embedded Systems Design.
\newblock In: Technological Innovation for Sustainability - Second {IFIP} {WG}
  5.5/SOCOLNET Doctoral Conference on Computing, Electrical and Industrial
  Systems (DoCEIS 2011), volume 349 of \emph{{IFIP} Advances in Information and
  Communication Technology}. Springer, 2011 pp. 256--265.
\newblock \doi{10.1007/978-3-642-19170-1_28}.

\bibitem{DBLP:conf/ftscs/0001LS23}
Escobar S, L{\'{o}}pez{-}Rueda R, Sapi{\~{n}}a J.
\newblock Symbolic Analysis by Using Folding Narrowing with Irreducibility and
  {SMT} Constraints.
\newblock In: 9th {ACM} {SIGPLAN} International Workshop on Formal Techniques
  for Safety-Critical Systems ({FTSCS} 2023). {ACM}, 2023 pp. 14--25.
\newblock \doi{10.1145/3623503.3623537}.

\bibitem{lee2022bounded}
Lee J, Kim S, Bae K.
\newblock Bounded Model Checking of {PLC} {ST} Programs using Rewriting Modulo
  {SMT}.
\newblock In: 8th ACM SIGPLAN International Workshop on Formal Techniques for
  Safety-Critical Systems (FTSCS 2022). ACM, 2022 pp. 56--67.
  doi:10.1145/3563822.3568016.

\bibitem{lee2022extension}
Lee J, Bae K, {\"O}lveczky PC.
\newblock An extension of {HybridSynchAADL} and its application to
  collaborating autonomous {UAVs}.
\newblock In: Leveraging Applications of Formal Methods, Verification and
  Validation. Adaptation and Learning (ISoLA 2022), volume 13703 of
  \emph{LNCS}. Springer, 2022 pp. 47--64.
doi:10.1007/978-3-031-19759-8\_4


\bibitem{hsaadl-sttt}
Lee J, Bae K, {\"O}lveczky PC, Kim S, Kang M.
\newblock Modeling and formal analysis of virtually synchronous cyber-physical
  systems in {AADL}.
\newblock \emph{International Journal on Software Tools for Technology
  Transfer}, 2022.
\newblock \textbf{24}(6):911--948.
doi:10.1007/S10009-022-00665-Z.


\bibitem{lee2021hybrid}
Lee J, Kim S, Bae K, {\"{O}}lveczky PC.
\newblock {HybridSynchAADL}: Modeling and Formal Analysis of Virtually
  Synchronous {CPSs} in {AADL}.
\newblock In: Computer Aided Verification ({CAV} 2021), volume 12759 of
  \emph{LNCS}. Springer, 2021 pp. 491--504.
\newblock \doi{10.1007/978-3-030-81685-8_23}.

\bibitem{nigam2022automating}
Nigam V, Talcott CL.
\newblock Automating Safety Proofs About Cyber-Physical Systems Using Rewriting
  Modulo {SMT}.
\newblock In: Rewriting Logic and Its Applications (WRLA 2022), volume 13252 of
  \emph{LNCS}. Springer, 2022 pp. 212--229.
\newblock \doi{10.1007/978-3-031-12441-9_11}.

\bibitem{bae2017guarded}
Bae K, Rocha C.
\newblock Guarded terms for rewriting modulo {SMT}.
\newblock In: International Conference on Formal Aspects of Component Software
  (FACS 2017). Springer, 2017 pp. 78--97.  doi:10.1007/978-3-319-68034-7\_5.

\bibitem{DBLP:books/daglib/0023756}
Jensen K, Kristensen LM.
\newblock Coloured {P}etri Nets -- Modelling and Validation of Concurrent
  Systems.
\newblock Springer, 2009.
\newblock \doi{10.1007/b95112}.

\bibitem{pn23-techrep}
Arias J, Bae K, Olarte C, {\"{O}}lveczky PC, Petrucci L, R{\o}mming F.
\newblock Symbolic Analysis and Parameter Synthesis for Time {P}etri Nets Using
  {M}aude and {SMT} Solving, 2023.
\newblock \doi{10.48550/ARXIV.2303.08929}.

\bibitem{AD94}
Alur R, Dill DL.
\newblock A Theory of Timed Automata.
\newblock \emph{Theor. Comput. Sci.}, 1994.
\newblock \textbf{126}(2):183--235.
\newblock \doi{10.1016/0304-3975(94)90010-8}.

\bibitem{DLLMP15}
David A, Larsen KG, Legay A, Mikucionis M, Poulsen DB.
\newblock Uppaal {SMC} tutorial.
\newblock \emph{Int. J. Softw. Tools Technol. Transf.}, 2015.
\newblock \textbf{17}(4):397--415.
\newblock \doi{10.1007/s10009-014-0361-y}.

\bibitem{Larsen2015}
Larsen KG, Mikucionis M, Taankvist JH.
\newblock Safe and Optimal Adaptive Cruise Control.
\newblock In: Correct System Design -- Symposium in Honor of
  Ernst-R{\"{u}}diger Olderog on the Occasion of His 60th Birthday, volume 9360
  of \emph{LNCS}. Springer, 2015 pp. 260--277.
\newblock \doi{10.1007/978-3-319-23506-6_17}.

\bibitem{lpw:tacas98}
Lindahl M, Pettersson P, Yi W.
\newblock Formal design and analysis of a gear controller.
\newblock \emph{Int. J. Softw. Tools Technol. Transf.}, 2001.
\newblock \textbf{3}(3):353--368.
\newblock \doi{10.1007/s100090100048}.

\bibitem{ARINC22}
Choi Et, Kim Th, Jun YK, Lee S, Han M.
\newblock On-the-Fly Repairing of Atomicity Violations in {ARINC} 653 Software.
\newblock \emph{Applied Sciences}, 2022.
\newblock \textbf{12}(4).
\newblock \doi{10.3390/app12042014}.

\bibitem{7273419}
Wang Y, Wang R, Guan Y, Li X, Wei H, Zhang J.
\newblock Formal Modeling and Verification of the Safety Critical Fire-Fighting
  Control System.
\newblock In: 39th Annual Computer Software and Applications Conference,
  {COMPSAC} Workshops 2015. {IEEE} Computer Society, 2015 pp. 536--541.
\newblock \doi{10.1109/COMPSAC.2015.181}.

\bibitem{AHV93}
Alur R, Henzinger TA, Vardi MY.
\newblock Parametric real-time reasoning.
\newblock In: Kosaraju SR, Johnson DS, Aggarwal A (eds.), Proceedings of the
  Twenty-Fifth Annual {ACM} Symposium on Theory of Computing, May 16-18, 1993,
  San Diego, CA, {USA}. {ACM}, 1993 pp. 592--601.
\newblock \doi{10.1145/167088.167242}.

\bibitem{Andre21}
Andr{\'{e}} {\'{E}}.
\newblock {IMITATOR} 3: Synthesis of Timing Parameters Beyond Decidability.
\newblock In: Computer Aided Verification ({CAV} 2021), volume 12759 of
  \emph{LNCS}. Springer, 2021 pp. 552--565.
\newblock \doi{10.1007/978-3-030-81685-8_26}.

\bibitem{HRSV02}
Hune T, Romijn J, Stoelinga M, Vaandrager FW.
\newblock Linear parametric model checking of timed automata.
\newblock \emph{J. Log. Algebraic Methods Program.}, 2002.
\newblock \textbf{52-53}:183--220.
\newblock \doi{10.1016/S1567-8326(02)00037-1}.

\bibitem{KP12}
Knapik M, Penczek W.
\newblock Bounded Model Checking for Parametric Timed Automata.
\newblock \emph{Trans. Petri Nets Other Model. Concurr.}, 2012.
\newblock \textbf{5}:141--159.
\newblock \doi{10.1007/978-3-642-29072-5_6}.

\bibitem{JLR15}
Jovanovic A, Lime D, Roux OH.
\newblock Integer Parameter Synthesis for Real-Time Systems.
\newblock \emph{{IEEE} Trans. Software Eng.}, 2015.
\newblock \textbf{41}(5):445--461.
\newblock \doi{10.1109/TSE.2014.2357445}.

\bibitem{CEFX09}
Chevallier R, Encrenaz{-}Tiph{\`{e}}ne E, Fribourg L, Xu W.
\newblock Timed verification of the generic architecture of a memory circuit
  using parametric timed automata.
\newblock \emph{Formal Methods Syst. Des.}, 2009.
\newblock \textbf{34}(1):59--81.
\newblock \doi{10.1007/s10703-008-0061-x}.

\bibitem{FLMS12}
Fribourg L, Soulat R, Lesens D, Moro P.
\newblock Robustness Analysis for Scheduling Problems Using the Inverse Method.
\newblock In: 19th International Symposium on Temporal Representation and
  Reasoning ({TIME} 2012). {IEEE} Computer Society, 2012 pp. 73--80.
\newblock \doi{10.1109/TIME.2012.10}.

\bibitem{manet-journ}
Liu S, {\"{O}}lveczky PC, Meseguer J.
\newblock Modeling and analyzing mobile ad hoc networks in {R}eal-{T}ime
  {M}aude.
\newblock \emph{J. Log. Algebraic Methods Program.}, 2016.
\newblock \textbf{85}(1):34--66.
\newblock \doi{10.1016/j.jlamp.2015.05.002}.

\bibitem{cloud-chapter}
Bobba R, Grov J, Gupta I, Liu S, Meseguer J, {\"O}lveczky PC, Skeirik S.
\newblock Survivability: Design, Formal Modeling, and Validation of Cloud
  Storage Systems Using {Maude}.
\newblock In: Assured Cloud Computing, chapter~2, pp. 10--48. John Wiley \&
  Sons, 2018. \texttt{https://doi.org/10.1002/9781119428497.ch2.}

\bibitem{PTA-benchmarks}
Andr{\'{e}} {\'{E}}, Marinho D, van~de Pol J.
\newblock A Benchmarks Library for Extended Parametric Timed Automata.
\newblock In: Tests and Proofs ({TAP} 2021), volume 12740 of \emph{LNCS}.
  Springer, 2021 pp. 39--50.
\newblock \doi{10.1007/978-3-030-79379-1_3}.

\bibitem{lepri-journ}
Lepri D, {\'{A}}brah{\'{a}}m E, {\"{O}}lveczky PC.
\newblock Sound and complete timed {CTL} model checking of timed {K}ripke
  structures and real-time rewrite theories.
\newblock \emph{Sci. Comput. Program.}, 2015.
\newblock \textbf{99}:128--192.
\newblock \doi{10.1016/j.scico.2014.06.006}.

\bibitem{meseguer-membership-1998}
Meseguer J.
\newblock Membership algebra as a logical framework for equational
  specification.
\newblock In: Recent Trends in Algebraic Development Techniques (WADT'97),
  volume 1376 of \emph{LNCS}. Springer, 1997 pp. 18--61.
\newblock \doi{10.1007/3-540-64299-4_26}.

\bibitem{DBLP:conf/rta/EscobarM07}
Escobar S, Meseguer J.
\newblock Symbolic Model Checking of Infinite-State Systems Using Narrowing.
\newblock In: Baader F (ed.), Term Rewriting and Applications ({RTA} 2007),
  volume 4533 of \emph{LNCS}. Springer, 2007 pp. 153--168.
\newblock \doi{10.1007/978-3-540-73449-9_13}.

\bibitem{DBLP:conf/wrla/BaeM14}
Bae K, Meseguer J.
\newblock Infinite-State Model Checking of {LTLR} Formulas Using Narrowing.
\newblock In: Escobar S (ed.), Rewriting Logic and Its Applications ({WRLA}
  2014), volume 8663 of \emph{LNCS}. Springer, 2014 pp. 113--129.
\newblock \doi{10.1007/978-3-319-12904-4_6}.

\bibitem{Andre19STTT}
Andr{\'{e}} {\'{E}}.
\newblock What's decidable about parametric timed automata?
\newblock \emph{Int. J. Softw. Tools Technol. Transf.}, 2019.
\newblock \textbf{21}(2):203--219.
\newblock \doi{10.1007/s10009-017-0467-0}.

\bibitem{NPP18}
Nguyen HG, Petrucci L, van~de Pol J.
\newblock Layered and Collecting {NDFS} with Subsumption for Parametric Timed
  Automata.
\newblock In: 23rd International Conference on Engineering of Complex Computer
  Systems ({ICECCS} 2018). {IEEE} Computer Society, 2018 pp. 1--9.
\newblock \doi{10.1109/ICECCS2018.2018.00009}.

\bibitem{AFS13atva}
Andr{\'{e}} {\'{E}}, Fribourg L, Soulat R.
\newblock Merge and Conquer: State Merging in Parametric Timed Automata.
\newblock In: Automated Technology for Verification and Analysis ({ATVA} 2013),
  volume 8172 of \emph{LNCS}. Springer, 2013 pp. 381--396.
\newblock \doi{10.1007/978-3-319-02444-8_27}.

\bibitem{BBBC16}
Bezdek P, Benes N, Barnat J, Cern{\'{a}} I.
\newblock {LTL} Parameter Synthesis of Parametric Timed Automata.
\newblock In: Software Engineering and Formal Methods ({SEFM} 2016), volume
  9763 of \emph{LNCS}. Springer, 2016 pp. 172--187.
\newblock \doi{10.1007/978-3-319-41591-8_12}.

\bibitem{DBLP:journals/ijfcs/AndreCFE09}
Andr{\'{e}} {\'{E}}, Chatain T, Fribourg L, Encrenaz E.
\newblock An Inverse Method for Parametric Timed Automata.
\newblock \emph{Int. J. Found. Comput. Sci.}, 2009.
\newblock \textbf{20}(5):819--836.
\newblock \doi{10.1142/S0129054109006905}.

\bibitem{DBLP:journals/tse/BucciFSV04}
Bucci G, Fedeli A, Sassoli L, Vicario E.
\newblock Timed State Space Analysis of Real-Time Preemptive Systems.
\newblock \emph{{IEEE} Trans. Software Eng.}, 2004.
\newblock \textbf{30}(2):97--111.
\newblock \doi{10.1109/TSE.2004.1265815}.
\end{thebibliography}
\end{document}